\def\MODE{1}
\def\@IEEElegacywarn#1#2{}
\newtheorem{thm}{Theorem}
\newtheorem{lem}[thm]{Lemma}
\newtheorem{defn}[thm]{Definition}
\newtheorem{cor}[thm]{Corollary}
\renewenvironment{proof}{\noindent{\bf Proof.}}{ \hfill ~\qed}
\newenvironment{proofe}[1]{\noindent{\bf #1}}{ \hfill ~\qed}
\def\qed{\rule[0pt]{5pt}{5pt}\par\medskip}
\newcommand{\T}{\rule{0pt}{2.6ex}}
\newcommand{\hlinet}{\hline\T}
\def\tp{\mathsf{T}}
\newcommand{\stsp}[4]{\left[\begin{array}{c|c}#1 & #2 \\ \hlinet #3 & #4\end{array}\right]}
\newcommand{\bmat}[1]{\begin{bmatrix}#1\end{bmatrix}}
\DeclareMathOperator{\are}{\mathrm{ARE}}
\DeclareMathOperator{\vecc}{\mathrm{vec}}
\DeclareMathOperator{\Lower}{\mathrm{lower}}
\DeclareMathOperator{\tr}{\mathrm{trace}}
\DeclareMathOperator*{\argmin}{\mathrm{arg\,min}}
\newcommand{\defeq}{\colonequals}
\newcommand{\R}{\mathbb{R}} 
\newcommand{\Z}{\mathbb{Z}} 
\newcommand{\C}{\mathbb{C}} 
\newcommand{\Rp}{\mathcal{R}_p} 
\newcommand{\Htwo}{\mathcal{H}_2}
\newcommand{\Ltwo}{\mathcal{L}_2}
\newcommand{\RHinf}{\mathcal{RH}_\infty} 
\newcommand{\RHtwo}{\mathcal{RH}_2} 
\newcommand{\RLtwo}{\mathcal{RL}_2} 
\renewcommand{\i}{\mathrm{i}}
\newcommand{\cl}{{c\ell}}
\let\bl\bigl
\let\bbbl\biggl
\let\bbbbl\Biggl
\let\br\bigr
\let\bbbr\biggr
\let\bbbbr\Biggr
\newcommand{\norm}[1]{\lVert{#1}\rVert}
\newcommand{\normm}[1]{\bl\lVert{#1}\br\rVert}
\newcommand{\normmmmm}[1]{\bbbbl\lVert{#1}\bbbbr\rVert}
\newcommand{\vertt}{\big\vert}
\newcommand{\snegskip}{\vspace{-1mm}}
\newcommand{\negskip}{\vspace{-2mm}}
\def\note#1{}
\begin{document}
\title{Optimal Control of Two-Player Systems\\ with Output Feedback}

\author{Laurent~Lessard\if\MODE2~and\else\and\fi~Sanjay~Lall%
\if\MODE1\else%
\thanks{L.~Lessard is with the Department of Mechanical Engineering at the University of California, Berkeley, CA 94720, USA. \texttt{lessard@berkeley.edu}}%
\thanks{S.~Lall is with the Department of Electrical Engineering, and
the Department of Aeronautics and Astronautics at Stanford
University, Stanford, CA 94305, USA. \texttt{lall@stanford.edu}}%
\thanks{This work was partly done while the first author was an LCCC postdoc at Lund University in Lund, Sweden.}%
\thanks{The second author is partially supported by the U.S. Air Force Office of Scientific Research (AFOSR) under grant number MURI FA9550-10-1-0573.}\fi}

\note{To Appear, IEEE Transactions on Automatic Control, 2015}

\ifCLASSOPTIONpeerreview
\markboth{IEEE Transactions on Automatic Control}{}
\else
\markboth{IEEE Transactions on Automatic Control}%
{Lessard \MakeLowercase{\textit{et al.}}: Optimal 
  Control of Two-Player Systems with Output Feedback}
\fi

\maketitle


\begin{abstract}
In this article, we consider a fundamental decentralized optimal control problem, which we call the two-player problem. Two subsystems are interconnected in a nested information pattern, and output feedback controllers must be designed for each subsystem. Several special cases of this architecture have previously been solved, such as the state-feedback case or the case where the dynamics of both systems are decoupled. In this paper, we present a detailed solution to the general case. The structure of the optimal decentralized controller is reminiscent of that of the optimal centralized controller; each player must estimate the state of the system given their available information and apply static control policies to these estimates to compute the optimal controller. The previously solved cases benefit from a separation between estimation and control that allows the associated gains to be computed separately.  This feature is not present in general, and some of the gains must be solved for simultaneously. We show that computing the required coupled estimation and control gains amounts to solving a small system of linear equations.
\end{abstract}

\if\MODE2
\begin{IEEEkeywords}
Decentralized control, cooperative control, optimal control, linear systems
\end{IEEEkeywords}
\fi

\IEEEpeerreviewmaketitle

\section{Introduction}

\label{sec:intro}

\IEEEPARstart{M}{any} large-scale systems such as the internet, power
grids, or teams of autonomous vehicles, can be viewed as a network of
interconnected subsystems. A common feature of these applications is
that subsystems must make control decisions with limited
information. The hope is that despite the decentralized nature of the
system, global performance criteria can be optimized. 
In this paper, we consider a particular class of 
decentralized control problems, illustrated in Figure~\ref{fig:dec},
and develop the optimal linear controller in this framework.\snegskip

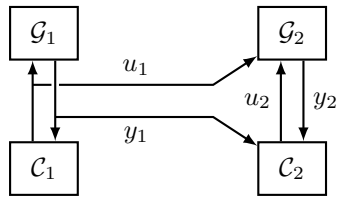
\begin{figure}[ht]
\centering
\tikzstyle{bigblock}=[draw,rectangle,minimum width=0.9cm,minimum height=0.7cm]
\def\sepdist{0.2}
\def\hordist{3.3}
\def\sep{0.15}
\def\offsep{0.75}
\def\xoff{2.4}
\def\noisedist{0.4}
\def\kl{0.1pt}
\def\koff{0.05}

\begin{tikzpicture}[thick,auto,>=latex,node distance=1.8cm,black]
\node [bigblock](P1){$\mathcal{G}_1$};
\node [bigblock,below of=P1](K1){$\mathcal{C}_1$};
\draw [->] (P1.south)+(\sep,0) node(tt1){} -- node{} (tt1 |- K1.north);
\draw [->] (K1.north)+(-\sep,0) node(tt2){} -- node{} (tt2 |- P1.south);
\path (P1) +(\hordist,0) node [bigblock](P2){$\mathcal{G}_2$};
\node [bigblock,below of=P2](K2){$\mathcal{C}_2$};
\draw [->] (P2.south)+(\sep,0) node(tt3){} -- node{$y_2$} (tt3 |- K2.north);
\draw [->] (K2.north)+(-\sep,0) node(tt4){} -- node{$u_2$} (tt4 |- P2.south);
\draw [->] (tt1)+(0,-\offsep) -- +(\koff,-\offsep) -- node[swap]{$y_1$} +(\xoff-2*\sep,-\offsep) -- (K2);
\draw [-] (tt2)+(0,\offsep) -- +(2*\sep-\koff,\offsep);
\draw [->] (tt2)+(2*\sep+\koff,\offsep) -- node{$u_1$} +(\xoff,\offsep) -- (P2);
\end{tikzpicture}
\caption{Decentralized interconnection\label{fig:dec}}
\end{figure}

Figure~\ref{fig:dec} shows plants $\mathcal{G}_1$ and
$\mathcal{G}_2$, with associated controllers $\mathcal{C}_1$
and~$\mathcal{C}_2$.  Controller~$\mathcal{C}_1$ receives measurements
only from $\mathcal{G}_1$, whereas $\mathcal{C}_2$ receives
measurements from both $\mathcal{G}_1$ and $\mathcal{G}_2$.  The
actions of $\mathcal{C}_1$ affect both $\mathcal{G}_1$ and
$\mathcal{G}_2$, whereas the actions of $\mathcal{C}_2$ only affect
$\mathcal{G}_2$.
In other words, information may only flow from
left to right. 
{\color{black}
We assume explicitly that the signal $u_1$ entering $\mathcal{G}_1$ is the same
as the signal $u_1$ entering $\mathcal{G}_2$. A different modeling assumption would be to assume that these signals could be affected by separate noise. Similarly, we assume that the measurement $y_1$ received by $\mathcal{C}_{1}$ is the same as that received by $\mathcal{C}_2$.
We give a precise definition of our chosen notion of stability in Section~\ref{ssec:probstatement} and we further discuss our choice of stabilization framework in Section~\ref{sec:stabilization}.}

The feedback interconnection of Figure~\ref{fig:dec} may be represented using the linear fractional transformation shown in Figure~\ref{fig:4block}.

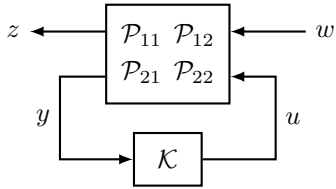
\begin{figure}[ht]
\centering
\tikzstyle{block}=[draw,rectangle,inner sep=2mm,minimum width=0.9cm,minimum height=0.7cm]
\begin{tikzpicture}[thick,auto,>=latex,node distance=1.4cm]
\node [block](P){$\begin{matrix}
    \mathcal{P}_{11}\rule[-1.3ex]{-1.3ex}{0pt}& \mathcal{P}_{12}\\
    \mathcal{P}_{21}\rule{-1.3ex}{0pt} & \mathcal{P}_{22}\end{matrix}$};
\node [block,below of=P](K){$\mathcal{K}$};
\draw [<-] (P.east)+(0,-0.3) -- +(0.6,-0.3) |- node[pos=0.25]{$u$} (K);
\draw [<-] (P.east)+(0,0.3) -- +(1,0.3) node [anchor=west]{$w$};
\draw [->] (P.west)+(0,-0.3) -- +(-0.6,-0.3) |- node[swap,pos=0.25]{$y$} (K);
\draw [->] (P.west)+(0,0.3) -- +(-1,0.3) node [anchor=east]{$z$};
\end{tikzpicture}
\caption{General four-block plant with controller in feedback \label{fig:4block}}
\end{figure}

The information flow in this problem leads to a sparsity structure for both~$\mathcal{P}$ and~$\mathcal{K}$ in Figure~\ref{fig:4block}. Specifically, we find the following
block-lower-triangular sparsity pattern
\begin{equation}\label{Ksparsity}
\mathcal{P}_{22} = \bmat{\mathcal{G}_{11} & 0 \\ \mathcal{G}_{21} & \mathcal{G}_{22}}
\quad\text{and}\quad
\mathcal{K} = \bmat{\mathcal{K}_{11} & 0 \\ \mathcal{K}_{21} & \mathcal{K}_{22}}
\end{equation}
while $\mathcal{P}_{11}$, $\mathcal{P}_{12}$, $\mathcal{P}_{21}$ are full in general.
We assume $\mathcal{P}$ and~$\mathcal{K}$ are finite-dimensional
continuous-time linear time-invariant systems.  The goal is to find an~$\mathcal{H}_2$-optimal
controller $\mathcal{K}$ subject to the constraint~\eqref{Ksparsity}.  
We paraphrase our main result, found in Theorem~\ref{thm:main}.
Consider the state-space dynamics
\[
\begin{alignedat}{2}
\bmat{\dot{x}_1 \\ \dot{x}_2}
&= \bmat{A_{11} & 0 \\ A_{21} & A_{22}} &&\bmat{x_1 \\ x_2} 
 + \bmat{B_{11} & 0 \\ B_{21} & B_{22}} \bmat{u_1 \\ u_2} + w \\
\bmat{y_1 \\ y_2}
&= \bmat{C_{11} & 0 \\ C_{21} & C_{22}} &&\bmat{x_1 \\ x_2} + v
\end{alignedat}
\]
where $w$ and $v$ are Gaussian white noise. The objective is to minimize the
quadratic cost of the standard LQG problem, subject to the constraint that
\[
\begin{aligned}
& \text{Player $1$ measures $y_1$ and chooses $u_1$} \\
& \text{Player $2$ measures $y_1,y_2$ and chooses $u_2$} 
\end{aligned}
\]
Here we borrow the terminology from game theory, and envisage
separate \emph{players} performing the actions $u_1$ and $u_2$.  We will
show that the optimal controller has the form
\[
 u  = K x_{|y_1,u_\zeta} + 
  \bmat{0 & 0 \\ H & J} 
  \bl(
  x_{|y,u}
  - x_{|y_1,u_\zeta}
  \br)
\]
where $x_{|y_1,u_\zeta}$ denotes the optimal estimate of $x$ using the
information available to Player~1, and $ x_{|y,u}$ is the optimal
estimate of $x$ using the information available Player~2. The matrices
$K$, $J$, and $H$ are determined from the solutions to Riccati and
Sylvester equations. The precise meaning of \emph{optimal estimate}
and $u_\zeta$ will be explained in Section~\ref{sec:struct}.  Our
results therefore provide a type of separation principle for such
problems. Questions of separation are central to decentralized
control, and very little is known in the general case. Our results
therefore also shed some light on this important issue.  Even though
the controller is expressed in terms of optimal estimators, these do
not all evolve according to a standard Kalman filter (also known as
the Kalman--Bucy filter), since Player~1, for example, does not know
$u_2$.

The main contribution of this paper is an explicit state-space formula
for the optimal controller, which was not previously known.  The
realization we find is generically minimal, and computing it is of
comparable computational complexity to computing the optimal
centralized controller. The solution also gives an intuitive
interpretation for the states of the optimal controller. 

The paper is organized as follows. In the remainder of the
introduction, we give a brief history of decentralized control and the
two-player problem in particular. Then, we cover some background
mathematics and give a formal statement of the two-player optimal
control problem. In Section~\ref{sec:stabilization}, we characterize
all structured stabilizing controllers, and show that the two-player
control problem can be expressed as an equivalent structured
model-matching problem that is convex. In Section~\ref{sec:main}, we
state our main result which contains the optimal controller formulae.
We follow up with a discussion of the structure and interpretation of
this controller in Section~\ref{sec:struct}. The subsequent
Section~\ref{sec:main_proof} gives a detailed proof of the main
result.  Finally, we conclude in Section~\ref{sec:conclusion}.


\subsection{Prior work}\label{ssec:prior work}

If we consider the problem of Section~\ref{sec:intro} but remove the
structural constraint~\eqref{Ksparsity}, the problem becomes the
well-studied classical $\mathcal{H}_2$ synthesis, solved for example
in~\cite{zdg}. The optimal controller is then linear and has
as many states as the plant.

The presence of structural constraints greatly complicates the
problem, and the resulting decentralized problem has been outstanding
since the seminal paper by Witsenhausen~\cite{witsenhausen} in 1968.
Witsenhausen posed a related problem for which a nonlinear controller
strictly outperforms all linear policies. Not all structural
constraints lead to nonlinear optimal controllers, however. For a
broad class of decentralized control problems there exists a linear
optimal policy~\cite{hochu}. Of recent interest have been classes of
problems for which finding the optimal linear controller amounts to
solving a convex optimization
problem~\cite{rotkowitz06,voulgaris00,voulgaris_stabilization}. The
two-player problem considered in the present work is one such case.

Despite the benefit of convexity, the search space is
infinite-dimensional since we must optimize over transfer functions.
Several numerical and analytical approaches for addressing
decentralized optimal control exist,
including~\cite{rantzer06,scherer02,zelazo09,voulgaris_stabilization}.
One particularly relevant numerical approach is to use vectorization,
which converts the decentralized problem into an equivalent
centralized problem~\cite{rotkowitz_2006a}. This conversion process
results in a dramatic growth in state dimension, and so the method is
 computationally intensive and only feasible for small
problems.  However, it does provide important insight into the
problem. Namely, it proves that there exists an optimal linear controller for the two-player problem considered herein that is rational.
These results are discussed in Section~\ref{sec:origins_a}.

A drawback of such numerical approaches is that they do not provide an
intuitive explanation for what the controller is doing; there is no
physical interpretation for the states of the controller. In the
centralized case, we have such an interpretation. Specifically, the
controller consists of a Kalman filter whose states are estimates of
the states of the plant together with a static control gain that corresponds
to the solution of an LQR problem.  Recent structural
results~\cite{sachin,mahajan} reveal that for general classes of
delayed-sharing information patterns, the optimal control policy
depends on a summary of the information available and a dynamic
programming approach may be used to compute it. There are general results also in the case where not all information is eventually available to all players~\cite{star}. However, these dynamic programming results do not appear to easily translate to state-space formulae for linear systems.

For linear systems in which each player eventually has access to all information, explicit formulae were found in~\cite{lamperski_delayed_recent}. A simple case with varying communication delay was treated in~\cite{matni}. 
Cases where plant data are only locally available have also been studied~\cite{mishra,farokhi}.

Certain special cases of the two-player problem have been solved
explicitly and clean physical interpretations have been found for the
states of the optimal controller.  Most notably, the state-feedback
case admits an explicit state-space solution using a spectral
factorization approach~\cite{swigart10}. This approach was also used
to address a case with partial output feedback, in which there is
output feedback for one player and state feedback for the
other~\cite{swigart_partial}.  The work of~\cite{shah10} also provided
a solution to the state-feedback case using the M\"obius transform
associated with the underlying poset. Certain special cases were also
solved in~\cite{jonghan}, which gave a method for splitting
decentralized optimal control problems into multiple centralized
problems. This splitting approach addresses cases other than
state-feedback, including partial output-feedback, and dynamically
decoupled problems.

In this article, we address the general two-player output-feedback
problem.  Our approach is perhaps closest technically to the work
of~\cite{swigart10} using spectral factorization, but uses the
factorization to split the problem in a different way, allowing a
solution of the general output-feedback problem. We also provide a
meaningful interpretation of the states of the optimal controller.
This paper is a substantially more general version of the invited
paper~\cite{allerton} and the conference
paper~\cite{lessard2012optimal}, where the special case of stable
systems was considered. We also mention the related work~\cite{cdc}
which addresses star-shaped systems in the stable case.


\section{Preliminaries}

We use $\Z_+$ to denote the nonnegative integers.
The imaginary unit is $\i$, and we denote the imaginary
axis by $\i\R$. A square matrix $A\in\R^{n\times n}$ is Hurwitz
if all of its eigenvalues have a strictly negative real part. The set
$\Ltwo(\i\R)$, or simply~$\Ltwo$, is a Hilbert space of Lebesgue
measurable matrix-valued functions $\mathcal{F}: \i\R\to\C^{m\times
  n}$ with the inner product
\[
\langle \mathcal{F},\mathcal{G} \rangle = \frac{1}{2\pi}
\int_{-\infty}^\infty \tr\bl(
 \mathcal{F}^*(\i\omega)\mathcal{G}(\i\omega)\br)
\,\mathrm{d}\omega
\]
such that the inner product induced norm $\|\mathcal{F}\|_2 = \langle
\mathcal{F},\mathcal{F} \rangle^{1/2}$ is bounded. We will sometimes
write $\Ltwo^{m\times n}$ to be explicit about the matrix dimensions.
As is standard, $\Htwo$ is a closed subspace of $\Ltwo$ with matrix
functions analytic in the open right-half plane. $\Htwo^\perp$ is the
orthogonal complement of~$\Htwo$ in~$\Ltwo$. We write~$\Rp$ to denote
the set of proper real rational transfer functions. We also
use~$\mathcal{R}$ as a prefix to modify other sets to indicate the
restriction to real rational functions. So~$\RLtwo$ is the set of
strictly proper rational transfer functions with no poles on the
imaginary axis, and~$\RHtwo$ is the stable subspace of~$\RLtwo$.  The
set of stable proper transfer functions is denoted~$\RHinf$. For the remainder of this paper, whenever we write $\norm{\mathcal{G}}_2$, it will always be the case that~$\mathcal{G} \in \RHtwo$.
Every $\mathcal{G} \in \Rp$ has a state-space realization
	\if\MODE2
	\[
	\mathcal{G} = \stsp{A}{B}{C}{D} = D+C(sI-A)^{-1}B
	\]
	with\negskip
	\[
	\mathcal{G}^* = \stsp{-A^\tp }{C^\tp }{-\!B^\tp }{D^\tp }
	\]
\else
	\[
	\mathcal{G} = \stsp{A}{B}{C}{D} = D+C(sI-A)^{-1}B
	\qquad\text{with}\qquad
	\mathcal{G}^* = \stsp{-A^\tp }{C^\tp }{-\!B^\tp }{D^\tp }
	\]
\fi
where $\mathcal{G}^*$ is the conjugate transpose of $\mathcal{G}$.  If
this realization is chosen to be stabilizable and detectable,
then~$\mathcal{G}\in\RHinf$ if and only if~$A$ is Hurwitz,
and~$\mathcal{G}\in\RHtwo$ if and only if~$A$ is Hurwitz and $D=0$.
For a thorough introduction to these topics, see~\cite{zdg}.

The plant $\mathcal{P} \in \Rp$ maps exogenous inputs~$w$ and actuator
inputs~$u$ to regulated outputs~$z$ and measurements~$y$. We seek a control law $u = \mathcal{K}y$ where
$\mathcal{K}\in\Rp$ so that the closed-loop map has some desirable
properties. The closed-loop map is depicted in
Figure~\ref{fig:4block}, and corresponds to the equations
\begin{equation}
  \label{plantP}
  \begin{aligned}
    \bmat{z \\ y} &= \bmat{ \mathcal{P}_{11} & \mathcal{P}_{12} \\
      \mathcal{P}_{21} & \mathcal{P}_{22} } \bmat{w \\ u}, & 
    u &= \mathcal{K} y
  \end{aligned}
\end{equation}
The closed-loop map from $w$ to $z$ is given by the lower
  linear fractional transform (LFT) defined by 
$ F_\ell(\mathcal{P},\mathcal{K}) \defeq
\mathcal{P}_{11} + \mathcal{P}_{12}
\mathcal{K}\left(I-\mathcal{P}_{22}\mathcal{K}\right)^{-1}
\mathcal{P}_{21} $.  If $\mathcal{K} = F_\ell(\mathcal{J},\mathcal{Q})$
and $\mathcal{J}$ has a proper inverse, the LFT may be inverted
according to $\mathcal{Q} = F_u(\mathcal{J}^{-1},\mathcal{K})$, where
$F_u$ denotes the upper LFT, defined as $
F_u(\mathcal{M},\mathcal{K}) \defeq \mathcal{M}_{22} + \mathcal{M}_{21}
\mathcal{K}\left(I-\mathcal{M}_{11}\mathcal{K}\right)^{-1}
\mathcal{M}_{12} $.

We will express the results of this paper in terms of the solutions to
algebraic Riccati equations (ARE) and recall here the basic facts. If
$D^\tp D > 0$, then the following are equivalent.

\begin{enumerate}[(i)]
\if\MODE2\itemsep=2mm\fi
\item There exists $X\in\R^{n \times n}$ such that\snegskip
\if\MODE2
	\begin{multline*}\label{eqn:are_basic}
	A^\tp X + XA +C^\tp C\\
	-(X B + C^\tp D)(D^\tp D)^{-1}(B^\tp X + D^\tp C)=0
	\end{multline*}
\else
	\begin{equation*}\label{eqn:are_basic}
	A^\tp X + XA +C^\tp C-(X B + C^\tp D)(D^\tp D)^{-1}(B^\tp X + D^\tp C)=0 
	\end{equation*}
\fi
and  $A-B(D^\tp D)^{-1}(B^\tp X + D^\tp C)$ is Hurwitz

\item $(A,B)$ is stabilizable and $\bmat{A -\i\omega I & B \\ C & D}$
  has full
  \if\MODE2\\\fi
  column rank for all $\omega \in \R$.

\end{enumerate}
Under these conditions, there exists a unique $X\in\R^{n\times n}$
satisfying~(i). This $X$ is symmetric and positive semidefinite, and is
called the \emph{stabilizing solution} of the ARE.  As a short form we will write $ (X,K) = \are(A,B,C,D) $
where $K = -(D^\tp D)^{-1}(B^\tp X + D^\tp C)$ is the associated
gain. 

\subsection{Block-triangular matrices}

Due to the triangular structure of our problem, we
also require notation to describe sets of block-lower-triangular
matrices. To this end, suppose $R$ is a commutative ring, $m,n\in\Z_+^2$ and
$m_i, n_i \geq 0$.  We define $\Lower(R,m,n)$ to be the set of
block-lower-triangular matrices with elements in $R$ partitioned
according to the index sets $m$ and $n$. That is,
$X\in \Lower(R,m,n)$ if and only if
\[
X = \bmat{X_{11} & 0 \\ X_{21} & X_{22}}
\quad\text{where}\quad
X_{ij} \in R^{m_i \times n_j}
\]
We sometimes omit the indices and simply write $\Lower(R)$.  We
also define the matrices~$E_1 = \bmat{I & 0}^\tp$ and $E_2 =
\bmat{0 & I}^\tp$, with dimensions to be inferred from context.  For
example, if we write $XE_1$ where $X\in\Lower(\R,m,n)$, then we mean $E_1 \in
\R^{(n_1+n_2)\times n_1}$. When writing $A\in \Lower(R)$, we allow for
the possibility that some of the blocks may be empty. For example, if
$m_1=0$  then we encounter the trivial case where
$\Lower(R,m,n)=R^{m_2\times (n_1+n_2)}$.

There is a correspondence between proper transfer functions
$\mathcal{G}\in \Rp$ and state-space realizations $(A,B,C,D)$. The
next result shows that a specialized form of this correspondence
exists when $\mathcal{G}\in\Lower(\Rp)$.

\begin{lem}
  \label{lem:triangular_realization}
  Suppose $\mathcal{G}\in \Lower(\mathcal{R}_p,k,m)$, and a
  realization for $\mathcal{G}$ is given by $(A,B,C,D)$. Then there
  exists $n\in\Z_+^2$ and an invertible matrix $T$ such that
  \begin{align*}
    TAT^{-1} &\in \Lower(\R,n,n)  &  TB &\in\Lower(\R,n,m) \\
    CT^{-1}&\in\Lower(\R,k,n) &  D&\in\Lower(\R,k,m)
  \end{align*}
\end{lem}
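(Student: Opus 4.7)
My plan is to construct a similarity transformation adapted to a splitting of the state space $\R^n = \mathcal{X}_1 \oplus \mathcal{X}_2$, in which $\mathcal{X}_2$ captures exactly those state modes that are reachable from the second block of inputs and invisible to the first block of outputs. Using a basis respecting this splitting will automatically put $A$, $B$, and $C$ simultaneously into the required block lower triangular form, while $D = \mathcal{G}(\infty)$ already inherits the structure of $\mathcal{G}$. The only property of $\mathcal{G}$ I will use is the identity $E_1^\tp \mathcal{G} E_2 = 0$, which holds by hypothesis.

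With $B_2 \defeq B E_2$ and $C_1 \defeq E_1^\tp C$, I would take $\mathcal{X}_2$ to be the smallest $A$-invariant subspace of $\R^n$ containing $\Range(B_2)$; by Cayley--Hamilton this equals $\Range \bmat{B_2 & AB_2 & \cdots & A^{n-1}B_2}$. The one nontrivial step is to verify $C_1 \mathcal{X}_2 = \{0\}$, and this falls out of the Laurent expansion
\[
 0 \;=\; E_1^\tp \mathcal{G}(s) E_2 \;=\; E_1^\tp D E_2 \,+\, \sum_{i \geq 0} C_1 A^i B_2\, s^{-i-1}
\]
which, by matching coefficients of powers of $s$, forces $E_1^\tp D E_2 = 0$ together with $C_1 A^i B_2 = 0$ for every $i \geq 0$, and hence $C_1 v = 0$ for every $v \in \mathcal{X}_2$.

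To finish, let $V_2$ be a basis matrix for $\mathcal{X}_2$, choose any complementary basis matrix $V_1$ for $\R^n$, and set $T^{-1} = \bmat{V_1 & V_2}$ with $n_2 = \dim \mathcal{X}_2$ and $n_1 = n - n_2$. Partition $T$ conformably as $T = \bmat{U_1 \\ U_2}$; the identity $TT^{-1} = I$ gives $U_1 V_2 = 0$. Invariance of $\mathcal{X}_2$ under $A$ then yields $U_1 A V_2 = 0$, so $TAT^{-1} \in \Lower(\R,n,n)$; the inclusion $\Range B_2 \subseteq \mathcal{X}_2$ yields $U_1 B_2 = 0$, so $TB \in \Lower(\R,n,m)$; and the Markov parameter identity gives $C_1 V_2 = 0$, so $CT^{-1} \in \Lower(\R,k,n)$. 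The only step I anticipate requiring any real thought is the vanishing of the Markov parameters of $E_1^\tp \mathcal{G} E_2$; everything else is bookkeeping with partitioned matrices.
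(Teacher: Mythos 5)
Your proof is correct and rests on the same idea as the paper's: the paper applies the Kalman canonical decomposition to the zero block $\mathcal{G}_{12}$ and observes that no mode can be both controllable from $B_2$ and observable from $C_1$, which is exactly the fact you establish by hand via the vanishing Markov parameters $C_1A^iB_2=0$. Your version is a self-contained two-block variant (controllable subspace of $(A,B_2)$ contained in the unobservable subspace of $(C_1,A)$) of the paper's four-block Kalman argument, and all the bookkeeping with $U_1V_2=0$ is right, including the degenerate cases $n_1=0$ or $n_2=0$, which the paper explicitly permits.
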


\begin{proof}
  Partition the state-space matrices according to the partition
  imposed by $k$ and~$m$.
  \[
  \mathcal{G} = \bmat{\mathcal{G}_{11} & 0 \\ \mathcal{G}_{21} & \mathcal{G}_{22}} =
  \left[\begin{array}{c|cc}
      A & B_1 & B_2 \\ \hlinet
      C_1 & D_{11} & 0 \\
      C_2 & D_{21} & D_{22}
    \end{array}\right]
  \]
  Note that we immediately have $D\in \Lower(\mathbb{R},k,m)$. However,
  $A$, $B$, and $C$ need not have the desired structure. If $k_1=0$ or $m_2=0$, then $\mathcal{G}_{12}$ is empty, the sparsity pattern is trivial,
  and any realization $(A,B,C,D)$ will do. Suppose $\mathcal{G}_{12}$ is non-empty and let $T$ be the
  matrix that transforms $\mathcal{G}_{12}$ into Kalman canonical form. 
  There are typically four blocks in such a decomposition, but since
  $\mathcal{G}_{12} = 0$, there can be no modes that are both
  controllable and observable. Apply the same $T$-transformation to
  $\mathcal{G}$, and obtain the realization
  \begin{equation}\label{eq:G_real_tri}
  \mathcal{G} = \left[\begin{array}{ccc|cc}
      A_{\bar c o} & 0 & 0 & B_{11} & 0\\
      A_{21} & A_{\bar c \bar o} & 0 & B_{21} & 0 \\
      A_{31} & A_{32} & A_{c\bar o} & B_{31} & B_{c \bar o} \\ \hlinet
      C_{\bar c o} & 0 & 0 & D_{11} & 0\\
      C_{21} & C_{22} & C_{23} & D_{21} & D_{22}
    \end{array}\right]
  \end{equation}
  This realization has the desired sparsity pattern, and we notice
  that there may be many admissible index sets $n$. For example, the
  modes $A_{\bar c \bar o}$ can be included into either the $A_{11}$
  block or the $A_{22}$ block. Note that
  \eqref{eq:G_real_tri} is an admissible realization even if some of
  the diagonal blocks of $A$ are empty.
\end{proof}
Lemma~\ref{lem:triangular_realization} also holds for more general sparsity patterns~\cite{realizability}.


\subsection{Problem statement}\label{ssec:probstatement}

We seek controllers $\mathcal{K}\in \Rp$ such that when they are
connected in feedback to a plant~\hbox{$\mathcal{P}\in\Rp$} as in
Figure~\ref{fig:4block}, the plant is stabilized.  Consider the
interconnection in Figure~\ref{fig:4block_stab}.
We say that $\mathcal{K}$ stabilizes $\mathcal{P}$ if the transfer function
$(w,u_1,u_2)\to (z,y_1,y_2)$ is well-posed and
stable. Well-posedness amounts to
$I-\mathcal{P}_{22}(\infty)\mathcal{K}(\infty)$ being nonsingular.\snegskip

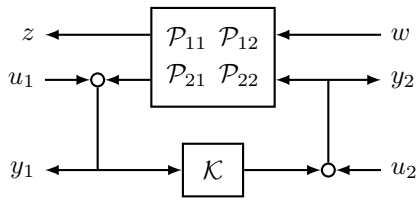
\begin{figure}[ht]
\centering
\tikzstyle{block}=[draw,rectangle,inner sep=2mm,minimum width=0.8cm,minimum height=0.7cm]
\tikzstyle{plus}=[draw,circle,inner sep=0.4ex]
\begin{tikzpicture}[thick,auto,>=latex,node distance=1.5cm]
\node [block](P){$\begin{matrix}\mathcal{P}_{11}\rule[-1.3ex]{-1.3ex}{0pt}& \mathcal{P}_{12}\\
\mathcal{P}_{21}\rule{-1.3ex}{0pt} & \mathcal{P}_{22}\end{matrix}$};
\node [block,below of=P](K){$\mathcal{K}$};
\draw [->] (P.west)+(0,0.3) -- +(-1.4,0.3) node [anchor=east](z){$z$};
\draw [->] (P.west)+(0,-0.3) -- +(-0.6,-0.3) node[plus,anchor=east](p1){};
\draw [->] (P.west)+(-1.4,-0.3) node [anchor=east](u1){$u_1$} -- (p1);
\draw [<->] (K.west) -- (u1.east |- K) node [anchor=east](y1){$y_1$};
\draw (p1) -- (p1 |- K);
\draw [<-] (P.east)+(0,0.3) -- +(1.4,0.3) node [anchor=west](w){$w$};
\path (P.east)+(0.6,-0.3) node (tmp){};
\draw [->] (K) -- (tmp |- K) node[plus,anchor=west](p2){};
\draw [<-] (p2) -- (w.west |- K) node [anchor=west](u2){$u_2$};
\draw [<->] (P.east)+(0,-0.3) -- +(1.4,-0.3) node [anchor=west](y2){$y_2$};
\draw (p2) -- (p2 |- y2);
\end{tikzpicture}
\caption{Feedback loop with additional inputs and outputs for
  analysis of stabilization of two-input two-output systems\label{fig:4block_stab}}
\end{figure}

The problem addressed in this article may be formally stated as
follows. Suppose $\mathcal{P}\in\Rp$ is given. Further suppose
that~$\mathcal{P}_{22} \in \Lower(\Rp,k,m)$.  The two-player problem
is
\begin{equation} 
  \label{opt:2p_output_feedback}
  \begin{aligned}
    \textup{minimize}\qquad
    & \big\|
      \mathcal{P}_{11} + \mathcal{P}_{12} 
      \mathcal{K}(I-\mathcal{P}_{22}\mathcal{K})^{-1} \mathcal{P}_{21} 
     \big\|_2 \\
    \textup{subject to}\qquad& \mathcal{K} \in \Lower(\Rp,m,k) \\
    & \mathcal{K}\text{ stabilizes }\mathcal{P}
\end{aligned}
\end{equation}
We will also make some additional standard assumptions. We will assume that $\mathcal{P}_{11}$ and $\mathcal{P}_{22}$ are
strictly proper, which ensures that the interconnection of Figure~\ref{fig:4block_stab} is always well-posed, and we will make some  technical assumptions
on $\mathcal{P}_{12}$ and $\mathcal{P}_{21}$ in order to guarantee the
existence and uniqueness of the optimal controller. The first step in our
solution to the two-player problem~\eqref{opt:2p_output_feedback} is to
deal with the stabilization constraint. This is the topic of the next section.


\section{Stabilization of triangular systems}

\label{sec:stabilization}

{\color{black}
Closed-loop stability for decentralized systems is a subtle issue. In
the centralized case, the core idea of pole-zero cancellation is
ancient, and this was beautifully extended to multivariable system in
the Desoer--Chan theory of closed-loop stability~\cite{desoer_chan}, where the
interconnection of the plant and controller in Figure~\ref{fig:4block_stab} is considered closed-loop stable if and only if the transfer function
$(w,u_1,u_2)\to (z,y_1,y_2)$ is well-posed and
stable.
Other modeling assumptions are possible, where one either
includes different plant uncertainty or different injected and
output signals. These assumptions would lead to different definitions
of stability.
For SISO systems, these two notions were shown to be
equivalent, and thus robustness to noise added to communication
signals between the plant and the controller is equivalent to this
type of robustness to plant modeling error.  Several works have proposed extensions of these ideas to decentralized systems, including~\cite{voulgaris_architectures,vamsi_elia}.  However, it is as yet poorly understood exactly what the correspondence is between plant uncertainty and signal uncertainty, and also it remains unclear which the relevant definition of decentralized stability is in practice. We therefore stick to the
well-established notion of closed-loop stability used for centralized
control systems.
}

In this section, we provide a state-space characterization of
stabilization when both the plant and controller have a
block-lower-triangular structure. Specifically, we give necessary and
sufficient conditions under which a block-lower-triangular stabilizing
controller exists, and we provide a parameterization of all such
controllers akin to the Youla parameterization \cite{youla}.
Many of the results in this section appeared in~\cite{realizability} and similar results also appeared in~\cite{voulgaris_stabilization}.
Throughout this section, we assume 
that the plant $\mathcal{P}$ satisfies
\begin{equation}
  \label{a:minreal}
  \bmat{ \mathcal{P}_{11} & \mathcal{P}_{12} \\
    \mathcal{P}_{21} & \mathcal{P}_{22} } =
  \left[\begin{array}{c|cc}
      A & B_1 & B_2 \\ \hlinet
      C_1 & 0 & D_{12} \\
      C_2 & D_{21} & 0
    \end{array}\right]
\if\MODE2
 \begin{array}{l}\text{is a minimal}\\\text{realization}\end{array}
\else
  \qquad \text{is a minimal realization}
\fi
\end{equation}
We further assume that the
$\mathcal{P}_{22}$ subsystem is block-lower-triangular.  Since we have
$\mathcal{P}_{22} \in \Lower(\Rp)$,
Lemma~\ref{lem:triangular_realization} allows us to assume without loss of generality that the matrices $A$, $B_2$, and $C_2$ have the form
\if\MODE2
	\begin{equation}\label{a:tri_form}
	\addtolength{\arraycolsep}{-0.8mm}
	A \defeq\! \bmat{A_{11} & 0 \\ A_{21} & A_{22}} \,\,
	B_2 \defeq\! \bmat{B_{11} & 0 \\ B_{21} & B_{22}} \,\,
	C_2 \defeq\! \bmat{C_{11} & 0 \\ C_{21} & C_{22}}
	\end{equation}
\else
	\begin{equation}\label{a:tri_form}
	A \defeq \bmat{A_{11} & 0 \\ A_{21} & A_{22}} \qquad
	B_2 \defeq \bmat{B_{11} & 0 \\ B_{21} & B_{22}} \qquad
	C_2 \defeq \bmat{C_{11} & 0 \\ C_{21} & C_{22}}
	\end{equation}
\fi
where $A_{ij} \in\R^{n_i\times n_j}$, $B_{ij}\in\R^{n_i\times
  m_j}$ and $C_{ij}\in\R^{k_i\times n_j}$. The following result gives
necessary and sufficient conditions under which a there exists a
structured stabilizing controller.

\begin{lem}
\label{lem:conditions_stabilizability}
Suppose $\mathcal{P}\in \Rp$ and $\mathcal{P}_{22} \in
\Lower(\Rp,k,m)$. Let $(A,B,C,D)$ be a minimal realization of
$\mathcal{P}$ that satisfies~\eqref{a:minreal}--\eqref{a:tri_form}. There
exists $\mathcal{K}_0\in\Lower(\Rp,m,k)$ such that $\mathcal{K}_0$
stabilizes $\mathcal{P}$ if and only if both
\begin{enumerate}[(i)]
\item $(C_{11},A_{11},B_{11})$ is stabilizable and detectable, and 
\item $(C_{22},A_{22},B_{22})$ is stabilizable and detectable.
\end{enumerate}
In this case, one such controller is
\begin{gather}
\label{eqn:nominal_K0}
  \begin{aligned}
    \mathcal{K}_0 &= \stsp{A+B_2K_d+L_dC_2}{-L_d}{K_d}{0}
  \end{aligned} \\ \notag
\text{where}\quad
K_d \defeq \bmat{K_1&\\& K_2} 
\quad\text{and}\quad 
L_d \defeq  \bmat{L_1&\\&L_2}
\end{gather}
and $K_i$ and $L_i$ are chosen such that $A_{ii}+B_{ii}K_i$ and
\if\MODE2\\\fi
$A_{ii}+L_iC_{ii}$ are Hurwitz for $i=1,2$.
\end{lem}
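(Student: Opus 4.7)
The plan is to establish the two directions separately, each via a direct state-space computation. For sufficiency, I would verify that the candidate $\mathcal{K}_0$ of~\eqref{eqn:nominal_K0} lies in $\Lower(\Rp,m,k)$ and stabilizes~$\mathcal{P}$. The structural claim is immediate: since $K_d$ and $L_d$ are block-diagonal and $A, B_2, C_2$ are block-lower-triangular by~\eqref{a:tri_form}, the internal matrix $A+B_2K_d+L_dC_2$ is block-lower-triangular with partition $n$, while the input and output matrices $-L_d$ and $K_d$ are block-diagonal; since the inverse of a block-lower-triangular matrix is block-lower-triangular, the transfer function $\mathcal{K}_0$ belongs to $\Lower(\Rp,m,k)$. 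For closed-loop stability I would apply the classical observer change of coordinates $e \defeq x-\hat{x}$, where $\hat{x}$ denotes the controller state. A short computation yields
\[
\bmat{\dot{x}\\ \dot{e}} = \bmat{A+B_2K_d & -B_2K_d \\ 0 & A+L_dC_2}\bmat{x\\ e},
\]
so closed-loop stability reduces to Hurwitz-stability of $A+B_2K_d$ and $A+L_dC_2$. Both matrices are block-lower-triangular with diagonal blocks $A_{ii}+B_{ii}K_i$ and $A_{ii}+L_iC_{ii}$, each Hurwitz by the choice of the $K_i$ and $L_i$.

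For necessity, suppose some $\mathcal{K}\in\Lower(\Rp,m,k)$ stabilizes $\mathcal{P}$. By Lemma~\ref{lem:triangular_realization}, $\mathcal{K}$ admits a realization $(A_K,B_K,C_K,D_K)$ with state partition $n_K=(n_{K,1},n_{K,2})$ in which each of the four matrices is block-lower-triangular. Permuting the closed-loop state from $(x_1,x_2,\hat{x}_1,\hat{x}_2)$ to $(x_1,\hat{x}_1,x_2,\hat{x}_2)$ renders the closed-loop $A$-matrix block-lower-triangular in a $2\times 2$ block sense with block sizes $n_i+n_{K,i}$, and the $(i,i)$ diagonal block equals the closed-loop matrix obtained by feeding back the subsystem $(C_{ii},A_{ii},B_{ii})$ with the controller $(A_{K,ii},B_{K,ii},C_{K,ii},D_{K,ii})$. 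Hurwitz stability of the full closed loop then forces each diagonal block to be Hurwitz, and the PBH test yields stabilizability of $(A_{ii},B_{ii})$ and detectability of $(C_{ii},A_{ii})$, establishing (i) and~(ii).

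The main difficulty is the structural bookkeeping: invoking Lemma~\ref{lem:triangular_realization} to obtain a controller realization whose triangular partition is compatible with that of the plant, and then verifying that the indicated permutation of the joint state really does produce a $2\times 2$ block-lower-triangular closed-loop matrix whose diagonal blocks can be identified with the individual subsystem closed loops. Once this is in hand, the remainder of the argument is the standard observer-based stability computation paired with PBH.
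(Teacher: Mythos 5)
Your proposal is correct and follows essentially the same route as the paper: the sufficiency direction is the standard observer/separation computation (which the paper dispatches as ``immediate''), and the necessity direction uses Lemma~\ref{lem:triangular_realization} to triangularize the controller realization, permutes the joint state to expose a $2\times 2$ block-lower-triangular closed-loop matrix whose diagonal blocks are the subsystem closed loops, and applies PBH. The only detail worth adding is that the triangular controller realization should be taken minimal (or at least stabilizable and detectable), so that input--output stabilization of $\mathcal{P}$ by $\mathcal{K}$ is equivalent to the closed-loop generator being Hurwitz; the paper makes this explicit.
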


\begin{proof}
  $(\impliedby)$ Suppose that $(i)$ and $(ii)$ hold. Note that
  $A+B_2K_d$ and $A+L_dC_2$ are Hurwitz by construction, thus
  $(C_2,A,B_2)$ is stabilizable and detectable, and it is thus
  immediate that~\eqref{eqn:nominal_K0} is stabilizing.  Due to the
  block-diagonal structure of $K_d$ and $L_d$, it is straightforward
  to verify that $\mathcal{K}_0 \in \Lower(\Rp)$.

  \noindent
    $(\implies)$ Suppose $\mathcal{K}_0\in\Lower(\Rp,m,k)$
    stabilizes $\mathcal{P}$. Because
  $\mathcal{P}$ is minimal, we must have that the realization
  $(C_2,A,B_2)$ is stabilizable and detectable. By
  Lemma~\ref{lem:triangular_realization}, we may assume that
  $\mathcal{K}_0$ has a minimal realization with $A_K,B_K,C_K,D_K
  \in\Lower(\R)$.  The closed-loop generator $\bar A$ is Hurwitz,
  where
  \[
  \bar A =
  \addtolength{\arraycolsep}{-0.5mm}
  \bmat{A & 0 \\ 0 & A_K} + \bmat{B_2 & 0 \\ 0 & B_K}
  \bmat{I & -D_K \\ 0 & I}^{-1}
  \bmat{0 & C_K \\ C_2 & 0}
  \]
It follows that
\[
\left(\bmat{A & 0 \\ 0 & A_K}, \bmat{B_2 & 0 \\ 0 & B_K}\right) \text{ is stabilizable}
\]
and hence by the PBH test, $(A,B_2)$ and $(A_K,B_K)$ are
stabilizable and similarly $(C_2,A)$ and $(C_K,A_K)$ are
detectable.

Each block of $\bar A$ is block-lower-triangular. Viewing $\bar A$ as a block $4\times 4$ matrix, transform $\bar A$ using a matrix $T$ that permutes states $2$ and $3$.  Now $T^{-1}\bar AT$ is Hurwitz implies that the two $2\times 2$ blocks on the diagonal are Hurwitz. But these diagonal blocks are precisely the closed-loop $\bar A$ matrices corresponding to the 11 and 22
subsystems.
Applying the PBH argument to the diagonal blocks of $T^{-1}\bar AT$
implies that $(C_{ii},A_{ii},B_{ii})$ is stabilizable and detectable
for $i=1,2$ as desired.
\end{proof}

Note that the centralized characterization of stabilization, as in
\cite[Lemma~12.1]{zdg}, only requires that $(C_2,A,B_2)$ be
stabilizable and detectable. The conditions in  Lemma~\ref{lem:conditions_stabilizability} are stronger
because of the additional structural constraint.

We may also characterize the stabilizability of just the 22 block, which we state as a corollary.

\begin{cor}
\label{cor:tristab}
   Suppose $\mathcal{P}_{22} \in \Lower(\Rp,k,m)$, and let
   $(A,B_2,C_2,D_{22})$ be a minimal realization of $\mathcal{P}_{22}$
   that satisfies~\eqref{a:tri_form}.
There exists
   $\mathcal{K}_0\in\Lower(\Rp,m,k)$ such that $\mathcal{K}_0$
   stabilizes~$\mathcal{P}_{22}$ if and only if both
    \begin{enumerate}[(i)]
    \item $(C_{11},A_{11},B_{11})$ is stabilizable and detectable, and 
    \item $(C_{22},A_{22},B_{22})$ is stabilizable and detectable.
    \end{enumerate}
\end{cor}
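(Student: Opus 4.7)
The plan is to deduce the corollary from Lemma~\ref{lem:conditions_stabilizability} by embedding $\mathcal{P}_{22}$ into a full four-block plant $\mathcal{P}$ that satisfies every hypothesis of the lemma, and then observing that the closed-loop internal stability condition is insensitive to the data we use to fill in the extra blocks.

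First I would augment the given realization of $\mathcal{P}_{22}$ by picking any matrices $B_1$, $C_1$, $D_{12}$, $D_{21}$ of compatible dimensions and defining a plant
\[
\mathcal{P} = \left[\begin{array}{c|cc}
A & B_1 & B_2 \\ \hlinet
C_1 & 0 & D_{12} \\
C_2 & D_{21} & 0
\end{array}\right]
\]
By construction this realization already has the form required by~\eqref{a:minreal}--\eqref{a:tri_form}. To check minimality, note that $(A,B_2)$ is controllable and $(C_2,A)$ is observable since $(A,B_2,C_2,D_{22})$ is a minimal realization of $\mathcal{P}_{22}$. Augmenting the input matrix with an additional block $B_1$ can only enlarge the reachable subspace, and likewise for the observability subspace after augmenting with $C_1$; the PBH test gives this immediately. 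Hence $(A,[B_1\ B_2])$ is controllable and $([C_1;C_2],A)$ is observable, and the extended realization of $\mathcal{P}$ is minimal.

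The second ingredient is to verify that any $\mathcal{K}\in\Lower(\Rp,m,k)$ stabilizes the extended $\mathcal{P}$ if and only if it stabilizes $\mathcal{P}_{22}$. For this I would inspect the closed-loop generator $\bar A$ that appears in the proof of Lemma~\ref{lem:conditions_stabilizability}: it is built entirely from $A$, $B_2$, $C_2$ and the realization matrices of $\mathcal{K}$, and does not involve $B_1$, $C_1$, $D_{12}$, or $D_{21}$ at all. Since both realizations are minimal, internal stability of each closed loop is equivalent to $\bar A$ being Hurwitz, which is the same condition in both cases.

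Combining these two observations, a direct application of Lemma~\ref{lem:conditions_stabilizability} to the extended plant $\mathcal{P}$ yields exactly the equivalence stated in the corollary, with the same conditions (i) and (ii). I do not expect any substantive obstacle: the only point requiring a bit of care is the minimality of the augmented realization, which is handled by the elementary PBH remark above.
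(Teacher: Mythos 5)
Your proof is correct and is essentially the argument the paper intends: the corollary is stated there without proof as an immediate consequence of Lemma~\ref{lem:conditions_stabilizability}, and your augmentation to a minimal four-block plant, combined with the observation that the closed-loop generator $\bar A$ depends only on $(A,B_2,C_2)$ and the controller data (so that, by minimality, stabilizing $\mathcal{P}$ and stabilizing $\mathcal{P}_{22}$ are both equivalent to $\bar A$ being Hurwitz), makes that reduction rigorous. The minimality of the augmented realization is the only point requiring care, and your PBH argument handles it correctly.
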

Indeed, there exist block-lower transfer matrices that cannot be stabilized by a block-lower controller. For example,
\[
\mathcal{P}_{22} = \bmat{\frac{1}{s+1} & 0 \\ \T \frac{1}{s-1} &
  \frac{1}{s+1}} = \left[\begin{array}{ccc|cc}
    -1 & 0 & 0 & 1 & 0 \\
    0 & 1 & 0 & 1 & 0 \\
    0 & 0 & -1 & 0 & 1 \\ \hlinet
    1 & 0 & 0 & 0 & 0 \\
    0 & 1 & 1 & 0 & 0
\end{array}\right]
\]
The above realization is minimal, but the grouping of states into
blocks is not unique.  We may group the unstable mode either
in the $A_{11}$ block or in the $A_{22}$ block, which corresponds to
$n=(2,1)$ or $n=(1,2)$, respectively.  The former leads to an undetectable $(C_{11},A_{11})$ while the latter leads to an unstabilizable $(A_{22},B_{22})$.
By Corollary~\ref{cor:tristab}, this plant cannot be
stabilized by a block-lower-triangular controller. However,
centralized stabilizing controllers exist due to the minimality of the
realization.

Note that a stabilizable $\mathcal{P}_{22}$ may have an off-diagonal block that is unstable. An example is:
\[
\mathcal{P}_{22} = \bmat{\frac{1}{s-1} & 0 \\ \T \frac{1}{s-1} &
  \frac{1}{s+1}} \qquad \mathcal{K}_0 = \bmat{ -2 & 0 \\
  \T 0 & 0\vphantom{\frac{1}{s-1}} }
\]
We now treat the parameterization of all stabilizing controllers.
The following result was proved in~\cite{voulgaris_stabilization}.

\begin{thm}\label{thm:stabilizing_controllers}  
Suppose the conditions of
Lemma~\ref{lem:conditions_stabilizability} hold, and 
$(C_{11},A_{11},B_{11})$ and $(C_{22},A_{22},B_{22})$ are both
stabilizable and detectable. Define $K_d$ and $L_d$ as in 
Lemma~\ref{lem:conditions_stabilizability}.
The set of all
$\mathcal{K}\in\Lower(\Rp,m,k)$ that stabilize $\mathcal{P}$ is
 \begin{equation*}
 \bl\{\mathcal{F}_\ell(\mathcal{J}_d, \mathcal{Q})
 \,\vertt\,
 \mathcal{Q} \in  \Lower(\RHinf,m,k) \br\}
 \end{equation*}
\begin{equation}\label{eqn:J}
\text{where}\quad\mathcal{J}_d = 
  \left[\begin{array}{c|cc}
      A+B_2K_d+L_dC_2
      & -L_d & B_2
      \\ \hlinet
      K_d & 0 & I \\
      -C_2 & I & 0
    \end{array}\right]
\end{equation}
\end{thm}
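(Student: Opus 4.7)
The plan is to derive the parameterization as a specialization of the classical centralized Youla result, by exploiting the block-lower-triangular structure built into $\mathcal{J}_d$. Under the hypotheses, both $A+B_2K_d$ and $A+L_dC_2$ are Hurwitz by construction, so $(C_2,A,B_2)$ is stabilizable and detectable. The standard state-space Youla parameterization (as in \cite[Chap.~12]{zdg}) then asserts that every $\mathcal{K}\in\Rp$ stabilizing $\mathcal{P}$ has the unique form $\mathcal{K}=F_\ell(\mathcal{J}_d,\mathcal{Q})$ with $\mathcal{Q}\in\RHinf^{m\times k}$. My task reduces to showing that $\mathcal{K}\in\Lower(\Rp,m,k)$ if and only if the associated $\mathcal{Q}$ lies in $\Lower(\RHinf,m,k)$.

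To prepare, I would check that $\mathcal{J}_d$ itself respects the block-lower-triangular structure. Since $A,B_2,C_2\in\Lower(\R)$ by~\eqref{a:tri_form} and $K_d,L_d$ are block-diagonal, the closed-loop generator $A+B_2K_d+L_dC_2$ lies in $\Lower(\R)$, as do the remaining state-space matrices appearing in~\eqref{eqn:J}. Consequently each block $\mathcal{J}_{d,ij}$ lies in $\Lower(\Rp)$, with the additional features that $\mathcal{J}_{d,12}$ and $\mathcal{J}_{d,21}$ are biproper (their feedthrough is the identity) and $\mathcal{J}_{d,22}=0$. Because $\mathcal{J}_{d,22}=0$, the LFT collapses to $\mathcal{K}=\mathcal{J}_{d,11}+\mathcal{J}_{d,12}\,\mathcal{Q}\,\mathcal{J}_{d,21}$, which can be solved for $\mathcal{Q}$ explicitly and used in both directions.

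For the forward direction, if $\mathcal{Q}\in\Lower(\RHinf,m,k)$ then $\mathcal{K}$ is a sum and product of elements of $\Lower(\Rp)$, a set closed under both operations, so $\mathcal{K}\in\Lower(\Rp,m,k)$. For the reverse direction I would invert the simplified LFT as $\mathcal{Q}=\mathcal{J}_{d,12}^{-1}(\mathcal{K}-\mathcal{J}_{d,11})\mathcal{J}_{d,21}^{-1}$. The main technical hurdle is verifying that $\mathcal{J}_{d,12}^{-1}$ and $\mathcal{J}_{d,21}^{-1}$ belong to $\Lower(\Rp)$; this is where the biproperness matters, and it follows from the observation that the inverse of a biproper block-lower-triangular transfer matrix is again block-lower-triangular, as seen by applying the standard $2\times 2$ block inversion formula pointwise in $s$. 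Granting this, $\mathcal{Q}$ is a product of elements of $\Lower(\Rp)$ and hence lies in $\Lower(\Rp)$, while $\mathcal{Q}\in\RHinf$ by the classical Youla argument. Combining the two gives $\mathcal{Q}\in\Lower(\RHinf,m,k)$, yielding the claimed parameterization.
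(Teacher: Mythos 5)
Your overall strategy is the same as the paper's: invoke the centralized Youla parameterization and then show that the map $\mathcal{Q}\mapsto\mathcal{K}$ and its inverse are structure preserving. However, there is a concrete error at the center of your argument: the claim that $\mathcal{J}_{d,22}=0$. Only the \emph{feedthrough} of that block is zero; the transfer function is $\mathcal{J}_{d,22}=-C_2\bl(sI-A-B_2K_d-L_dC_2\br)^{-1}B_2$, which is nonzero in general. You appear to be conflating the affineness of the closed-loop map $F_\ell(\mathcal{P},F_\ell(\mathcal{J}_d,\mathcal{Q}))=\mathcal{T}_{11}+\mathcal{T}_{12}\mathcal{Q}\mathcal{T}_{21}$ in $\mathcal{Q}$ with affineness of $\mathcal{K}=F_\ell(\mathcal{J}_d,\mathcal{Q})$ itself; the latter is a genuine fractional map. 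Consequently your forward direction is missing the factor $(I-\mathcal{J}_{d,22}\mathcal{Q})^{-1}$ (this is repairable: $I-\mathcal{J}_{d,22}\mathcal{Q}$ is biproper with identity feedthrough and lies in $\Lower(\Rp)$, so its inverse is in $\Lower(\Rp)$ by exactly the block-inversion observation you already make), and your inversion formula $\mathcal{Q}=\mathcal{J}_{d,12}^{-1}(\mathcal{K}-\mathcal{J}_{d,11})\mathcal{J}_{d,21}^{-1}$ is simply wrong: it recovers $\mathcal{Q}(I-\mathcal{J}_{d,22}\mathcal{Q})^{-1}$ rather than $\mathcal{Q}$, so the reverse direction as written does not establish that the true Youla parameter is block-lower-triangular.

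The gap is fixable with the tools you already have: either append the extra step $\mathcal{Q}=\mathcal{R}(I+\mathcal{J}_{d,22}\mathcal{R})^{-1}$ with $\mathcal{R}\defeq\mathcal{J}_{d,12}^{-1}(\mathcal{K}-\mathcal{J}_{d,11})\mathcal{J}_{d,21}^{-1}$ and argue triangularity of each factor, or do what the paper does and avoid explicit transfer-function inversion altogether. The paper observes that the inverse map is $\mathcal{Q}=F_u(\mathcal{J}_d^{-1},\mathcal{K})$ where
\[
\mathcal{J}_d^{-1} = \left[\begin{array}{c|cc}
    A & B_2 & -L_d \\ \hlinet
    C_2 & 0 & I \\
    -K_d & I & 0
  \end{array}\right]
\]
has all its state-space blocks in $\Lower(\R)$, so every block of $\mathcal{J}_d^{-1}$ is in $\Lower(\Rp)$ and the upper LFT preserves triangularity by the same closure properties you use for the forward direction. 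Either repair yields a complete proof; as submitted, both directions rest on the false identity $\mathcal{J}_{d,22}=0$.
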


\begin{proof}
  If we relax the constraint that $\mathcal{Q}$ be lower triangular,
  then this is the standard parameterization of all centralized
  stabilizing controllers~\cite[Theorem 12.8]{zdg}.  It suffices to
  show that the map from $\mathcal{Q}$ to $\mathcal{K}$ and its
  inverse are structure-preserving.  Since each block of the
  state-space realization of $\mathcal{J}_d$ is in $\Lower(\R)$, we
  have $(\mathcal{J}_d)_{ij}\in\Lower(\Rp)$. Thus~$F_\ell(\mathcal{J}_d,
  \cdot)$ preserves lower triangularity on its domain.  This also
  holds for the inverse map $F_u(\mathcal{J}^{-1}_d, \cdot)$, since
  \[
  \mathcal{J}_d^{-1} = \left[\begin{array}{c|cc}
      A & B_2 & -L_d \\ \hlinet
      C_2 & 0 & I \\
      -K_d & I & 0
    \end{array}\right]
  \]\vskip-\baselineskip
\end{proof}

As in the centralized case, this parameterization of stabilizing
controllers allows us to rewrite the closed-loop map in terms of
$\mathcal{Q}$. After some simplification, we obtain
$
  F_\ell(\mathcal{P},\mathcal{K}) =  \mathcal{T}_{11} + 
  \mathcal{T}_{12}  \mathcal{Q} \mathcal{T}_{21}
$
  where $\mathcal{T}$ has the realization  
  \begin{equation}
    \label{eqn:T}
    \bmat{\mathcal{T}_{11} & \mathcal{T}_{12} \\ \mathcal{T}_{21} & 0} =
    \left[\begin{array}{cc|cc}
        A_{Kd} &  -B_2 K_d & B_1 & B_2 \\
        0 & A_{Ld} & B_{Ld} & 0 \\ \hlinet
        C_{Kd} & -D_{12}K_d & 0 & D_{12} \\
        0 & C_2 & D_{21} & 0
      \end{array}\right]
  \end{equation}
and we have used the following shorthand notation.
\begin{equation}\label{defn:short}
\begin{aligned}
A_{Kd} &\defeq A+B_2K_d 		& A_{Ld} &\defeq A+L_dC_2 \\
C_{Kd} &\defeq C_1+D_{12}K_d 	& B_{Ld} &\defeq B_1+L_dD_{21}
\end{aligned}
\end{equation} 
Combining the results above gives the following important equivalence between the two-player output-feedback problem~\eqref{opt:2p_output_feedback} and a structured model-matching problem.
  \begin{cor}
    \label{cor:optimization_model_match}
    Suppose the conditions of
    Theorem~\ref{thm:stabilizing_controllers} hold.  Then
    $\mathcal{Q}_\textup{opt}$ is a minimizer~for
   \begin{equation}
      \label{opt:2pmm_gen}
      \begin{aligned}
        \textup{minimize}\qquad& \bigl\| \mathcal{T}_{11} +
        \mathcal{T}_{12} \mathcal{Q} \mathcal{T}_{21} \bigr\|_2 \\
        \textup{subject to}\qquad& \mathcal{Q} \in \Lower(\RHinf)
      \end{aligned}
    \end{equation}
   if and only if $\mathcal{K}_\textup{opt}=
   F_\ell(\mathcal{J}_d,\mathcal{Q}_\textup{opt})$ is a minimizer for
   the the two-player output-feedback problem
   \eqref{opt:2p_output_feedback}.
Here $\mathcal{J}_d$
   is given by \eqref{eqn:J}, and $\mathcal{T}$ is defined in~\eqref{eqn:T}.
    Furthermore, $\mathcal{Q}_\textup{opt}$ is unique if and only if
    $\mathcal{K}_\textup{opt}$ is unique.
  \end{cor}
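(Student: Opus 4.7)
The plan is to combine Theorem~\ref{thm:stabilizing_controllers} with the affine closed-loop identity that was already asserted (without proof) in the paragraph following~\eqref{defn:short}. Specifically, Theorem~\ref{thm:stabilizing_controllers} gives a bijection $\mathcal{Q}\mapsto F_\ell(\mathcal{J}_d,\mathcal{Q})$ between $\Lower(\RHinf,m,k)$ and the set of $\mathcal{K}\in\Lower(\Rp,m,k)$ that stabilize $\mathcal{P}$. Its inverse is $\mathcal{K}\mapsto F_u(\mathcal{J}_d^{-1},\mathcal{K})$, which was also shown to preserve lower-triangular structure. So the feasible set of~\eqref{opt:2p_output_feedback} is in one-to-one correspondence with the feasible set of~\eqref{opt:2pmm_gen}.

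Next, I would verify that under this bijection the cost coincides with the model-matching cost, i.e., that
\[
F_\ell(\mathcal{P},F_\ell(\mathcal{J}_d,\mathcal{Q})) \;=\; \mathcal{T}_{11}+\mathcal{T}_{12}\,\mathcal{Q}\,\mathcal{T}_{21}
\]
with $\mathcal{T}$ given by~\eqref{eqn:T}. This is the standard Youla closed-loop identity specialized to the present realizations. I would carry out the verification in two stages. First, form the $(w,u)\mapsto(z,y)$ realization of $\mathcal{P}$ under~\eqref{a:minreal}, close the loop with the $\mathcal{J}_d$ realization from~\eqref{eqn:J}, and perform the similarity transform that replaces the $\mathcal{J}_d$-state $\hat x$ by $x-\hat x$. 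Using the shorthand~\eqref{defn:short}, this decouples the generator into the block-upper-triangular form visible in~\eqref{eqn:T}, with $A_{Kd}$ and $A_{Ld}$ on the diagonal and $-B_2K_d$ in the off-diagonal position. Second, wrap the stable $\mathcal{Q}$ around the $(u,y)$ channel; since $\mathcal{Q}\in\Lower(\RHinf)$ is already stable and the ``inner'' loop is stabilized by construction, no additional cancellations are needed and one reads off $\mathcal{T}_{11}+\mathcal{T}_{12}\mathcal{Q}\mathcal{T}_{21}$ directly.

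Once these two ingredients are in place, the corollary follows immediately: the bijection is value-preserving, so the set of optimal values of~\eqref{opt:2p_output_feedback} and~\eqref{opt:2pmm_gen} are equal, and $\mathcal{K}_\textup{opt}=F_\ell(\mathcal{J}_d,\mathcal{Q}_\textup{opt})$ is optimal for one problem exactly when $\mathcal{Q}_\textup{opt}$ is optimal for the other. Uniqueness transfers through a bijection, giving the final clause.

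The only real obstacle is the state-space algebra in the second paragraph; everything else is a restatement of Theorem~\ref{thm:stabilizing_controllers}. The computation is routine but requires care because the $\mathcal{J}_d$ realization in~\eqref{eqn:J} mixes $B_2,K_d,L_d,C_2$ in a way that only simplifies after the $x-\hat x$ change of variables, and because one must check that the resulting realization~\eqref{eqn:T} is indeed stable (which it is, since both $A_{Kd}$ and $A_{Ld}$ are Hurwitz by the choice of $K_d,L_d$ in Lemma~\ref{lem:conditions_stabilizability}), ensuring that $\mathcal{T}_{11},\mathcal{T}_{12},\mathcal{T}_{21}\in\RHtwo\cup\RHinf$ and that the $\Htwo$-norm in~\eqref{opt:2pmm_gen} is well defined for every feasible $\mathcal{Q}$.
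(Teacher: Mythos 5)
Your proposal is correct and follows essentially the same route as the paper, which presents the corollary as an immediate consequence of the bijection in Theorem~\ref{thm:stabilizing_controllers} together with the affine closed-loop identity $F_\ell(\mathcal{P},F_\ell(\mathcal{J}_d,\mathcal{Q}))=\mathcal{T}_{11}+\mathcal{T}_{12}\mathcal{Q}\mathcal{T}_{21}$ with $\mathcal{T}$ realized as in~\eqref{eqn:T}. Your added checks (structure preservation of the inverse map, Hurwitzness of $A_{Kd}$ and $A_{Ld}$) are exactly the ingredients the paper relies on.
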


Corollary~\ref{cor:optimization_model_match} gives an equivalence between the two-player output-feedback problem \eqref{opt:2p_output_feedback} and the 
two-player stable model-matching problem \eqref{opt:2pmm_gen}. The new
formulation should be easier to solve than the output-feedback version
because it is convex and the~$\mathcal{T}_{ij}$ are stable. However,
its solution is still not straightforward, because the problem remains
infinite-dimensional and there is a structural constraint on
$\mathcal{Q}$.

\section{Main result}

\label{sec:main}

In this section, we present our main result: a solution to the
two-player output-feedback problem. First, we state our assumptions
and list the equations that must be solved. We assume the plant satisfies~\eqref{a:minreal} and the matrices $A$, $B_2$, $C_2$ have the form~\eqref{a:tri_form}. We further assume that $A_{11}$ and $A_{22}$ have non-empty dimensions, so $n_i\ne 0$. This avoids trivial special cases and allows us to streamline the results. To ease notation, we define the following cost and
covariance matrices
\begin{equation}
  \label{a:QRWV}
  \begin{aligned}
    \bmat{Q & S \\ S^\tp & R } &\defeq
    \if\MODE2\addtolength{\arraycolsep}{-0.5mm}\fi
    \bmat{ C_1^\tp C_1 & C_1^\tp D_{12} \\[.5mm] D_{12}^\tp C_1 & D_{12}^\tp D_{12} } =
    \if\MODE2\addtolength{\arraycolsep}{-0.7mm}\fi
    \bmat{C_1 & D_{12}}^\tp \!\bmat{C_1 & D_{12}} \\
    \if\MODE2\addtolength{\arraycolsep}{-0.5mm}\fi
    \bmat{W & U^\tp \\ U & V} &\defeq
    \if\MODE2\addtolength{\arraycolsep}{-0.7mm}\fi
    \bmat{ B_1B_1^\tp & B_1 D_{21}^\tp \\[.5mm] D_{21}B_1^\tp & D_{21} D_{21}^\tp} =
    \if\MODE2\addtolength{\arraycolsep}{-0.5mm}\fi
    \bmat{B_1 \\ D_{21}}\bmat{B_1 \\ D_{21}}^\tp
  \end{aligned}\if\MODE2\negskip\fi
\end{equation}
Our main assumptions are as follows.
\begin{enumerate}[\it{A}1)]
\if\MODE2\itemsep=1.5mm\fi
\item $D_{12}^\tp D_{12} > 0$ \label{ass:Afirst}
\item $(A_{11},B_{11})$ and $(A_{22},B_{22})$ are stabilizable
\item $\bmat{A -\i\omega I & B_2 \\ C_1 & D_{12}}$ has full column rank for all $\omega \in \R$
\item $ D_{21}D_{21}^\tp > 0$
\item $(C_{11},A_{11})$ and $(C_{22},A_{22})$ are detectable
\item $\bmat{A -\i\omega I & B_1 \\ C_2 & D_{21}}$ has full row rank for all $\omega \in \R$
\label{ass:Alast}
\end{enumerate}
We will also require the solutions to four AREs
\begin{equation}\label{tp:ares}
\hspace{-2mm}
\begin{aligned} 
(X,K) &= \are(A,B_2,C_1,D_{12}) \if\MODE2\\\else&\fi
(Y,L^\tp) &= \are(A^\tp,C_2^\tp,B_1^\tp,D_{21}^\tp)   \\
(\tilde X,J) &= \are(A_{22},B_{22},C_1E_2,D_{12}E_2) \if\MODE2\\\else&\fi
(\tilde Y,M^\tp) &= \are(A_{11}^\tp,C_{11}^\tp,E_1^\tp B_1^\tp,E_1^\tp D_{21}^\tp) 
\end{aligned}
\end{equation}
Finally, we must solve the following simultaneous linear equations for
$\Phi,\Psi\in\R^{n_2 \times n_1}$
\begin{multline} \label{tp:phi}
(A_{22}+B_{22}J)^\tp \Phi + \Phi (A_{11}+ MC_{11}) \if\MODE2\\\fi
-(\tilde X - X_{22})(\Psi C_{11}^\tp + U_{12}^\tp)V_{11}^{-1}C_{11}\\
+ \bigl(  \tilde X A_{21} + J^\tp S_{12}^\tp + Q_{21} - X_{21}MC_{11} \bigr) = 0
\end{multline}
\if\MODE2\vspace{-8mm}\fi
\begin{multline} \label{tp:psi}
(A_{22}+B_{22}J) \Psi + \Psi (A_{11}+ MC_{11})^\tp  \if\MODE2\\\fi
-B_{22}R_{22}^{-1}(B_{22}^\tp \Phi + S_{12}^\tp)(\tilde Y - Y_{11})\\
+ \bigl(  A_{21}\tilde Y + U_{12}^\tp M^\tp + W_{21} - B_{22}JY_{21} \bigr)= 0
\end{multline}
and define the associated gains $\hat K \in\R^{(m_1+m_2) \times (n_1+n_2)}$ and $\hat L\in\R^{(n_1 + n_2)\times (k_1+k_2)}$
\begin{align}\label{tp:gains}
\hat K &\defeq \bmat{ 0 & 0 \\-R_{22}^{-1} \left( B_{22}^\tp \Phi + S_{12}^\tp \right) & J}
\if\MODE2\\\else&\fi
\hat L &\defeq \bmat{ M &0\\ -(\Psi C_{11}^\tp+U_{12}^\tp )V_{11}^{-1}& 0}
\end{align}
For convenience, we define the Hurwitz matrices
\begin{equation}\label{a:ahat}
\begin{aligned}
A_K &\defeq A+B_2 K & 			A_L &\defeq A+LC_2 \\
A_J &\defeq A_{22}+B_{22}J &		A_M &\defeq A_{11}+MC_{11} \\ 
\hat A &\defeq A+B_2\hat K+\hat L C_2
\end{aligned}
\end{equation}
Note that $A_K,A_L,A_J,A_M$ are all Hurwitz by construction, and $\hat
A$ is Hurwitz as well, because it is block-lower-triangular and its
block-diagonal entries are $A_M$ and $A_J$.  The matrices $\Phi$ and
$\Psi$ have physical interpretations, as do the gains $\hat K$ and
$\hat L$. These will be explained in Section~\ref{sec:struct}. The
main result of this paper is Theorem~\ref{thm:main}, given below.

\begin{thm}\label{thm:main} 
Suppose $\mathcal{P}\in\Rp$
satisfies~\eqref{a:minreal}--\eqref{a:tri_form} and Assumptions
A\ref{ass:Afirst}--A\ref{ass:Alast}. Consider the two-player output
feedback problem as stated in~\eqref{opt:2p_output_feedback}
  \[
  \begin{aligned}
    \textup{minimize}\qquad
    & \big\|
      \mathcal{P}_{11} + \mathcal{P}_{12} 
      \mathcal{K}(I-\mathcal{P}_{22}\mathcal{K})^{-1} \mathcal{P}_{21} 
     \big\|_2 \\
    \textup{subject to}\qquad& \mathcal{K} \in \Lower(\Rp,m,k) \\
    & \mathcal{K}\textup{ stabilizes }\mathcal{P}
  \end{aligned}
  \]
\begin{enumerate}[(i)]
\if\MODE2\itemsep=1mm\fi
\item There exists a unique optimal $\mathcal{K}$.

\item Suppose $\Phi,\Psi$ satisfy the linear
  equations~\eqref{tp:phi}--\eqref{tp:psi}. Then the optimal
  controller is
  \begin{equation}\if\MODE2\hspace{-7mm}\fi
    \label{eqn:kopt}
    \if\MODE2\addtolength{\arraycolsep}{-0.5mm}\fi
    \mathcal{K}_\textup{opt} = \left[\begin{array}{cc|c}
        A+B_2K+\hat L C_2 & 0 & -\hat L  \\
        B_2K-B_2\hat K& A+LC_2+B_2\hat K & -L \\ \hlinet
        K - \hat K & \hat K & 0 
      \end{array}\right]
  \end{equation}
where $K$, $L$, $\hat K$, $\hat L$ are defined in~\eqref{tp:ares} and~\eqref{tp:gains}.
  
\item There exist  $\Phi,\Psi$ satisfying~\eqref{tp:phi}--\eqref{tp:psi}.
\end{enumerate}
\end{thm}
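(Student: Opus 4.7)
\textbf{Proof plan for Theorem~\ref{thm:main}.} My plan is to reduce the output-feedback problem to the structured model-matching problem~\eqref{opt:2pmm_gen} via Corollary~\ref{cor:optimization_model_match}, then exploit two rounds of spectral factorization---first globally to handle the centralized part of the problem, and then restricted to subsystems to handle the block-triangular structure---to reduce the remaining optimization over the off-diagonal Youla block to a finite-dimensional pair of Sylvester-type equations for $\Phi$ and $\Psi$.

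For part~(i), I would establish existence and uniqueness directly at the level of $\mathcal{Q}$. The inner-outer factorization of $\mathcal{T}_{12}$ associated with the ARE $(X,K)$ and the dual outer-inner factorization of $\mathcal{T}_{21}$ associated with $(Y,L)$---both well-defined under Assumptions A\ref{ass:Afirst}--A\ref{ass:Alast}---allow the cost $\|\mathcal{T}_{11}+\mathcal{T}_{12}\mathcal{Q}\mathcal{T}_{21}\|_2^2$ to be rewritten as a constant plus $\|\mathcal{U}_o\mathcal{Q}\mathcal{V}_o-\mathcal{Z}\|_2^2$ for some $\mathcal{Z}\in\RLtwo$ and boundedly bi-invertible $\mathcal{U}_o,\mathcal{V}_o$. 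Since $\Lower(\RHinf)$ is a closed convex subspace of $\Ltwo$ and the cost is then strongly convex in $\mathcal{Q}$, a unique minimizer $\mathcal{Q}_\textup{opt}$ exists, and by Corollary~\ref{cor:optimization_model_match} so does a unique~$\mathcal{K}_\textup{opt}$.

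For part~(ii), I would partition $\mathcal{Q}=E_1\mathcal{Q}_{11}E_1^\tp+E_2\mathcal{Q}_{22}E_2^\tp+E_2\mathcal{Q}_{21}E_1^\tp$ and perform a second round of factorizations using $(\tilde X,J)$ and $(\tilde Y,M)$, this time restricted to the projections onto the $22$- and $11$-subsystems respectively. These secondary AREs are exactly what is needed to decouple the $\mathcal{Q}_{11}$ and $\mathcal{Q}_{22}$ subproblems, yielding the gains $M$ and $J$. The residual problem in $\mathcal{Q}_{21}\in\RHinf^{m_2\times k_1}$ is a strongly convex quadratic whose stationarity condition, after projecting onto $\RHtwo$ and passing to state-space realizations, reduces to two coupled matrix equations. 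Matching coefficients with the block partitioning of the cost and covariance matrices in~\eqref{a:QRWV} identifies these equations with~\eqref{tp:phi} and~\eqref{tp:psi}, with parameters $\Phi,\Psi$ supplying the data needed to assemble $\hat K, \hat L$ as in~\eqref{tp:gains}. Substituting the resulting $\mathcal{Q}_\textup{opt}$ into $\mathcal{K}_\textup{opt}=F_\ell(\mathcal{J}_d,\mathcal{Q}_\textup{opt})$ from~\eqref{eqn:J} and applying a state similarity transform then produces the realization~\eqref{eqn:kopt}.

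Part~(iii) will then follow from parts~(i) and~(ii) at essentially no extra cost: since a unique $\mathcal{Q}_\textup{opt}$ exists by~(i), the derivation in~(ii) is not merely sufficient but necessary, so unwinding it backward extracts matrices $\Phi,\Psi$ that must satisfy~\eqref{tp:phi}--\eqref{tp:psi}. The step I expect to be the main obstacle is the reduction in part~(ii) identifying the stationarity conditions on $\mathcal{Q}_{21}$ with the specific matrix equations~\eqref{tp:phi}--\eqref{tp:psi}: the algebraic bookkeeping required to combine the inner/outer factors, project onto $\RHtwo$, and recognize the resulting expressions in terms of the shorthand in~\eqref{defn:short} is delicate, and it is precisely here that the coupling between estimation and control---which did not arise in any previously solved special case---first makes its appearance.
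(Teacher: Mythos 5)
Your overall architecture---reduce to the structured model-matching problem via Corollary~\ref{cor:optimization_model_match}, then argue on the Hilbert space $\Htwo$---matches the paper, and your Hilbert-space argument for part~(i) is essentially the paper's (Lemma~\ref{lem:optimality_conditions_2p} uses the projection theorem with a bounded-below multiplication operator, which is equivalent to your inner--outer normalization). But there are two genuine gaps. First, in part~(i) you only obtain a unique minimizer $\mathcal{Q}_\textup{opt}\in\Lower(\Htwo)$; the theorem asserts a unique optimal $\mathcal{K}\in\Lower(\Rp,m,k)$, so you must also show the minimizer is \emph{rational}, else the infimum over rational controllers need not be attained. The paper handles this with a separate vectorization argument (Lemma~\ref{lem:rationality_2p}): the structured problem is rewritten as a centralized model-matching problem for $\vecc(\mathcal{Q})$ via the Kronecker system~\eqref{kronkron}, whose optimum is rational by Lemma~\ref{lem:centralized_model_matching}. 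Your sketch omits this entirely, and rationality is also a prerequisite for the finite-dimensional fixed-point machinery needed later.

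Second, and more seriously, part~(iii) does not follow ``at essentially no extra cost.'' The content of part~(iii) is precisely that the coupled linear system~\eqref{tp:phi}--\eqref{tp:psi} is solvable, and abstract existence of $\mathcal{K}_\textup{opt}$ does not yield this unless you first prove that the optimal controller has the specific parametric form~\eqref{eqn:kopt} for \emph{some} gains of the form~\eqref{tp:gains}---which is exactly what is in question. The paper's route is substantively different from your ``unwind part~(ii)'' plan: it uses person-by-person optimality (Lemma~\ref{lem:coupled_problems}) to show the optimal $\mathcal{Q}_{11},\mathcal{Q}_{22}$ are fixed points of the composed partial-minimization maps $g_2\circ g_1$ and $g_1\circ g_2$; each partial minimization is a \emph{centralized} problem solvable in closed form (Lemmas~\ref{lem:fixed_Q11}--\ref{lem:fixed_Q22}), producing low-order realizations (Theorem~\ref{thm:simple_coupling}); and a Markov-parameter matching argument then eliminates the realization-dependent data and collapses the auxiliary Sylvester systems~\eqref{EQN:phi_final} and~\eqref{EQN:psi_final} into~\eqref{tp:phi}--\eqref{tp:psi}. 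Relatedly, your part~(ii) proposes to \emph{derive} the controller by decoupling the $\mathcal{Q}_{11}$ and $\mathcal{Q}_{22}$ subproblems with the secondary AREs and leaving a residual problem in $\mathcal{Q}_{21}$ alone; in the general output-feedback case the three blocks of the optimality condition~\eqref{optcond_2p} remain intertwined (the decoupling you describe is exactly what fails outside the special cases of Section~\ref{sec:struct}), and the paper instead proves part~(ii) by direct \emph{verification}: compute $\mathcal{Q}_\textup{opt}=F_u(\mathcal{J}_d^{-1},\mathcal{K}_\textup{opt})$, confirm it lies in $\Lower(\RHtwo)$, and check~\eqref{optcond_2p} by a state transformation built from $X,\hat X,Y,\hat Y$ together with the Gramian identities of Lemmas~\ref{lem:lyapY} and~\ref{lem:lyapX}. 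Without the fixed-point analysis or an equivalent argument, your plan for parts~(ii)--(iii) does not close.
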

\begin{proof}
A complete proof is provided in Section~\ref{sec:main_proof}.
\end{proof}

\noindent 
An alternative realization for the optimal controller is 
\begin{equation}\label{eqn:kopt2}
\if\MODE2\addtolength{\arraycolsep}{-1.1mm}\fi
  \mathcal{K}_\textup{opt} = \left[\begin{array}{cc|c}
      A+B_2K+\hat L C_2 & 0 & \hat L  \\
      LC_2-\hat L C_2 & A+LC_2+B_2\hat K & L-\hat L \\ \hlinet
      -K & -\hat K & 0 
    \end{array}\right]
\end{equation}


\section{Structure of the optimal controller}
\label{sec:struct}

In this section, we will discuss several features and consequences of the optimal controller exhibited in Theorem~\ref{thm:main}.
First, a brief discussion on duality and symmetry.  We
then show a structural result, that the states of the optimal
controller have a natural stochastic interpretation as minimum-mean-square-error estimates. This will lead to a useful interpretation of the matrices $\Phi$ and $\Psi$ defined in \eqref{tp:phi}--\eqref{tp:psi}.  We then compute the $\Htwo$-norm of the optimally controlled system, and characterize the portion of the cost attributed to the decentralization constraint.  Finally, we show how our main result specializes to many previously solved cases appearing in the literature.

\subsection{Symmetry and duality}

The solution to the two-player output-feedback problem has
nice symmetry properties that are perhaps unexpected given the network
topology. Player~2 sees more information than Player~1, so one
might expect Player~2's optimal policy to be more complicated
than that of Player~1. Yet, this is not the case.  Player~2 observes
all the measurements but only controls $u_2$, so only influences
subsystem 1.  In contrast, Player~1 only observes subsystem~1, but 
controls $u_1$, which in turn influences both subsystems. This duality is
reflected in
\eqref{tp:ares}--\eqref{tp:gains}.

If we transpose the system
variables and swap Player~1 and Player~2, then every quantity related
to control of the original system becomes a corresponding quantity
related to estimation of the transformed system. More formally, if we
define
\[
A^\ddag = \bmat{A_{22}^\tp & A_{12}^\tp \\[.5mm] A_{21}^\tp & A_{11}^\tp}
\]
then the transformation $(A,B_2,C_2) \mapsto
(A^\ddag,C_2^\ddag,B_2^\ddag)$ leads to
\begin{align*}
  (X,K) \mapsto (Y^\ddag,L^\ddag), \,\,\,
  (\hat X,\hat K) \mapsto (\hat Y^\ddag,\hat L^\ddag), \,\,\,\text{and}\,\,\,
  \hat A \mapsto \hat A^\ddag.
\end{align*}
This is analogous to the well-known correspondence between Kalman
filtering and optimal control in the centralized case.


\subsection{Gramian equations}

In this subsection, we derive useful algebraic relations that follow
from Theorem~\ref{thm:main} and provide a stochastic
interpretation. Recall the plant equations, written in differential
form.
\begin{align}\label{eq:ssplant}
\begin{aligned}
\dot x &= Ax + B_1 w + B_2 u \\
y &= C_2 x + D_{21} w
\end{aligned}
\end{align}
The optimal controller~\eqref{eqn:kopt} is given in
Theorem~\ref{thm:main}. Label its states as $\zeta$ and $\xi$, and
place each equation into the following canonical observer form.
\begin{align}
  \dot\zeta &= A\zeta +B_2\hat u - \hat L (y-C_2\zeta) \label{e1} \\
  \hat u &= K\zeta \label {e1b} \\
  \dot\xi &= A\xi + B_2 u - L (y-C_2\xi) \label{e2} \\
  u &= K\zeta + \hat K (\xi - \zeta) \label{e3}
\end{align}
We will see later that this choice of coordinates leads to a natural
interpretation in terms of steady-state Kalman filters. Our first
result is a Lyapunov equation unique to the two-player problem.

\begin{lem}\label{lem:lyapY}
Suppose $\Phi,\Psi$ is a solution of~\eqref{tp:phi}--\eqref{tp:psi}. 
Define $\hat L$ and $\hat A$ according to~\eqref{tp:gains}--\eqref{a:ahat}.
There exists a unique matrix~$\hat Y$ satisfying the equation
\begin{equation}\label{ee:lyapY}
\hat A (\hat Y - Y) + (\hat Y - Y)\hat A^\tp + (\hat L - L)V(\hat L - L)^\tp = 0
\end{equation}
Further, $\hat Y \geq Y$,  $\hat Y_{11} = \tilde Y$, $\hat Y_{21} = \Psi$,    and
\begin{align}
\label{ef1}
\hat L &=
 -\bl( \hat YC_2^\tp + U^\tp \br) E_1 V_{11}^{-1}E_1^\tp  
\end{align}
\end{lem}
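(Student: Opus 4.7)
The plan is to take the Lyapunov equation~\eqref{ee:lyapY} as implicitly defining $\hat Y$, and then verify the structural claims by exploiting the block-lower-triangular geometry of $\hat A$ and $\hat L$.

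First, I would establish existence, uniqueness, and $\hat Y \succeq Y$. Since $\hat A$ is block-lower-triangular with Hurwitz diagonal blocks $A_M$ and $A_J$, it is itself Hurwitz. Equation~\eqref{ee:lyapY} is then a standard Lyapunov equation in the unknown $\hat Y - Y$, which has a unique solution; setting $\hat Y \defeq Y + (\hat Y - Y)$ yields the unique $\hat Y$. Because $V = D_{21}D_{21}^\tp \succeq 0$, the forcing term $(\hat L - L)V(\hat L - L)^\tp$ is symmetric positive semidefinite, so the Lyapunov integral representation immediately gives $\hat Y - Y \succeq 0$.

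Next, I would identify $\hat Y_{11}$ and $\hat Y_{21}$ by partitioning~\eqref{ee:lyapY} according to the $(n_1,n_2)$ split. The block-triangularity of $\hat A$ makes the equation block-triangular in its action: the $(1,1)$ block reduces to a sub-Lyapunov equation with operator $A_M(\cdot) + (\cdot)A_M^\tp$ whose forcing depends only on the first block-row of $\hat L - L$ and the $(1,1)$ block of $V$. Using the centralized filtering ARE for $Y$ to rewrite the appearances of $L$, I would verify that $\tilde Y - Y_{11}$ satisfies this $(1,1)$ sub-equation, invoking the gain identity $M = -(\tilde Y C_{11}^\tp + U_{11}^\tp)V_{11}^{-1}$ and the associated closed-loop Lyapunov identity implied by the ARE $(\tilde Y, M^\tp)$. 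Uniqueness (since $A_M$ is Hurwitz) then forces $\hat Y_{11} = \tilde Y$. A similar partitioning of the $(2,1)$ block produces a Sylvester equation $A_J(\cdot) + (\cdot)A_M^\tp + (\text{source}) = 0$; substituting the newly established $\hat Y_{11} = \tilde Y$ and using the ARE identities to simplify the forcing reduces it to exactly~\eqref{tp:psi} rewritten for $\Psi - Y_{21}$, so uniqueness again gives $\hat Y_{21} = \Psi$.

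Finally, I would derive~\eqref{ef1} by direct computation. Because $C_2^\tp E_1 = \bmat{C_{11}^\tp \\ 0}$ and $U^\tp E_1 = \bmat{U_{11}^\tp \\ U_{12}^\tp}$, the identified blocks give
\[
(\hat Y C_2^\tp + U^\tp) E_1 = \bmat{\tilde Y C_{11}^\tp + U_{11}^\tp \\ \Psi C_{11}^\tp + U_{12}^\tp}.
\]
Multiplying on the right by $-V_{11}^{-1}E_1^\tp$ produces a matrix whose second block-column is zero, whose top-left block equals $M$ by the gain formula from the ARE $(\tilde Y,M^\tp)$, and whose bottom-left block equals $-(\Psi C_{11}^\tp + U_{12}^\tp)V_{11}^{-1}$, matching $\hat L$ as given by~\eqref{tp:gains}. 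The principal obstacle is the algebraic bookkeeping in the second step: the $(2,1)$ entry of $\hat A$ already couples $\Phi$ and $\Psi$, and the source $(\hat L - L)V(\hat L - L)^\tp$ mixes entries of $L$ (from the full filtering ARE) with those of $\hat L$ (from the constrained problem). Reducing the $(2,1)$ sub-equation cleanly to~\eqref{tp:psi} requires repeated use of $L = -(YC_2^\tp + U^\tp)V^{-1}$ together with the ARE for $\tilde Y$ to cancel extraneous cross-terms, and this is where I expect the bulk of the effort to lie.
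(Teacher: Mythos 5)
Your proposal is correct and takes essentially the same route as the paper: standard Lyapunov theory gives existence, uniqueness, and $\hat Y \geq Y$; the first block column of~\eqref{ee:lyapY} splits (via the triangularity of $\hat A$ and $\hat L$) into two Sylvester equations whose unique solutions are identified with $\tilde Y$ and $\Psi$ by comparison with~\eqref{tp:ares} and~\eqref{tp:psi}; and~\eqref{ef1} is then read off from~\eqref{tp:gains}. The only slip is cosmetic: the forcing in the $(1,1)$ sub-equation is $E_1^\tp(\hat L - L)V(\hat L - L)^\tp E_1$, which involves all of $V$ (through the off-diagonal block $L_{12}$ of $L$), not only $V_{11}$, but your planned use of the filtering ARE for $Y$ absorbs these terms anyway.
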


\begin{proof}
  Since $\hat A$ is stable and $(\hat L - L)V(\hat L-L)^\tp \geq 0$,
  it follows from standard properties of~Lyapunov equations that
  \eqref{ee:lyapY} has a unique solution and it satisfies $\hat Y -Y \geq 0$.
  Right-multiplying~\eqref{ee:lyapY} by~$E_1$ gives
  \if\MODE2
	\begin{multline*}
	\hat A (\hat Y E_1 - YE_1) + (\hat YE_1 - YE_1)A_M^\tp \\
	+ (\hat L - L)V(E_1M^\tp - L^\tp E_1) = 0
	\end{multline*}
  \else
	\[
	\hat A (\hat Y E_1 - YE_1) + (\hat YE_1 - YE_1)A_M^\tp +
	(\hat L - L)V(E_1M^\tp - L^\tp E_1) = 0
	\]
  \fi
  This equation splits into two Sylvester equations, the first in
  $\hat{Y}_{11}$, and the second in $\hat{Y}_{21}$.  Both 
  have unique solutions.  Upon comparison
  with \eqref{tp:ares}, \eqref{tp:psi}, \eqref{tp:gains}, one
  finds that $\hat Y_{11} = \tilde Y$ and $\hat Y_{21}
  = \Psi$. Note that, since $\Phi$ is fixed, $\Psi$ is uniquely determined. 
Similarly comparing with~\eqref{tp:gains}
  verifies \eqref{ef1}.
\end{proof}

Our next observation is that under the right coordinates, the
controllability~Gramian of the closed-loop map is block-diagonal.
\begin{thm}\label{thm:diag_gram}
Suppose we have the plant described by~\eqref{eq:ssplant} and the
optimal controller given by~\eqref{e1}--\eqref{e3}. The closed-loop
map for one particular choice of coordinates is
\if\MODE2
	\begin{multline*}
	\bmat{\dot\zeta \\ \dot{\xi}-\dot{\zeta} \\ \dot x - \dot \xi} =
	\bmat{A_K & -\hat L C_2 & -\hat L C_2 \\ 0 & \hat A & (\hat L - L)C_2 \\ 0 & 0 & A_L}
	\!\bmat{\zeta \\ \xi-\zeta \\ x-\xi} \\
	+ \bmat{-\hat L D_{21}\\ (\hat L - L)D_{21} \\  B_1+LD_{21} } w
	\end{multline*}
\else
	\[
	\bmat{\dot\zeta \\ \dot{\xi}-\dot{\zeta} \\ \dot x - \dot \xi} =
	\bmat{A_K & -\hat L C_2 & -\hat L C_2 \\ 0 & \hat A & (\hat L - L)C_2 \\ 0 & 0 & A_L}
	\bmat{\zeta \\ \xi-\zeta \\ x-\xi} +
	\bmat{-\hat L D_{21}\\ (\hat L - L)D_{21} \\  B_1+LD_{21} } w
	\]
\fi
which we write compactly as $\dot q = A_c q + B_c w$. Let $\Theta$ be
the controllability~Gramian, i.e. the solution $\Theta \ge 0$ to the
Lyapunov equation $A_c \Theta + \Theta A_c^\tp + B_c B_c^\tp =
0$. Then
\[
\Theta = \bmat{ Z & 0 & 0 \\ 0 & \hat Y - Y & 0 \\ 0 & 0 & Y }
\]
where $Z$ satisfies the Lyapunov equation
\[
A_K Z + ZA_K^\tp + \hat L V \hat L^\tp = 0
\]
\end{thm}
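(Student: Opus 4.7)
The plan is to first derive the closed-loop state equations in the coordinates $(\zeta,\xi-\zeta,x-\xi)$, and then exploit the block upper-triangular structure of $A_c$ by solving the block Lyapunov equations in a specific order, invoking identities from Lemma~\ref{lem:lyapY} and the Riccati equations in~\eqref{tp:ares}.

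For the dynamics, substitute $u=K\zeta+\hat K(\xi-\zeta)$ and $y=C_2x+D_{21}w$ into \eqref{e1}--\eqref{e3} and use $A_K=A+B_2K$, $A_L=A+LC_2$, and $\hat A=A+B_2\hat K+\hat L C_2$. Writing $x-\zeta=(x-\xi)+(\xi-\zeta)$ makes the upper-triangular cross-coupling appear naturally and recovers the stated $A_c$ and $B_c$ after routine manipulation.

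For the Gramian, partition $\Theta=(\Theta_{ij})_{i,j=1}^{3}$ symmetrically and solve the six block equations in the order $(3,3),(2,3),(1,3),(2,2),(1,2),(1,1)$. The $(3,3)$ block is the standard Kalman covariance equation
\[
A_L\Theta_{33}+\Theta_{33}A_L^\tp+(B_1+LD_{21})(B_1+LD_{21})^\tp=0,
\]
so $\Theta_{33}=Y$ by the Riccati equation for $(Y,L^\tp)$ in~\eqref{tp:ares}. The $(2,3)$ and $(1,3)$ Sylvester equations both have forcing term proportional to $C_2Y+U+VL^\tp$; since $L=-(YC_2^\tp+U^\tp)V^{-1}$ gives $VL^\tp=-(C_2Y+U)$, this forcing vanishes, and the Hurwitz property of $A_K,\hat A,A_L$ forces $\Theta_{23}=\Theta_{13}=0$. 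With $\Theta_{23}=0$, the $(2,2)$ block is exactly \eqref{ee:lyapY}, so $\Theta_{22}=\hat Y-Y$ by Lemma~\ref{lem:lyapY}.

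The main obstacle is the $(1,2)$ block, which, after substituting $\Theta_{22}=\hat Y-Y$ and $\Theta_{23}=0$, becomes
\[
A_K\Theta_{12}+\Theta_{12}\hat A^\tp+\hat L(C_2\hat Y+V\hat L^\tp+U)=0.
\]
The plan is to show the forcing term is zero using the sparsity of $\hat L$. By \eqref{ef1}, $\hat L=-(\hat YC_2^\tp+U^\tp)E_1V_{11}^{-1}E_1^\tp$, so it suffices to check $E_1^\tp(C_2\hat Y+V\hat L^\tp+U)=0$. Using $\hat L^\tp=-E_1V_{11}^{-1}E_1^\tp(C_2\hat Y+U)$ together with $E_1^\tp VE_1=V_{11}$ yields $E_1^\tp V\hat L^\tp=-E_1^\tp(C_2\hat Y+U)$, and the three terms cancel exactly. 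Hence $\Theta_{12}=0$, and the $(1,1)$ equation collapses to $A_KZ+ZA_K^\tp+\hat LV\hat L^\tp=0$, which is the defining Lyapunov equation for $Z$. The delicate step is verifying that the particular structure of $\hat L$ from Lemma~\ref{lem:lyapY} is precisely what is needed to decouple the $\zeta$ block from the estimation-error block, so the cleanest route is to do the $E_1^\tp$ projection argument rather than attempting a direct substitution.
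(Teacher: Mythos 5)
Your proposal is correct and follows essentially the same route as the paper, which simply states that the claimed block-diagonal $\Theta$ "can be verified by direct substitution and by making use of the identities in Lemma~\ref{lem:lyapY}"; your sequential block-by-block solve (in the order $(3,3),(2,3),(1,3),(2,2),(1,2),(1,1)$) is just that verification written out, using exactly the identities the paper points to, namely $VL^\tp = -(C_2Y+U)$ from the Riccati equation for $Y$ and the sparsity relation~\eqref{ef1} for $\hat L$. The $E_1^\tp$-projection step you single out for the $(1,2)$ block is the right (and only nontrivial) cancellation, so no gap remains.
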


\begin{proof}
Uniqueness of $\Theta$ follows because $A_c$ is Hurwitz, and $\Theta
\ge 0$ follows because $A_c$ is Hurwitz and $B_cB_c^\tp \ge 0$. The
solution can be verified by direct substitution and by making use of
the identities in Lemma~\ref{lem:lyapY}.
\end{proof}

The observation in Theorem~\ref{thm:diag_gram} lends itself to a
statistical interpretation. If $w$ is white Gaussian noise with unit
intensity, the steady-state distribution of the random vector~$q$ will
have covariance~$\Theta$. Since $\Theta$ is block-diagonal, the
steady-state distributions of the components $\zeta$, $\xi-\zeta$, and
$x-\xi$ are mutually independent.  As expected from our discussion on
duality, the algebraic relations derived in Lemma~\ref{lem:lyapY} may
be dualized to obtain a new result, which we now state without proof.

\begin{lem}
  \label{lem:lyapX}
  Suppose $\Phi,\Psi$ is a solution of~\eqref{tp:phi}--\eqref{tp:psi}. Define $\hat K$ and $\hat A$ according to~\eqref{tp:gains}--\eqref{a:ahat}.
  There exists a unique matrix~$\hat X$ satisfying the equation
 \begin{equation}
  \label{ee:lyapX}
  \hat A^\tp (\hat X - X) + (\hat X - X)\hat A + (\hat K - K)^\tp R(\hat K - K) = 0
 \end{equation}
 Further,  $\hat X \geq X$, $\hat X_{22} = \tilde X$, $\hat X_{21} = \Phi$, and
 \begin{align}
   \label{eg1}
   \hat K &=
   -E_2R_{22}^{-1}E_2^\tp \bl( B_2^\tp \hat X+ S^\tp \br)
 \end{align}
\end{lem}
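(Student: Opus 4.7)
The plan is to mirror the proof of Lemma~\ref{lem:lyapY} step by step, since Lemma~\ref{lem:lyapX} is its formal dual under the involution $(A,B_2,C_2) \mapsto (A^\ddag,C_2^\ddag,B_2^\ddag)$ described in Section~\ref{sec:struct}. One could simply invoke Lemma~\ref{lem:lyapY} for the dualized plant and transpose the conclusions, but a direct three-step argument is equally short and more transparent.

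First I would establish existence, uniqueness, and $\hat X \geq X$. By~\eqref{a:ahat}, $\hat A$ is block-lower-triangular with Hurwitz diagonal blocks $A_M$ and $A_J$, hence Hurwitz. The forcing term $(\hat K - K)^\tp R (\hat K - K)$ is positive semidefinite because $R = D_{12}^\tp D_{12} > 0$ by Assumption~A\ref{ass:Afirst}. Standard Lyapunov theory then furnishes a unique $\hat X - X \geq 0$.

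Next I would identify the blocks $\hat X_{22}$ and $\hat X_{21}$ by left-multiplying~\eqref{ee:lyapX} by $E_2^\tp$. Using the definitions in~\eqref{tp:gains} together with $A_{12}=0$, a direct block computation yields $\hat A E_2 = E_2 A_J$ (the zero first block row of $\hat K$ is irrelevant here and the zero second block column of $\hat L$ kills the $\hat L C_2 E_2$ contribution). Hence $E_2^\tp \hat A^\tp = A_J^\tp E_2^\tp$, and the resulting equation splits by further right-multiplication. Right-multiplying by $E_2$ yields a Lyapunov equation for $\hat X_{22} - X_{22}$ whose unique solution, after matching against the ARE defining $(\tilde X, J)$ and the $(2,2)$-block of the ARE for $X$, is $\tilde X - X_{22}$; therefore $\hat X_{22} = \tilde X$. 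Right-multiplying instead by $E_1$ yields a Sylvester equation for $\hat X_{21} - X_{21}$ in which $\hat A_{21}$, $\hat K_{21}$, $\hat L_{11} = M$, and the just-derived $\hat X_{22} = \tilde X$ appear; after substitution it is exactly~\eqref{tp:phi} viewed as an equation in $\Phi$. Since $\Psi$ is fixed, this Sylvester equation has a unique solution (because $A_J^\tp$ and $-A_M$ have disjoint spectra, both being Hurwitz), hence $\hat X_{21} = \Phi$.

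Finally, I would verify~\eqref{eg1} by direct block computation. The matrix $-E_2 R_{22}^{-1} E_2^\tp (B_2^\tp \hat X + S^\tp)$ has zero first block row, and its second block row is $-R_{22}^{-1}\bmat{B_{22}^\tp \hat X_{21} + S_{12}^\tp & B_{22}^\tp \hat X_{22} + S_{22}^\tp}$, which upon substituting $\hat X_{21} = \Phi$ and $\hat X_{22} = \tilde X$ becomes $\bmat{-R_{22}^{-1}(B_{22}^\tp \Phi + S_{12}^\tp) & J}$, matching $\bmat{\hat K_{21} & \hat K_{22}}$ in~\eqref{tp:gains}. The only real obstacle in the whole argument is bookkeeping: ensuring that the Sylvester equation extracted from~\eqref{ee:lyapX} is presented in exactly the form of~\eqref{tp:phi}; no ideas beyond those in the proof of Lemma~\ref{lem:lyapY} are required.
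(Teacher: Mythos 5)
Your proposal is correct and is essentially the paper's own argument: the paper states Lemma~\ref{lem:lyapX} without proof as the dual of Lemma~\ref{lem:lyapY}, and your three steps (Lyapunov existence/positivity from $\hat A$ Hurwitz and $R>0$, block extraction via $\hat A E_2 = E_2 A_J$ to recover the $\tilde X$-ARE and equation~\eqref{tp:phi}, then direct verification of~\eqref{eg1}) are exactly the dualization of that proof. The level of detail in the "matching against the AREs" step is the same as the paper's, so nothing is missing.
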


The dual of Theorem~\ref{thm:diag_gram} can be obtained by expressing
the closed-loop map in the coordinates $(x,x-\zeta,x-\xi)$ instead. In
these coordinates, one can show that the observability~Gramian is also
block-diagonal.


\subsection{Estimation structure}

In this subsection, we show that the states $\zeta$ and $\xi$ of
optimal controller~\eqref{e1}--\eqref{e3} may be interpreted as
suitably defined Kalman filters. Given the
dynamics~\eqref{eq:ssplant}, the estimation map and corresponding Kalman filter $\hat x$ is given by
\if\MODE2
\begin{equation}
  \label{eqn:kf1}
  \bmat{ x \\ y} \!=\!
  \addtolength{\arraycolsep}{-2pt}
  \left[\begin{array}{c|cc}
      A & B_1 & B_2 \\ \hlinet
      I & 0 & 0 \\
      C_2 & D_{21} & 0
    \end{array}\right]\!\bmat{w \\ u}\!,\;
  \hat x \!=\!
  \addtolength{\arraycolsep}{-1pt}
  \left[\begin{array}{c|cc}
      A_L & -L & B_2 \\ \hlinet
      I & 0 & 0
    \end{array}\right]\!\bmat{y \\ u}
\end{equation}
\else
\begin{equation}
  \label{eqn:kf1}
  \bmat{ x \\ y} =
  \left[\begin{array}{c|cc}
      A & B_1 & B_2 \\ \hlinet
      I & 0 & 0 \\
      C_2 & D_{21} & 0
    \end{array}\right]\bmat{w \\ u},\qquad
  \hat x = \left[\begin{array}{c|cc}
      A+LC_2 & -L & B_2 \\ \hlinet
      I & 0 & 0
    \end{array}\right] \bmat{y \\ u}
\end{equation}
\fi
where $L$ and $A_L$ are defined in~\eqref{tp:ares} and~\eqref{a:ahat} respectively. We will show that this Kalman
filter is the result of an $\Htwo$ optimization problem and we will
define it as such. This operator-based definition will allow us to
make precise claims without the need for stochastics. Before stating
our definition, we illustrate an important subtlety regarding the
input~$u$; in order to make the notion of a Kalman filter unambiguous,
one must be careful to specify which signals will be treated as
inputs. Consider the following numerical instance of~\eqref{eqn:kf1}.
\if\MODE2
\begin{align}
  \label{eqn:kf2}
  \bmat{x \\ y} \!=\!
  \addtolength{\arraycolsep}{-1pt}
  \left[\begin{array}{c|cc;{3pt/2pt}c}
      -4 & 3 & 0 & 1 \\ \hlinet
      1 & 0 & 0 & 0 \\
      1 & 0 & 1 & 0
    \end{array}\right]\!\bmat{w \\ u}\!,\quad
  \hat x \!=\!
  \addtolength{\arraycolsep}{-1pt}
  \left[\begin{array}{c|cc}
      -5 & 1 & 1 \\ \hlinet
      1 & 0 & 0
    \end{array}\right]\!\bmat{y \\ u}
\end{align}
\else
\begin{equation}
  \label{eqn:kf2}
  \bmat{x \\ y} = \left[\begin{array}{c|cc;{3pt/2pt}c}
      -4 & 3 & 0 & 1 \\ \hlinet
      1 & 0 & 0 & 0 \\
      1 & 0 & 1 & 0
    \end{array}\right]\bmat{w \\ u},\qquad
  \hat x = \left[\begin{array}{c|cc}
      -5 & 1 & 1 \\ \hlinet
      1 & 0 & 0
    \end{array}\right]
  \bmat{y \\ u}
\end{equation}
\fi
Suppose we know the control law $u$. We may then eliminate~$u$ from the estimator~\eqref{eqn:kf2}. For example,
\begin{align}\label{eqn:kf3}
u = \stsp{1}{1}{1}{0}y
\quad\text{leads to} \quad
\hat x = \left[\begin{array}{cc|c}
-5 & 1 & 1 \\
0 & 1 & 1 \\ \hlinet
1 & 0 & 0
\end{array}\right]y
\end{align}
The new estimator dynamics are unstable because we chose an unstable control law. Consider estimating $x$ again, but this time using the control law a~priori. Eliminating $u$ from the plant first and then computing the Kalman filter leads to
\if\MODE2
\begin{align}\label{eqn:kf4}
\bmat{x \\ y} =
\addtolength{\arraycolsep}{-1pt}
\left[\begin{array}{cc|cc}
-4 & 1 & 3 & 0 \\ 1 & 1 & 0 & 1\\ \hlinet
1 & 0 & 0 & 0\\ 1 & 0 & 0 & 1
\end{array}\right]w,\quad
\hat x = 
\addtolength{\arraycolsep}{-1pt}
\left[\begin{array}{cc|c}
-7 & 1 & 3\\
-12 & 1 & 13\\ \hlinet
1 & 0 & 0 
\end{array}\right]y
\end{align}
\else
\begin{align}\label{eqn:kf4}
\bmat{x \\ y} =
\left[\begin{array}{cc|cc}
-4 & 1 & 3 & 0 \\ 1 & 1 & 0 & 1\\ \hlinet
1 & 0 & 0 & 0\\ 1 & 0 & 0 & 1
\end{array}\right]w,\quad
\hat x = 
\left[\begin{array}{cc|c}
-7 & 1 & 3\\
-12 & 1 & 13\\ \hlinet
1 & 0 & 0 
\end{array}\right]y
\end{align}
\fi
The estimators~\eqref{eqn:kf3} and~\eqref{eqn:kf4} are different. Indeed,
\[
\hat x = \left(\frac{s}{s^2+4s-5}\right)y
\quad\text{and}\quad
\hat x = \left(\frac{3s+10}{s^2+6s+5}\right)y
\]
in~\eqref{eqn:kf3} and~\eqref{eqn:kf4} respectively. In general, open-loop and closed-loop estimation are different so any sensible definition of Kalman filtering must account for this by specifying which inputs (if any) will be estimated in open-loop. For centralized optimal control, the correct filter to use is given by~\eqref{eqn:kf1}--\eqref{eqn:kf2}.

Our definition of an \emph{estimator} is as follows. Note that similar optimization-based definitions have appeared in the literature, as in~\cite{sss}. 

\begin{defn}\label{def:kfc}
  Suppose $\mathcal{G}\in\Rp^{(p_1+p_2)\times(q_1 +q_2)  }$ is given by 
  \[
  \mathcal{G} = \bmat{
    \mathcal{G}_{11} & \mathcal{G}_{12} \\
    \mathcal{G}_{21} & \mathcal{G}_{22}    
  }
  \]
  and $\mathcal{G}_{21}(\i\omega)$ has full row rank for all
  $\omega\in\R\cup\{\infty\}$. Define the estimator of
  $\mathcal{G}$ to be $\mathcal{G}^\textup{est}\in\Rp^{p_1 \times (p_2
    + q_1)}$
  partitioned according to $  \mathcal{G}^\textup{est}
  = 
  \bmat{\mathcal{G}^\textup{est}_1 & \mathcal{G}^\textup{est}_2}
    $
  where
  \begin{align*}
  \mathcal{G}^\textup{est}_1 
  & = \argmin_{\substack{\mathcal{F}\in\RHtwo\\[1pt]\mathcal{G}_{11} -
      \mathcal{F} \mathcal{G}_{21}\in\RHtwo}} 
  \normm{\mathcal{G}_{11} - \mathcal{F} \mathcal{G}_{21}}_{2}\\
  \mathcal{G}^\textup{est}_2
  &= \mathcal{G}_{12} -   \mathcal{G}^\textup{est}_1 \mathcal{G}_{22}
  \end{align*}
\end{defn}
Note that under the assumptions of Definition~\ref{def:kfc}, the
estimator $\mathcal{G}^\textup{est}$ is unique. We will show existence
of  $\mathcal{G}^\textup{est}$ 
for particular $\mathcal{G}$ below. We now define the
following notation.
\begin{defn}\label{def:must}
  Suppose $\mathcal{G}$ and $\mathcal{G}^\textup{est}$ are
  as in Definition~\ref{def:kfc}. If
  \[
  \bmat{x \\ y} = \mathcal{G}\bmat{w \\ u} 
  \]
  then we use the notation $x_{|y,u}$ to mean
  \[
  x_{|y,u} \defeq \mathcal{G}^\textup{est}\bmat{y \\ u}
  \]
\end{defn}
\noindent 
This notation is motivated by the property that
\[
x-  x_{|y,u}
  =( \mathcal{G}_{11}  -\mathcal{G}^\textup{est}_1 \mathcal{G}_{21})w
\]
The stochastic interpretation of Definition~\ref{def:kfc} is therefore that $\mathcal{G}^\textup{est}_1 $ is chosen to minimize the mean square estimation error assuming $w$ is white Gaussian noise with unit intensity. In the following lemma, we show that the quantity $x_{|y,u}$ as defined in Definitions~\ref{def:kfc} and~\ref{def:must} is the usual steady-state Kalman filter for estimating the state~$x$ using the measurements $y$ and inputs $u$, as given in~\eqref{eqn:kf1}.

\begin{lem}\label{lem:structure_i}
  \label{lem:structure_centralized}
  Let $\mathcal{G}$ be 
  \[
  \mathcal{G} = \left[\begin{array}{c|cc}A & B_1 & B_2\\\hlinet 
      I & 0 & 0 \\ C_2 & D_{21} & 0 
    \end{array}\right]
  \]
  and suppose Assumptions~A4--A6 hold. Then
    \[
    \mathcal{G}^\textup{est} = 
    \left[\begin{array}{c|cc}
        A_L & -L  & B_2\\\hlinet 
        I & 0 & 0
      \end{array}\right]
    \]
    where $L$ and $A_L$ are given by~\eqref{tp:ares} and~\eqref{a:ahat} respectively.
  \end{lem}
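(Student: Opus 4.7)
The plan is to verify that the stated state-space realization satisfies Definition~\ref{def:kfc}, which has two ingredients: checking the formula $\mathcal{G}^\textup{est}_2 = \mathcal{G}_{12} - \mathcal{G}^\textup{est}_1\mathcal{G}_{22}$, and proving optimality of $\mathcal{G}^\textup{est}_1$ for the stated $\mathcal{H}_2$ program.

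For the first ingredient, I would compute $\mathcal{G}^\textup{est}_1\mathcal{G}_{22}$ in state-space (a cascade with state $(\hat x, x)$), apply the similarity transformation $(\hat x, x) \mapsto (\hat x, x - \hat x)$ to block-triangularize the $A$-matrix, and then add the realization of $\mathcal{G}_{12}$. The error state $x - \hat x$ becomes unobservable through the identity output once the subtraction is performed, collapsing the realization to the claimed $(A_L, B_2, I, 0)$. A parallel state-space calculation then shows that the residual $\mathcal{E} \defeq \mathcal{G}_{11} - \mathcal{G}^\textup{est}_1\mathcal{G}_{21}$ has minimal realization $(A_L, B_1 + LD_{21}, I, 0)$, so both $\mathcal{G}^\textup{est}_1$ and $\mathcal{E}$ lie in $\RHtwo$ because $A_L$ is Hurwitz, guaranteed by the stabilizing-ARE property of $(Y,L^\tp)$ under Assumptions A4--A6.

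For optimality I would use an orthogonality argument. Any feasible $\mathcal{F}$ can be written as $\mathcal{F} = \mathcal{G}^\textup{est}_1 + \Delta$ where $\Delta \in \RHtwo$ and $\Delta\mathcal{G}_{21} \in \RHtwo$, and then
\[
\|\mathcal{G}_{11} - \mathcal{F}\mathcal{G}_{21}\|_2^2 = \|\mathcal{E}\|_2^2 - 2\langle \mathcal{E},\,\Delta\mathcal{G}_{21}\rangle + \|\Delta\mathcal{G}_{21}\|_2^2,
\]
so optimality reduces to showing that the cross term vanishes. Since $\Delta\mathcal{G}_{21} \in \RHtwo$, it suffices to show $\mathcal{E}\mathcal{G}_{21}^* \in \RHtwo^\perp$. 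This is exactly what the algebraic Riccati equation for $Y$ encodes: using the realizations of $\mathcal{E}$ and $\mathcal{G}_{21}^*$, together with $L = -(YC_2^\tp + B_1D_{21}^\tp)(D_{21}D_{21}^\tp)^{-1}$ and the ARE identity, the contributions from the stable $A_L$-poles cancel, leaving a purely antistable function that is $\Ltwo$-orthogonal to every element of $\RHtwo$.

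The main obstacle is this last orthogonality verification; it is the frequency-domain statement that the Kalman-filter innovations are uncorrelated with the measurement history, and it is dual to the familiar $\mathcal{H}_2$ optimality step for state-feedback control. I would either invoke the standard state-space bilinear identity for the residual from~\cite{zdg} or carry out the short explicit ARE manipulation to verify the cancellation directly. Uniqueness of the minimizer then follows because the full-row-rank assumption on $\mathcal{G}_{21}(\i\omega)$ forces $\|\Delta\mathcal{G}_{21}\|_2 = 0$ to imply $\Delta = 0$.
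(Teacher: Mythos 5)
Your first ingredient (verifying the second block column via the cascade realization and the identity $(sI-A_L)^{-1}LC_2(sI-A)^{-1}=(sI-A_L)^{-1}-(sI-A)^{-1}$) is fine, as is the feasibility check that both $\mathcal{G}^\textup{est}_1$ and the residual $\mathcal{E}\defeq\mathcal{G}_{11}-\mathcal{G}^\textup{est}_1\mathcal{G}_{21}$ have $A$-matrix $A_L$ and hence lie in $\RHtwo$. The gap is in the optimality step: the orthogonality condition you propose to verify, $\mathcal{E}\mathcal{G}_{21}^*\in\Htwo^\perp$, is \emph{false} whenever $A$ is not Hurwitz, and Assumptions A4--A6 only give detectability, so the unstable case is exactly the one the lemma must cover. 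Carrying out the ARE manipulation you allude to (with the state transformation $x_1\mapsto x_1+Yx_2$) gives
\[
\mathcal{E}\mathcal{G}_{21}^* \;=\; \stsp{-A^\tp}{C_2^\tp}{-Y}{0},
\]
whose poles are the mirror images of the eigenvalues of $A$; any unstable eigenvalue of $A$ therefore produces a pole of $\mathcal{E}\mathcal{G}_{21}^*$ in the open left half-plane, so the product is not antistable. A scalar check: $A=2$, $B_1=C_2=D_{21}=1$ gives $Y=2$, $L=-3$, $A_L=-1$, $\mathcal{E}=-2/(s+1)$, $\mathcal{G}_{21}=(s-1)/(s-2)$, and $\mathcal{E}\mathcal{G}_{21}^*=-2/(s+2)\in\Htwo$. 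So the cross term $\langle\mathcal{E}\mathcal{G}_{21}^*,\Delta\rangle$ does \emph{not} vanish for arbitrary $\Delta\in\RHtwo$, and your argument as written would fail to establish optimality.

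What saves the result---and what the paper does---is that the feasible perturbations are not all of $\RHtwo$: the constraint $\Delta\mathcal{G}_{21}\in\RHtwo$ forces $\Delta$ to have zeros absorbing the unstable poles of $\mathcal{G}_{21}$. The paper parameterizes this set via a left coprime factorization $\mathcal{G}_{21}=\tilde{\mathcal{M}}^{-1}\tilde{\mathcal{N}}$ with $\tilde{\mathcal{M}}=\stsp{A_L}{L}{C_2}{I}$, writing $\Delta=\mathcal{Q}\tilde{\mathcal{M}}$ with $\mathcal{Q}\in\RHtwo$; then $\Delta\mathcal{G}_{21}=\mathcal{Q}\tilde{\mathcal{N}}$ where $\tilde{\mathcal{N}}=\tilde{\mathcal{M}}\mathcal{G}_{21}=\stsp{A_L}{B_1+LD_{21}}{C_2}{D_{21}}$ is the (stable) residual map, and the cross term becomes $\langle\mathcal{E}\tilde{\mathcal{N}}^*,\mathcal{Q}\rangle$. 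The condition that must be verified is therefore $\mathcal{E}\tilde{\mathcal{N}}^*\in\Htwo^\perp$, i.e., the paper's condition~\eqref{orth}, and \emph{that} one does hold by the same $Y$-shift argument (in the scalar example, $\mathcal{E}\tilde{\mathcal{N}}^*=-2/(s-1)\in\Htwo^\perp$). So your overall architecture is right, but you must insert the coprime-factorization reparameterization before invoking orthogonality; testing against $\mathcal{G}_{21}^*$ instead of the innovations map $\tilde{\mathcal{N}}^*$ is only valid in the special case of a stable plant.
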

  \begin{proof}
  A proof of a more general result that includes preview appears in~\cite[Theorem~4.8]{kr_thesis}. Roughly, one begins by applying the change of variables $\mathcal{F} = \mathcal{G}^\textup{est}_1 + \bar{\mathcal{F}}$, and then parameterizing all admissible $\bar{\mathcal{F}}$ using a coprime factorization of $\mathcal{G}_{21}$. This yields the following equivalent unconstrained problem
\if\MODE2
    \begin{align*}
      \textup{min} & \quad 
      \bbbbl\|
        \stsp{A_L}{B_1+LD_{21}}{I}{0} \bbbbr.
        - \bbbbl.\mathcal{Q}\stsp{A_L}{B_1+LD_{21}}{C_2}{D_{21}} \bbbbr\|_2
      \\
      \textup{s.t.} & \quad \mathcal{Q}\in\RHtwo
    \end{align*}
\else
    \begin{align*}
      \textup{minimize} & \quad 
      \normmmmm{
        \stsp{A_L}{B_1+LD_{21}}{I}{0} -
        \mathcal{Q}\stsp{A_L}{B_1+LD_{21}}{C_2}{D_{21}} }_2
      \\
      \textup{subject to} & \quad \mathcal{Q}\in\RHtwo
    \end{align*}
\fi
One may check that $\mathcal{Q}=0$ is optimal for the unconstrained problem, and hence $\bar{\mathcal{F}}=0$ as required, by verifying the following orthogonality condition.
\begin{equation}\label{orth}
  \if\MODE2\addtolength{\arraycolsep}{-1mm}\fi
  \stsp{A_L}{B_1+LD_{21}}{I}{0}\stsp{A_L}{B_1+LD_{21}}{C_2}{D_{21}}^*
  \in \Htwo^\perp
\end{equation}
To see why \eqref{orth} holds, multiply the realizations together and use the state transformation $(x_1,x_2)\mapsto (x_1-Yx_2,x_2)$, where $Y$ is given in~\eqref{tp:ares}. See~\cite[Lemma~14.3]{zdg} for an example of such a computation.
  \end{proof}

Comparing the result of Lemma~\ref{lem:structure_i} with the
$\xi$-state of the optimal two-player controller~\eqref{e2}, we notice
the following consequence.

\begin{cor} Suppose the conditions of Theorem~\ref{thm:main} are satisfied,
and the controller states are labeled as in~\eqref{e1}--\eqref{e3}.
Then
$
x_{|y,u} = \xi
$,
where the estimator is defined by the map
$\mathcal{G}:(w,u)\to(x,y)$ induced by the plant~\eqref{eq:ssplant}.
\end{cor}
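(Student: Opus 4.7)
The plan is to identify the dynamics of $\xi$ in \eqref{e2} with the state-space realization of the estimator $\mathcal{G}^{\textup{est}}$ provided by Lemma~\ref{lem:structure_i}, and then invoke Definition~\ref{def:must}. The assumptions needed for Lemma~\ref{lem:structure_i}, namely A\ref{ass:Afirst}--A\ref{ass:Alast} (in particular A4--A6), are available because they are part of the hypotheses of Theorem~\ref{thm:main}. In particular, the ARE solution $L$ and the matrix $A_L = A + LC_2$ appearing in Lemma~\ref{lem:structure_i} are the very same objects defined in~\eqref{tp:ares} and~\eqref{a:ahat} and used in~\eqref{e2}.

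Concretely, first I would rewrite \eqref{e2} as
\[
\dot\xi = (A + LC_2)\xi + B_2 u - L y = A_L \xi + \bmat{-L & B_2}\bmat{y \\ u},
\]
with readout $\xi = I \cdot \xi$. This shows that the map $(y,u) \mapsto \xi$ has the realization
\[
\left[\begin{array}{c|cc} A_L & -L & B_2 \\\hlinet I & 0 & 0 \end{array}\right],
\]
which is precisely $\mathcal{G}^{\textup{est}}$ as produced by Lemma~\ref{lem:structure_i} for the plant map $\mathcal{G}:(w,u)\to(x,y)$ of \eqref{eq:ssplant}. Since $A_L$ is Hurwitz, this transfer function representation is unambiguous: the signal $\xi$ is completely determined by $(y,u)$ up to decaying transients. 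Then by Definition~\ref{def:must}, $x_{|y,u} = \mathcal{G}^{\textup{est}}\bmat{y \\ u} = \xi$, which is the desired conclusion.

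There is no real obstacle here: the content of the corollary is essentially a pattern-matching statement between the observer-form realization chosen in~\eqref{e2} and the optimal-estimator realization of Lemma~\ref{lem:structure_i}. The only subtle point worth noting is that the estimator is defined via an $\Htwo$ minimization over $\mathcal{F}$ with $u$ treated as an exogenous input (recall the open-loop versus closed-loop discussion surrounding~\eqref{eqn:kf2}--\eqref{eqn:kf4}), and that this matches the way $u$ enters \eqref{e2}: $\xi$ is updated using the actually-applied $u$ as a known signal, rather than $u$ being eliminated via the controller equation \eqref{e3}. With this matching of inputs, the identification $x_{|y,u} = \xi$ is immediate from Lemma~\ref{lem:structure_i} and Definition~\ref{def:must}.
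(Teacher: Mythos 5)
Your proposal is correct and takes essentially the same route as the paper, which presents the corollary as an immediate consequence of comparing the realization of~\eqref{e2}, rewritten as the map $(y,u)\mapsto\xi$ with state matrix $A_L=A+LC_2$, against the estimator $\mathcal{G}^\textup{est}$ of Lemma~\ref{lem:structure_i} and invoking Definition~\ref{def:must}. Your added remark about treating $u$ as a known exogenous input (rather than eliminating it via~\eqref{e3}) correctly identifies the only subtlety, which the paper addresses in its discussion of~\eqref{eqn:kf2}--\eqref{eqn:kf4}.
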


The above result means that $\xi$, one of the states of the optimal
controller, is the usual Kalman filter estimate of $x$ given $y$ and
$u$.  The next result is more difficult, and gives the analogous
result for the other state.  The next theorem is our main structural
result. We show that $\zeta$ may also be interpreted as an optimal
estimator in the sense of Definitions~\ref{def:kfc} and~\ref{def:must}.

\begin{thm}\label{thm:structure_tpof}
Suppose the conditions of Theorem~\ref{thm:main} are satisfied,
and label the states of the controller as in~\eqref{e1}--\eqref{e3}.
Define also 
\begin{equation}
  \label{eqn:lotsofu}
  \begin{aligned}
    u_\zeta \defeq (K-\hat K)\zeta,\qquad
    u_\xi  \defeq \hat K \xi
  \end{aligned}
\end{equation}
so that $u = u_\zeta + u_\xi$. Then
\[
\bmat{ x \\ \xi \\ u}_{|y_1, u_\zeta} = \bmat{ \zeta \\ \zeta \\ \hat u }
\]
The estimator is defined by the map  
$\mathcal{G}:(w,u_\zeta)\to(x,\xi,u,y_1)$ induced by~\eqref{eqn:lotsofu}, the
plant~\eqref{eq:ssplant}, and the controller~\eqref{e1}--\eqref{e3}.
\end{thm}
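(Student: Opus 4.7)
The plan is to identify the controller state $\zeta$ with the $x$-component of the optimal estimator $\mathcal{G}^{\textup{est}}$ from Definition~\ref{def:kfc} applied to the plant $\mathcal{G}:(w,u_\zeta) \to (x,\xi,u,y_1)$. My candidate is $\mathcal{G}^{\textup{est}}_1 = -(sI - \hat A)^{-1} \hat L E_1$ (the $y_1$-to-$\hat x$ map) and $\mathcal{G}^{\textup{est}}_2 = (sI - \hat A)^{-1} B_2$ (the $u_\zeta$-to-$\hat x$ map), corresponding to the realization
\[
\dot{\hat x} = \hat A\, \hat x + B_2 u_\zeta - \hat L E_1\, y_1.
\]
Substituting $u_\zeta = (K-\hat K)\zeta$ and matching with \eqref{e1}--\eqref{e1b} identifies $\hat x$ with $\zeta$ in closed loop. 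Once $x_{|y_1,u_\zeta} = \zeta$ is established, the remaining claims follow by iterated projection: since $\xi = x_{|y,u}$ by Lemma~\ref{lem:structure_centralized} and $\sigma(y_1,u_\zeta) \subseteq \sigma(y,u)$ (as $u_\zeta = u - \hat K \xi$ is a function of $(y,u)$), the tower property yields $\xi_{|y_1,u_\zeta} = x_{|y_1,u_\zeta} = \zeta$; and since $u = u_\zeta + \hat K\xi$ with $u_\zeta$ known, $u_{|y_1,u_\zeta} = u_\zeta + \hat K \zeta = K\zeta = \hat u$.

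To verify the candidate satisfies Definition~\ref{def:kfc}, I first realize $\mathcal{G}$ in state space with state $(x,\xi)$ by substituting $u = u_\zeta + \hat K \xi$ into~\eqref{eq:ssplant} and~\eqref{e2}. The change of variables $(x,\xi) \mapsto (x,e)$ with $e = x - \xi$ decouples $\dot e = A_L e + (B_1 + L D_{21}) w$, which is Hurwitz and independent of $u_\zeta$. The existence condition of Definition~\ref{def:kfc}---full row rank of $\mathcal{G}_{21}(\i\omega)$---follows from $V_{11} > 0$ and Assumption~A\ref{ass:Alast}, and the consistency identity $\mathcal{G}^{\textup{est}}_2 = \mathcal{G}_{12} - \mathcal{G}^{\textup{est}}_1 \mathcal{G}_{22}$ is verified directly using $sI - \hat A + \hat L E_1 E_1^\tp C_2 = sI - (A + B_2 \hat K)$. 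In error coordinates $\epsilon = x - \hat x$ together with $e$, the residual $\mathcal{G}_{11} - \mathcal{G}^{\textup{est}}_1 \mathcal{G}_{21}$ has dynamics
\[
\dot\epsilon = \hat A \epsilon - B_2 \hat K e + (B_1 + \hat L D_{21}) w, \qquad \dot e = A_L e + (B_1 + L D_{21}) w,
\]
block-upper-triangular with Hurwitz diagonals $\hat A$ and $A_L$, so the residual lies in $\RHtwo$.

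The main technical obstacle is verifying the orthogonality condition $(\mathcal{G}_{11} - \mathcal{G}^{\textup{est}}_1 \mathcal{G}_{21}) \mathcal{G}_{21}^* \in \Htwo^\perp$ that certifies optimality of $\mathcal{G}^{\textup{est}}_1$. I would follow the template of Lemma~\ref{lem:structure_centralized}: form the state-space realization of the product, apply a similarity transformation built from the block-diagonal closed-loop Gramian of Theorem~\ref{thm:diag_gram}, and reduce to an algebraic Lyapunov identity. In the two-player setting, this identity is precisely equation~\eqref{ee:lyapY} of Lemma~\ref{lem:lyapY} for $\hat Y - Y$, combined with the standard filtering Lyapunov equation for $Y$. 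The essential complication is that the cross-coupling term $-B_2 \hat K e$ in the residual dynamics prevents reducing to a single Riccati factorization as in the centralized case; the matching identity emerges only from the Sylvester-type structure of~\eqref{ee:lyapY}, which captures the two-player coupling between Player~1's and Player~2's estimation problems.
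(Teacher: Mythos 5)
Your proposal is correct and follows essentially the same route as the paper: exhibit the candidate estimator $\dot{\hat x}=\hat A\hat x+B_2u_\zeta-\hat L E_1 y_1$, put the residual in the coordinates $(\epsilon,e)$ with block-upper-triangular Hurwitz dynamics, and certify optimality through the orthogonality condition, which via the block-diagonal Gramian of Theorem~\ref{thm:diag_gram} (equivalently the identities of Lemma~\ref{lem:lyapY}) reduces to $(\hat Y-Y)C_2^\tp E_1+(\hat L-L)VE_1=0$ together with the standard filtering relation $YC_2^\tp E_1+(B_1+LD_{21})D_{21}^\tp E_1=0$. The only deviation is that you dispatch the $\xi$- and $u$-components by a stochastic ``tower property'' rather than computing the stacked estimator directly as the paper does; this works, but under the operator-theoretic Definition~\ref{def:kfc} it should be justified as $(\xi-\zeta)R_1^*=(x-\zeta)R_1^*-(x-\xi)R_2^*\mathcal{M}^*\in\Htwo^\perp$, using $R_1=\mathcal{M}R_2$ with $\mathcal{M}\in\RHinf$ from Corollary~\ref{cor:orth}, rather than by appeal to $\sigma$-algebras, which the paper's definition deliberately avoids.
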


\begin{proof}
The proof parallels that of Lemma~\ref{lem:structure_i}, so we omit
most of the details. Straightforward algebra gives
\[
\mathcal{G} = \left[\begin{array}{cc|cc}
	A & B_2\hat K & B_1 & B_2 \\
	-LC_2 & A+LC_2+B_2\hat K & -LD_{21} & B_2 \\ \hlinet
	I & 0 & 0 & 0\\
	0 & I & 0 & 0\\
	0 & \hat K & 0 & I \\
	E_1^\tp C_2 & 0 & E_1^\tp D_{21} & 0
	\end{array}\right]
\]
After substituting $\mathcal{F} = \mathcal{G}^\textup{est}_1 +
\bar{\mathcal{F}}$, computing the coprime factorization, and changing
to more convenient coordinates, the resulting unconstrained
optimization problem is
\begin{align*}
  \textup{minimize}&\quad\left\|
  \left[\begin{array}{cc|c}
    \hat A & (\hat L - L)C_2 & (\hat L-L) D_{21} \\
    0 & A_L & B_2 + L D_{21} \\ \hlinet
    I & I & 0 \\
    I & 0 & 0 \\
    K & 0 & 0
    \end{array}\right]  -\mathcal{Q}
  \mathcal{D}\,
\right\|_2 \\
\textup{subject to}&\quad \mathcal{Q}\in\RHtwo
\end{align*}
where $\hat A$ is defined in~\eqref{a:ahat},
and
\[
\mathcal{D} =    \left[\begin{array}{cc|c}
        \hat A & (\hat L - L)C_2 & (\hat L-L) D_{211} \\
        0 & A_L & B_2 + L D_{21} \\ \hlinet
        E_1^\tp C_2  & E_1^\tp C_2 & E_1^\tp D_{21} 
        \end{array}\right] 
\]
Optimality of $\mathcal{Q}=0$ is established by verifying the
analogous orthogonality relationship to~\eqref{orth}. This is easily
done using the Gramian identity from Theorem~\ref{thm:diag_gram}. Then
we have
\[
\mathcal{G}^\textup{est} = \left[\begin{array}{c|cc}
\hat A & -\hat L E_1 & B_2 \\ \hlinet
I & 0 & 0\\
I & 0 & 0\\
K & 0 & 0
\end{array}\right]
\]
Comparing with~\eqref{e1}, we can see that
\[
\bmat{ \zeta \\ \zeta \\ \hat u }
=
\mathcal{G}^\text{est}\bmat{y_1 \\ u_\zeta} 
\]
and the result follows.
\end{proof}

\noindent
The result of Theorem~\ref{thm:structure_tpof} that
$
\zeta = x_{|y_1, u_\zeta} 
$
has a clear interpretation that $\zeta$ is a Kalman filter of $x$.
However, this result assumes that Player~2 is
implementing the optimal policy for $u_2$. This is important because
the optimal estimator gain depends explicitly on this choice of
policy. In contrast, in the centralized case,
the optimal estimation gain does not depend on the choice
of control policy.
  
  Note that using $u_1$ or $\hat u$ instead of $u_\zeta$ in
  Theorem~\ref{thm:structure_tpof} yields an incorrect
  interpretation of $\zeta$. In these coordinates,
  \begin{align*}
  \zeta &= \left[\begin{array}{c|cc}
  A+B_2E_2E_2^\tp K +\hat L C_2 & -\hat LE_1 & B_2E_1 \\ \hlinet
  I & 0 & 0\end{array}\right]
  \bmat{y_1 \\ u_1}\,\,\,\text{or}
\\
  \zeta &= \left[\begin{array}{c|cc}
  A+B_2 K +\hat L C_2 & -\hat LE_1 & B_2 \\ \hlinet
  I & 0 & 0\end{array}\right]
  \bmat{y_1 \\ \hat u}
  \end{align*}
In general, neither of these maps is stable, so for these choices of
signals, $\zeta$ cannot be an estimator of $x$ in the sense of
Definitions~\ref{def:kfc} and~\ref{def:must}.

We can now state the optimal two-player controller in the following simple form.
\begin{equation}
  \label{eqn:hj}
  u  = K x_{|y_1,u_\zeta} + 
  \bmat{0 & 0 \\ H & J} 
  \bl(
  x_{|y,u}
  - x_{|y_1,u_\zeta}
  \br)
\end{equation}
where $H\defeq\hat{K}_{21}$. These equations make clear that Player~1 is
using the same control action that it would use if the controller was
centralized with both players only measuring $y_1$.  We can also see
that the control action of Player~2 has an additional correction term,
given by the gain matrix $\bmat{H & J}$ multiplied by the difference
between the players' estimates of the states. Note that $u_\zeta$ is a
function only of $y_1$, as can be seen from the state-space equations
\begin{align*}
  \dot\zeta &= A_K \zeta - \hat{L}E_1\bl(y_1 - \bmat{C_{11} & 0 } \zeta\br)\\
  u_\zeta &= (K-\hat{K}) \zeta 
\end{align*}
The orthogonality relationships of the form~\eqref{orth} used in
Lemma~\ref{lem:structure_centralized} and
Theorem~\ref{thm:structure_tpof} to solve the $\Htwo$ optimization
problems may also be interpreted in terms of signals in the optimally
controlled closed-loop.

\begin{cor}\label{cor:orth}
Suppose $u,w,x$ and $y$ satisfy~\eqref{eq:ssplant}--\eqref{e3}.
Then  the maps $E_2: w\to x - \xi$ and $R_2: w\to y - C_2\xi$,
which give the error and residual for Player~2, are
\begin{align*}
E_2 &= \stsp{A_L}{B_1+LD_{21}}{I}{0} &
R_2 &= \stsp{A_L}{B_1+LD_{21}}{C_2}{D_{21}}
\end{align*}
The maps $E_1: w\to x - \zeta$ and $R_1: w \to y_1 -
C_{11}\zeta_1 $, which give the error and residual for~Player~1,  are
\begin{align*}
E_1  &= \left[\begin{array}{cc|c} \hat A & (\hat L-L) C_2 & (\hat L-L) D_{21} \\ 
    0 & A_L & B_1+LD_{21} \\ \hlinet I&I&0\end{array}\right] \if\MODE2\\\else&\fi
R_1 &= \stsp{A_M}{ME_1^\tp - E_1^\tp L}{C_{11}}{E_1^\tp} R_2
\end{align*}
Furthermore,  the orthogonality conditions $E_2 R_2^* \in \Htwo^\perp$ and 
$E_1 R_1^* \in \Htwo^\perp$
are satisfied.
\end{cor}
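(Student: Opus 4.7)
The plan is to first derive each of the four state-space realizations by direct manipulation of the closed-loop equations~\eqref{eq:ssplant}--\eqref{e3}, and then argue that the two orthogonality conditions are essentially the first-order optimality relations already verified in the proofs of Lemma~\ref{lem:structure_i} and Theorem~\ref{thm:structure_tpof}.

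For $E_2$ and $R_2$, subtracting the $\xi$-dynamics~\eqref{e2} from the plant dynamics~\eqref{eq:ssplant} and using $y = C_2 x + D_{21}w$ yields $\dot{x}-\dot\xi = A_L(x-\xi) + (B_1+LD_{21})w$, from which the claimed realization of $E_2$ reads off at once. The residual is $y - C_2\xi = C_2(x-\xi) + D_{21}w$, giving $R_2$ immediately with the same $A_L$ dynamics.

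For $E_1$ and $R_1$, I would work in the triangular closed-loop coordinates $(\zeta, \xi-\zeta, x-\xi)$ of Theorem~\ref{thm:diag_gram}. Since $x-\zeta = (\xi-\zeta) + (x-\xi)$, the output of $E_1$ does not depend on $\zeta$, so the reduced two-state realization of $E_1$ is exactly the one claimed. For $R_1$, writing $y_1 = E_1^\tp(C_2 x + D_{21}w)$ and $C_{11}\zeta_1 = E_1^\tp C_2 \zeta$ gives $y_1 - C_{11}\zeta_1 = E_1^\tp r_2 + E_1^\tp C_2(\xi - \zeta)$, where $r_2 = R_2 w$. Because the first block row of $\hat L$ is $ME_1^\tp$, the first block-component $\eta_1$ of $\xi-\zeta$ evolves as $\dot\eta_1 = A_M\eta_1 + (ME_1^\tp - E_1^\tp L)r_2$, while the second block-component is unobservable from $C_{11}\eta_1$. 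Dropping it yields $R_1 = G R_2$ with $G$ as stated in the corollary.

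The orthogonality $E_2 R_2^* \in \Htwo^\perp$ is literally the identity~\eqref{orth} from the proof of Lemma~\ref{lem:structure_i}, since its two factors agree with my $E_2$ and $R_2$; the standard derivation multiplies out the realizations, applies the state transformation $(x_1,x_2)\mapsto(x_1-Yx_2,x_2)$, and uses the ARE for $Y$ to kill the cross terms. The analogous orthogonality $E_1 R_1^* \in \Htwo^\perp$ is the condition implicitly verified in the proof of Theorem~\ref{thm:structure_tpof}, and this is where I expect the main obstacle to lie: the coupled closed-loop dynamics make the explicit expansion of $E_1 R_1^*$ considerably more intricate than in the centralized case. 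The verification rests on the block-diagonal controllability Gramian of Theorem~\ref{thm:diag_gram} together with the Lyapunov equation~\eqref{ee:lyapY} satisfied by $\hat Y - Y$, which jointly supply the block-diagonalizing state transformation and the cross-term cancellations needed to place the product in $\Htwo^\perp$.
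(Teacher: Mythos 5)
Your proposal is correct and follows essentially the route the paper intends: the paper states Corollary~\ref{cor:orth} without a separate proof, presenting it as a closed-loop reinterpretation of the orthogonality relations already established, and your derivation of the four realizations by direct manipulation of \eqref{eq:ssplant}--\eqref{e3} (using $E_1^\tp \hat A = A_M E_1^\tp$ and $E_1^\tp\hat L = ME_1^\tp$ to reduce $R_1$) together with the identification of $E_2R_2^*\in\Htwo^\perp$ with \eqref{orth} and of $E_1R_1^*\in\Htwo^\perp$ with the condition verified in Theorem~\ref{thm:structure_tpof} is exactly that argument. The tools you name for the second orthogonality are the right ones: the block-diagonal Gramian $\Theta$ supplies the decoupling state transformation, and the identity \eqref{ef1} from Lemma~\ref{lem:lyapY} (equivalently \eqref{ee:lyapY}) gives $(\hat L - L)VE_1 = -(\hat Y - Y)C_2^\tp E_1$, which together with $YC_2^\tp + U^\tp = -LV$ kills the stable part of the product.
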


As with Theorem~\ref{thm:diag_gram}, Corollary~\ref{cor:orth} lends
itself to a statistical interpretation. If $w$ is white Gaussian  
noise with unit intensity, and we consider the steady-state distributions of the error and
residual for the second player, $x-\xi$ and $y-C_2\xi$ respectively,
then they are independent. Similarly, the error and residual for the
first player, $x-\zeta$ and $y_1-C_{11}\zeta_1$, are also independent.


\subsection{Optimal cost}

We now compute the cost associated with the optimal control policy for
the two-player output-feedback problem. From centralized~$\Htwo$ theory~\cite{zdg},
there are many equivalent expressions for the optimal centralized cost. In particular,
\if\MODE2
\begin{align*}
	\norm{F_\ell(\mathcal{P},\mathcal{K}_\textup{cen})}_2^2 &=
	\normmmmm{\stsp{A_K}{B_1}{C_1+D_{12}K}{0}}_2^2 \\
	    &\hspace{12mm} + \normmmmm{\stsp{A_L}{B_1+LD_{21}}{D_{12}K}{0}}_2^2 \\
	&= \tr(XW) + \tr(YK^\tp RK) \\
	&= \tr(YQ) + \tr(XLVL^\tp)
	\end{align*}
\else
	\begin{align*}
	\norm{F_\ell(\mathcal{P},\mathcal{K}_\textup{cen})}_2^2 &=
	\normmmmm{\stsp{A_K}{B_1}{C_1+D_{12}K}{0}}_2^2
	    + \normmmmm{\stsp{A_L}{B_1+LD_{21}}{D_{12}K}{0}}_2^2 \\
	&= \tr(XW) + \tr(YK^\tp RK) \\
	&= \tr(YQ) + \tr(XLVL^\tp)
	\end{align*}
\fi
where $X,Y,K,L$ are defined in~\eqref{tp:ares}. 
Of course, the cost of the optimal two-player controller will be greater, so we have
\[
\norm{F_\ell(\mathcal{P}, \mathcal{K}_\textup{opt})}_2^2 = 
\norm{F_\ell(\mathcal{P},\mathcal{K}_\textup{cen})}_2^2 + \Delta
\]
where $\Delta \ge 0$ is the additional cost incurred by decentralization. We now give some
closed-form expressions for~$\Delta$ that are similar to the centralized formulae above.

\begin{thm}\label{thm:cost}

  The additional cost incurred by the optimal controller~\eqref{eqn:kopt}
  for the two-player problem~\eqref{opt:2p_output_feedback} as compared to the cost of 
  the optimal centralized controller is
  \begin{align*}
    \Delta &= \normmmmm{\stsp{\hat A}{(\hat L-L)D_{21}}{D_{12}(\hat K-K)}{0}}_2^2 \\
	&=\tr(\hat Y-Y)(\hat K-K)^\tp R(\hat K-K ) \\
	&=\tr(\hat X-X)(\hat L-L)V(\hat L-L)^\tp
  \end{align*}
  where $K$, $\hat K$, $L$, $\hat L$ are defined in
  \eqref{tp:ares}--\eqref{tp:gains}, $\hat X$ and $\hat Y$ are defined
  in~\eqref{ee:lyapY} and~\eqref{ee:lyapX}, and $\hat A$ is defined in
  \eqref{a:ahat}.
\end{thm}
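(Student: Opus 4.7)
The three displayed expressions split the work into two pieces. First I would show that the second and third expressions both equal $\|G\|_2^2$, where
\[
G \defeq \stsp{\hat A}{(\hat L-L)D_{21}}{D_{12}(\hat K-K)}{0}.
\]
Since $\hat A$ is Hurwitz by~\eqref{a:ahat}, $G \in \RHtwo$. Its controllability Gramian $P$ satisfies $\hat A P + P\hat A^\tp + (\hat L-L)V(\hat L-L)^\tp = 0$, and by the uniqueness asserted in Lemma~\ref{lem:lyapY} we read off $P = \hat Y - Y$. Dually, its observability Gramian $O$ satisfies $\hat A^\tp O + O\hat A + (\hat K-K)^\tp R(\hat K-K) = 0$, so Lemma~\ref{lem:lyapX} gives $O = \hat X - X$. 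The standard $\Htwo$ identities $\|G\|_2^2 = \tr(CPC^\tp) = \tr(B^\tp OB)$ then deliver the second and third expressions.

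The real work is showing $\Delta = \|G\|_2^2$. My plan is to compute $\|F_\ell(\mathcal{P},\mathcal{K}_\textup{opt})\|_2^2$ and $\|F_\ell(\mathcal{P},\mathcal{K}_\textup{cen})\|_2^2$ directly. Theorem~\ref{thm:diag_gram} gives the closed-loop controllability Gramian under $\mathcal{K}_\textup{opt}$ as the block diagonal $\diag(Z, \hat Y - Y, Y)$, and in the same coordinates the output matrix is $\bmat{C_1+D_{12}K & C_1+D_{12}\hat K & C_1}$, producing a three-term trace expression for the optimal cost. An analogous centralized calculation in $(\hat x, x-\hat x)$ coordinates yields a two-term expression, with the $\hat x$-block Gramian $Z_c$ satisfying $A_K Z_c + Z_c A_K^\tp + LVL^\tp = 0$. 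Both $\hat x$-contributions can be rewritten via the Lyapunov identity $A_K^\tp X + XA_K + (C_1+D_{12}K)^\tp(C_1+D_{12}K) = 0$ (a direct consequence of the Riccati equation for $X$), yielding
\[
\Delta = \tr\bl(X(\hat L V \hat L^\tp - LVL^\tp)\br) + \tr\bl((C_1+D_{12}\hat K)(\hat Y - Y)(C_1+D_{12}\hat K)^\tp\br).
\]

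The main obstacle is reconciling this with $\tr((\hat Y - Y)(\hat K-K)^\tp R(\hat K-K))$. Expanding $C_1+D_{12}\hat K = (C_1+D_{12}K) + D_{12}(\hat K - K)$ produces the desired target term along with a residual $\tr((C_1+D_{12}K)(\hat Y - Y)(C_1+D_{12}K)^\tp)$ piece and cross terms. The cross terms simplify via the Riccati-derived identity $(C_1+D_{12}K)^\tp D_{12} = -XB_2$, and the residual piece is evaluated using the Lyapunov equation for $\hat Y - Y$ in Lemma~\ref{lem:lyapY} together with the decomposition $A_K = \hat A - B_2(\hat K - K) - \hat L C_2$. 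After cancellation, the equality reduces to the single scalar identity
\[
\tr\bl(X \hat L \bl[ V(\hat L - L)^\tp + C_2(\hat Y - Y) \br]\br) = 0.
\]
This is the genuinely nontrivial step: the explicit formula $\hat L = -(\hat Y C_2^\tp + U^\tp) E_1 V_{11}^{-1} E_1^\tp$ from Lemma~\ref{lem:lyapY} forces $\hat L E_2 = 0$, so $X\hat L$ has trivial second block-column; a short direct computation also shows the first block-row of the bracketed matrix vanishes after substituting this explicit $\hat L$, and together these sparsity properties annihilate the trace.
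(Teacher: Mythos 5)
Your proposal is correct, and the second half (identifying $\hat Y - Y$ and $\hat X - X$ as the controllability and observability Gramians of $G$ via Lemmas~\ref{lem:lyapY} and~\ref{lem:lyapX}) coincides with what the paper does. The interesting difference is in how you establish $\Delta = \norm{G}_2^2$. The paper takes a shortcut: it views $\mathcal{K}_\textup{opt}$ as a sub-optimal \emph{centralized} controller and invokes the standard centralized decomposition $\Delta = \norm{D_{12}\mathcal{Q}_\text{you}D_{21}}_2^2$, where $\mathcal{Q}_\text{you} = F_u(\mathcal{J}^{-1},\mathcal{K}_\textup{opt})$ is the Youla parameter relative to the centralized gains $K$, $L$; a state-space computation collapses this to $\mathcal{Q}_\text{you} = \stsp{\hat A}{\hat L - L}{\hat K - K}{0}$, and the first formula drops out immediately. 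You instead compute both closed-loop costs directly from their block-diagonal Gramians (Theorem~\ref{thm:diag_gram} for the decentralized loop, the analogous $(\hat x, x-\hat x)$ decomposition for the centralized one) and grind the difference down by hand. I checked your reductions: the cross terms do cancel via $(C_1+D_{12}K)^\tp D_{12} = -XB_2$, the residual term collapses via $A_K = \hat A - B_2(\hat K - K) - \hat L C_2$ and the Lyapunov equation~\eqref{ee:lyapY}, and the final identity holds because $\hat L E_2 = 0$ together with
\[
E_1^\tp\bl[ V(\hat L - L)^\tp + C_2(\hat Y - Y) \br] = -E_1^\tp V E_1 V_{11}^{-1}E_1^\tp(C_2\hat Y + U) + E_1^\tp(C_2 Y + U) + E_1^\tp C_2(\hat Y - Y) = 0,
\]
using $\hat L = -(\hat Y C_2^\tp + U^\tp)E_1V_{11}^{-1}E_1^\tp$ and $LV = -(YC_2^\tp + U^\tp)$. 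What the paper's route buys is brevity and reuse of the Youla machinery already set up in Section~\ref{sec:stabilization}; what your route buys is self-containedness and the observation that the whole difference reduces to the single orthogonality identity above, which is essentially the same relation underlying the block-diagonality in Theorem~\ref{thm:diag_gram} and Corollary~\ref{cor:orth}. To turn the proposal into a full proof you would only need to write out the intermediate trace manipulations explicitly; no idea is missing.
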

\begin{proof}
  The key is to view $\mathcal{K}_\textup{opt}$ as a sub-optimal
  centralized controller. Centralized~$\Htwo$ theory
  \cite{zdg} then implies that
  \begin{equation}
    \label{eqn:opt2:cost_norm_stuff}
  \Delta = \norm{D_{12} \mathcal{Q}_\text{you}D_{21}}_2^2
  \end{equation}
  where $\mathcal{Q}_\text{you}$ is  the centralized Youla
  parameter. Specifically, $\mathcal{Q}_\text{you} =
  F_u(\mathcal{J}^{-1},\mathcal{K}_\textup{opt})$ and
  \[
  \mathcal{J}^{-1} = \left[\begin{array}{c|cc} A & B_2 & -L \\
      \hlinet
      C_2 & 0 & I \\
      -K & I & 0
    \end{array}\right]
  \]
  This centralized Youla parameterization contains the gains $K$ and $L$ instead of $K_d$ and $L_d$. After simplifying, we obtain
  \begin{equation}\label{Qs}
    \mathcal{Q}_\text{you} = \left[\begin{array}{c|c}
        \hat A & \hat L-L \\ \hlinet \hat K-K & 0
      \end{array}\right]
  \end{equation}
  substituting~\eqref{Qs} into~\eqref{eqn:opt2:cost_norm_stuff} yields the first formula.
  The second formula follows from evaluating~\eqref{eqn:opt2:cost_norm_stuff} in a different way.
  Note that $\norm{D_s + C_s(sI-A_s)^{-1}B_s}^2= \tr(C_sW_c C_s^\tp)$,  where
  $W_c$ is the associated controllability Gramian, given by
  \hbox{$A_sW_c + W_cA_s^\tp + B_sB_s^\tp = 0$}.  In the case of~\eqref{eqn:opt2:cost_norm_stuff}, the 
  controllability Gramian equation is precisely~\eqref{ee:lyapY},
  and therefore~\hbox{$W_c = \hat Y - Y$} and the second formula follows. 
  The third formula follows using the observability
  Gramian $W_o = \hat X - X$ given by~\eqref{ee:lyapX}.
\end{proof}

Theorem~\ref{thm:cost} precisely quantifies the cost
of decentralization. We note that $\Delta$ is small whenever $(\hat L -L)$ or $(\hat K -K)$ is small. Examining these cases individually, we find the following. If $L = \hat L$, then even the optimal centralized controller makes no use of $y_2$; it only adds noise and no new information. In the two-player context, this leads to $\zeta=\xi$, so both players have the same estimate of the global state, based only on $y_1$, and either player is capable of determining $u$.
On the other hand, if $K = \hat K$, then even the optimal centralized controller makes no use of $u_1$. In the two-player context, this leads to a policy in which Player~1 does nothing.


\subsection{Some special cases}

The optimal controller~\eqref{eqn:kopt} is given
by~\eqref{eqn:hj}, which we repeat here for convenience.
\[
  u  = K x_{|y_1,u_\zeta} + 
  \bmat{0 & 0 \\ H & J} 
  \bl(
  x_{|y,u}
  - x_{|y_1,u_\zeta}
  \br)
\]
Recall that $K$ and $J$ are found by solving standard AREs~\eqref{tp:ares}.
The coupling between estimation and control appears in the term
$H = -R_{22}^{-1}(B_{22}^\tp \Phi + S_{12}^\tp)$, which is found by solving the
coupled linear equations~\eqref{tp:phi}--\eqref{tp:psi}.

Several special cases of the two-player output-feedback problem have
previously been solved. We will see that in each case, the first component of
$(x_{|y,u}- x_{|y_1,u_\zeta})$ is zero. In other words, both players maintain
identical estimates of $x_1$ given their respective information.
Consequently, $u$ no longer depends on $H$ and there is no need to
compute $\Phi$ and $\Psi$. We now examine these special cases in more detail.

\paragraph{Centralized} The problem becomes centralized when both players
have access to the same information. In this case, both players maintain
identical estimates of the entire state~$x$. Thus,
$x_{|y,u} = x_{|y_1,u_\zeta} = \hat x$
and we recover the well-known centralized solution~\hbox{$u = K \hat x$}.

\paragraph{State feedback}  The state-feedback problem for two players is the
case where our measurements are noise-free, so that~\hbox{$y_1 = x_1$} and $y_2 =
x_2$.  Therefore, both players measure $x_1$ exactly.  This case is
solved in~\cite{shah10,swigart10} and the solution takes the following
form, which agrees with our general formula.
\[
u = K\bmat{x_1 \\ \hat x_{2|1}} + \bmat{0 \\ J}(x_2 - \hat x_{2|1})
\]
Here, $\hat x_{2|1}$ is an estimate of $x_2$ given the information
available to Player~1, as stated in~\cite{swigart10}.

\paragraph{Partial output feedback}  In the partial output-feedback case,
$y_1=x_1$ as in the state-feedback case, but $y_2$ is a noisy linear
measurement of both states. This case is solved in~\cite{swigart_partial} and the solution takes the following form (using notation
from~\cite{swigart_partial}), which agrees with our general formula.
\[
u = K\bmat{x_1 \\ \hat x_{2|1}} + \bmat{0\\J}(\hat x_{2|2}-\hat x_{2|1})
\]

\paragraph{Dynamically decoupled}  In the dynamically decoupled case, all
measurements are noisy, but the dynamics of both systems are
decoupled.  This amounts to the case where~$A_{21}=0$, $B_{21}=0$,
$C_{21}=0$, and $W$, $V$, $U$ are block-diagonal. Due to the
decoupled dynamics, the estimate of $x_1$ based on $y_1$ does not
improve when additionally using $y_2$. This case is solved
in~\cite{jonghan} and the solution takes the following form, which
agrees with our general formula.
\[
u = K\bmat{\hat x_{1|1} \\ \hat x_{2|1}} + \bmat{0\\J}(\hat x_{2|2}-\hat x_{2|1})
\]
Note that estimation and control are decoupled in all the special cases examined above.
This fact allows the optimal controller to be computed by merely solving some
subset of the AREs~\eqref{tp:ares}. In the general case
however, estimation and control are coupled via $\Phi$ and~$\Psi$ 
in~\hbox{\eqref{tp:phi}--\eqref{tp:psi}}.


\section{Proofs}
\label{sec:main_proof}

\subsection{Existence and uniqueness of the controller}
\label{sec:origins_a}

In order to prove existence and uniqueness, our general approach
is to first convert the optimal control problem into a model-matching
problem using Corollary~\ref{cor:optimization_model_match}.  This
model-matching problem has stable parameters $\mathcal{T}_{ij}$, and
so may be solved using standard optimization methods on the Hilbert
space $\Htwo$. We therefore turn our attention to this class of
problems.  We will need the following assumptions.
\begin{enumerate}[\it{B}1)]
\itemsep=1.5mm
\item $\mathcal{T}_{11}(\infty) = 0$  \label{ass:Bfirst}
\item $\mathcal{T}_{12}(\i\omega)$ has full column rank for all $\omega\in\R\cup\{\infty\}$ \label{ass:Bmid}
\item $\mathcal{T}_{21}(\i\omega)$ has full row rank for all $\omega\in\R\cup\{\infty\}$\label{ass:Blast}
\end{enumerate}
The optimality condition for centralized model-matching is given in
the following lemma.
\begin{lem}
\label{lem:optimality_conditions_centralized}
Suppose $\mathcal{T}\!\in\RHinf$ satisfies
Assumptions~B\ref{ass:Bfirst}--B\ref{ass:Blast}. Then the model-matching problem
 \begin{equation}
       \label{opt:centralized_model_match}
       \begin{aligned}
         \textup{minimize}\qquad
         & \bigl\| \mathcal{T}_{11} 
         + \mathcal{T}_{12} \mathcal{Q} \mathcal{T}_{21} \bigr\|_2 \\
         \textup{subject to}\qquad& \mathcal{Q} \in \Htwo
       \end{aligned}
     \end{equation}
     has a unique solution. Furthermore, $\mathcal{Q}$ is the
     minimizer of~\eqref{opt:centralized_model_match} if and only
     if
\begin{equation}
  \label{eqn:copt}
  \mathcal{T}_{12}^*\left(
    \mathcal{T}_{11} + \mathcal{T}_{12} \mathcal{Q} \mathcal{T}_{21}
  \right) \mathcal{T}_{21}^* \in \Htwo^\perp
\end{equation}
\end{lem}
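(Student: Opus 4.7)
The plan is to cast the problem as a classical nearest-point problem in the Hilbert space $\Htwo$. Since $\mathcal{T}\in\RHinf$ and $\mathcal{T}_{11}(\infty)=0$ by B\ref{ass:Bfirst}, we have $\mathcal{T}_{11}\in\RHtwo\subset\Htwo$, and the set of candidate residuals $\mathcal{M}\defeq\{\mathcal{T}_{12}\mathcal{Q}\mathcal{T}_{21} : \mathcal{Q}\in\Htwo\}$ is a linear subspace of $\Htwo$ because $\mathcal{T}_{12},\mathcal{T}_{21}\in\RHinf$ and multiplication by $\RHinf$ functions maps $\Htwo$ to $\Htwo$ boundedly. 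Existence of a minimizer, its uniqueness, and the orthogonality characterization~\eqref{eqn:copt} will all follow from the projection theorem once we show that $\mathcal{M}$ is closed and that the parameterization $\mathcal{Q}\mapsto\mathcal{T}_{12}\mathcal{Q}\mathcal{T}_{21}$ is injective.

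First I would establish closedness and injectivity via inner–outer factorization. Write $\mathcal{T}_{12}=\mathcal{T}_{12,i}\mathcal{T}_{12,o}$ and $\mathcal{T}_{21}=\mathcal{T}_{21,o}\mathcal{T}_{21,i}$. Assumption B\ref{ass:Bmid} makes $\mathcal{T}_{12}^*\mathcal{T}_{12}$ boundedly invertible on the extended imaginary axis, so $\mathcal{T}_{12,o}$ is square with inverse in $\RHinf$; similarly B\ref{ass:Blast} does the analogous work for $\mathcal{T}_{21,o}$. The substitution $\tilde{\mathcal{Q}}\defeq \mathcal{T}_{12,o}\mathcal{Q}\mathcal{T}_{21,o}$ is therefore a bounded bijection on $\Htwo$ with bounded inverse, so $\mathcal{M}=\mathcal{T}_{12,i}\,\Htwo\,\mathcal{T}_{21,i}$. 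Left and right multiplication by the inner factors is an isometry on $\Htwo$, hence $\mathcal{M}$ is closed and the composite parameterization is injective.

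The projection theorem now guarantees a unique nearest point $\mathcal{R}^\star\in\mathcal{M}$ to $-\mathcal{T}_{11}$; by injectivity, this yields the unique minimizing $\mathcal{Q}$. The variational characterization is that the residual is orthogonal to every element of $\mathcal{M}$:
\[
\bl\langle\mathcal{T}_{11}+\mathcal{T}_{12}\mathcal{Q}\mathcal{T}_{21},\,\mathcal{T}_{12}\tilde{\mathcal{Q}}\mathcal{T}_{21}\br\rangle = 0\quad\text{for all }\tilde{\mathcal{Q}}\in\Htwo.
\]
Moving $\mathcal{T}_{12}^*$ and $\mathcal{T}_{21}^*$ across the $\Ltwo$ inner product rewrites this as $\langle \mathcal{T}_{12}^*(\mathcal{T}_{11}+\mathcal{T}_{12}\mathcal{Q}\mathcal{T}_{21})\mathcal{T}_{21}^*,\tilde{\mathcal{Q}}\rangle_{\Ltwo}=0$ for every $\tilde{\mathcal{Q}}\in\Htwo$, which by definition of the orthogonal complement is precisely condition~\eqref{eqn:copt}. (To handle complex scalars, one scales $\tilde{\mathcal{Q}}$ by $\i$ to convert the real part condition into a full inner-product condition.)

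The main obstacle I expect is the closed-range step. The condition at $\omega=\infty$ in B\ref{ass:Bmid} and B\ref{ass:Blast} is essential: without it the outer factors need only be invertible in $\RLinf$ and fail to be biproper, breaking the $\Htwo$-to-$\Htwo$ isomorphism argument and leaving the image $\mathcal{M}$ potentially non-closed. Once this subtlety is handled, the rest is standard Hilbert space geometry and a routine rearrangement of inner products.
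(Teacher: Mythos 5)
Your proof is correct and follows essentially the same route as the paper: both apply the Hilbert projection theorem to the image subspace $\{\mathcal{T}_{12}\mathcal{Q}\mathcal{T}_{21} : \mathcal{Q}\in\Htwo\}$ and obtain the orthogonality condition~\eqref{eqn:copt} by moving the adjoints across the $\Ltwo$ inner product. The only difference is that where the paper simply asserts that the operator $\mathcal{Q}\mapsto\mathcal{T}_{12}\mathcal{Q}\mathcal{T}_{21}$ is bounded below (hence maps the closed subspace $\Htwo$ to a closed subspace) as a consequence of B2--B3, you justify the same closed-range step explicitly via inner--outer factorization.
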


\begin{proof}
The Hilbert projection theorem (see, for example~\cite{luenberger})
states that if $H$ is a Hilbert space, $S\subseteq H$
is a closed subspace, and $b\in H$, then there exists a unique $x\in S$
that minimizes $\norm{x-b }_2$. Furthermore, a necessary and
sufficient condition for optimality of $x$ is that $x\in S$ and $x-b
\in S^\perp$.  Given a bounded linear map $A:H \rightarrow H$ which is
bounded below, define $T \defeq \{ Ax\,\vert\, x\in S\}$. Then $T$ is closed
and the projection theorem implies that the problem
\begin{align*}
  \textup{minimize} \qquad & \norm{Ax-b}_2\\
  \textup{subject to}\qquad  & x \in S
\end{align*}
has a unique solution. Furthermore, $x$ is optimal if and only if
$x\in S$ and $A^*(Ax-b)\in S^\perp$.  This result directly implies the
lemma, by setting $H \defeq \Ltwo$ and $S \defeq \Htwo$, and defining
the bounded linear operator~$A:\Ltwo\to \Ltwo$ by
$A\mathcal{Q}\defeq\mathcal{T}_{12}\mathcal{Q}\mathcal{T}_{21}$, with
adjoint
$A^*\mathcal{P}\defeq\mathcal{T}_{12}^*\mathcal{P}\mathcal{T}_{21}^*$.
The operator $A$ is bounded below as a consequence of
Assumptions~B\ref{ass:Bmid}--B\ref{ass:Blast}.
\end{proof}

We now develop an optimality condition similar to~\eqref{eqn:copt},
but for the two-player structured model-matching problem. 

\begin{lem}\label{lem:optimality_conditions_2p}   
  Suppose $\mathcal{T}\!\in\RHinf$ satisfies
    Assumptions~B\ref{ass:Bfirst}--B\ref{ass:Blast}. Then the
  two-player model-matching problem
  \begin{equation}
    \label{opt:2p_model_match}
    \begin{aligned}
      \textup{minimize}\qquad& \normm{
        \mathcal{T}_{11} + \mathcal{T}_{12} \mathcal{Q}\mathcal{T}_{21}
      }_2 \\
      \textup{subject to}\qquad& \mathcal{Q} \in \Lower(\Htwo)
    \end{aligned}
  \end{equation}
  has a unique solution. Furthermore,
  $\mathcal{Q}$ is the minimizer of~\eqref{opt:2p_model_match} if and only if
  \begin{equation}\label{optcond_2p}
    \mathcal{T}_{12}^*\left(
    \mathcal{T}_{11} + \mathcal{T}_{12} \mathcal{Q} \mathcal{T}_{21}
    \right) \mathcal{T}_{21}^* \in
    \bmat{ \Htwo^\perp & \Ltwo \\ \Htwo^\perp & \Htwo^\perp }
  \end{equation}
\end{lem}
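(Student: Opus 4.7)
The plan is to apply the Hilbert projection theorem in exactly the same style as Lemma~\ref{lem:optimality_conditions_centralized}, but with the feasible set replaced by $\Lower(\Htwo,m,k)$. First, I observe that $\Lower(\Htwo,m,k)$ is a closed subspace of $\Ltwo^{(m_1+m_2)\times(k_1+k_2)}$, since it is the intersection of the closed subspace $\Htwo^{(m_1+m_2)\times(k_1+k_2)}$ with the finite intersection of closed subspaces obtained by imposing that the $(1,2)$ block is zero. Defining the bounded linear map $A:\Ltwo\to\Ltwo$ by $A\mathcal{Q}\defeq\mathcal{T}_{12}\mathcal{Q}\mathcal{T}_{21}$ (with adjoint $A^*\mathcal{P}=\mathcal{T}_{12}^*\mathcal{P}\mathcal{T}_{21}^*$), Assumptions B\ref{ass:Bmid}--B\ref{ass:Blast} ensure $A$ is bounded below, so $A(\Lower(\Htwo,m,k))$ is also a closed subspace. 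Applying the projection theorem to $b=-\mathcal{T}_{11}$ yields existence and uniqueness of the minimizer, and gives the optimality condition
\[
\mathcal{Q}\in\Lower(\Htwo,m,k)\quad\text{and}\quad A^*\bl(\mathcal{T}_{11}+\mathcal{T}_{12}\mathcal{Q}\mathcal{T}_{21}\br) \in \Lower(\Htwo,m,k)^\perp.
\]

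The main task is therefore to identify the orthogonal complement of $\Lower(\Htwo,m,k)$ as a subspace of $\Ltwo^{(m_1+m_2)\times(k_1+k_2)}$. Because the $\Ltwo$ inner product splits block-wise (a trace of a sum of block products), orthogonality against $\Lower(\Htwo,m,k)$ reduces to orthogonality in each of the four blocks separately. The $(1,1)$, $(2,1)$, and $(2,2)$ blocks of $\Lower(\Htwo,m,k)$ range over all of $\Htwo$ (of appropriate dimensions), so the corresponding blocks of the complement must lie in $\Htwo^\perp$. The $(1,2)$ block of $\Lower(\Htwo,m,k)$ is identically zero, so the corresponding block of the complement is unconstrained, i.e.\ lies in $\Ltwo$. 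Thus
\[
\Lower(\Htwo,m,k)^\perp = \bmat{\Htwo^\perp & \Ltwo \\ \Htwo^\perp & \Htwo^\perp}
\]
which is precisely~\eqref{optcond_2p}.

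The main obstacle, such as it is, is being careful with the block-wise decomposition of the inner product and confirming that the image $A(\Lower(\Htwo,m,k))$ is closed so that the projection theorem applies cleanly; this follows from $A$ being bounded below under B\ref{ass:Bmid}--B\ref{ass:Blast}, exactly as in the centralized proof. All other steps are direct adaptations of Lemma~\ref{lem:optimality_conditions_centralized}.
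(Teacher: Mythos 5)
Your proof is correct and follows the same route as the paper, which simply invokes the Hilbert projection theorem argument of Lemma~\ref{lem:optimality_conditions_centralized} with $S=\Lower(\Htwo)$ in place of $\Htwo$. Your explicit blockwise identification of $\Lower(\Htwo)^\perp$ as $\bmat{\Htwo^\perp & \Ltwo \\ \Htwo^\perp & \Htwo^\perp}$ is exactly the step the paper leaves implicit, and it is carried out correctly.
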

\begin{proof}
  The proof follows exactly that of Lemma~\ref{lem:optimality_conditions_centralized}.
\end{proof}

Each of these model-matching problems has a rational transfer function solution,
as we now show.

\begin{lem}\label{lem:centralized_model_matching}
Suppose $\mathcal{T}\in\RHinf$ satisfies
Assumptions~B\ref{ass:Bfirst}--B\ref{ass:Blast} and has a minimal
joint realization given by
\begin{equation}
  \label{eqn:t}
  \bmat{ \mathcal{T}_{11} & \mathcal{T}_{12} \\
    \mathcal{T}_{21} & 0 } =
  \left[\begin{array}{c|cc}
      \bar A & \bar B_1 & \bar B_2 \\ \hlinet
      \bar C_1 & 0 & \bar D_{12} \\
      \bar C_2 & \bar D_{21} & 0
    \end{array}\right]
\quad 
\if\MODE2
	\begin{array}{l}\text{where $\bar A$ is}\\\text{Hurwitz.}\end{array}
\else
	\text{where $\bar A$ is Hurwitz.}
\fi
\end{equation}
The solution to the centralized model-matching problem~\eqref{opt:centralized_model_match} is
rational, and has realization
\begin{equation}\label{eq:Qoptcen}
\mathcal{Q}_\textup{opt} = \left[\begin{array}{cc|c}
 \bar A_K & \bar B_2 \bar K & 0 \\
 0 & \bar A_L & -\bar L \\ \hlinet
 \bar K & \bar K & 0
\end{array}\right]
\end{equation}
where $\bar K$, $\bar L$ are defined in \eqref{tp:ares}, and $\bar A_K$, $\bar A_L$ are defined in~\eqref{a:ahat} with all state-space parameters replaced by their barred counterparts.
\end{lem}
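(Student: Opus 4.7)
The plan is to verify directly that the realization in~\eqref{eq:Qoptcen} satisfies the optimality condition~\eqref{eqn:copt} from Lemma~\ref{lem:optimality_conditions_centralized}. Since that lemma already guarantees existence and uniqueness of the minimizer in $\Htwo$, it suffices to exhibit a single rational candidate in $\Htwo$ that satisfies the orthogonality condition; the proposed $\mathcal{Q}_\textup{opt}$ in~\eqref{eq:Qoptcen} is manifestly rational and stable (its $A$ matrix is block-triangular with Hurwitz diagonal blocks $\bar A_K$ and $\bar A_L$), so the whole proof reduces to establishing~\eqref{eqn:copt}.

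First I would form a state-space realization of $\mathcal{E} \defeq \mathcal{T}_{11}+\mathcal{T}_{12}\mathcal{Q}_\textup{opt}\mathcal{T}_{21}$ by cascading~\eqref{eqn:t} with~\eqref{eq:Qoptcen}. After a similarity transformation that exposes the observer/regulator structure (the standard change of coordinates in which one of the controller state groups is replaced by its error with respect to the plant state), the realization collapses to
\[
\mathcal{E} = \left[\begin{array}{cc|c}
\bar A_K & -\bar B_2 \bar K & \bar B_1 \\
0 & \bar A_L & \bar B_1 + \bar L \bar D_{21} \\ \hlinet
\bar C_1 + \bar D_{12}\bar K & -\bar D_{12}\bar K & 0
\end{array}\right]
\]
which is exactly the classical closed-loop $T_{zw}$ for the centralized $\mathcal{H}_2$ problem with observer-based controller, as in~\cite[Ch.~14]{zdg}.

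Next I would form $\mathcal{T}_{12}^*\mathcal{E}\,\mathcal{T}_{21}^*$ in state space and show that it lies in $\Htwo^\perp$. Using the ARE identities for $(\bar X,\bar K)=\are(\bar A,\bar B_2,\bar C_1,\bar D_{12})$ and $(\bar Y,\bar L^\tp)=\are(\bar A^\tp,\bar C_2^\tp,\bar B_1^\tp,\bar D_{21}^\tp)$, and applying similarity transformations built from $\bar X$ and $\bar Y$ analogous to the one used in~\cite[Lem.~14.3]{zdg}, the Hurwitz modes $\bar A_K$ and $\bar A_L$ cancel against the anti-Hurwitz modes contributed by $\mathcal{T}_{12}^*$ and $\mathcal{T}_{21}^*$. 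What remains is a realization whose $A$ matrix is anti-Hurwitz, which proves membership in $\Htwo^\perp$.

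The main obstacle is the state-space bookkeeping in the orthogonality verification: one must carry out the similarity transformations that cancel stable against anti-stable modes while tracking feedthrough terms produced by $\mathcal{T}_{12}^*$ and $\mathcal{T}_{21}^*$. This is, however, the same calculation that underpins classical centralized $\mathcal{H}_2$ theory, so it can be imported from~\cite{zdg} with only notational adaptation. Once~\eqref{eqn:copt} is established, Lemma~\ref{lem:optimality_conditions_centralized} delivers both optimality and uniqueness in $\Htwo$, and rationality is evident from~\eqref{eq:Qoptcen} itself.
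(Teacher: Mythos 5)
Your proposal follows the same route as the paper: the paper's proof likewise verifies that Assumptions A1--A6 hold for the barred realization (so that $\bar K$, $\bar L$ are well defined) and then substitutes \eqref{eq:Qoptcen} into the orthogonality condition \eqref{eqn:copt}, invoking Lemma~\ref{lem:optimality_conditions_centralized} for optimality and uniqueness. Your intermediate collapsed realization of $\mathcal{T}_{11}+\mathcal{T}_{12}\mathcal{Q}_\textup{opt}\mathcal{T}_{21}$ is correct, and the remaining Gramian/ARE cancellation argument is exactly the standard centralized computation the paper implicitly imports from \cite{zdg}.
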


\begin{proof}
It is straightforward to verify that Assumptions
A\ref{ass:Afirst}--A\ref{ass:Alast} hold for the
realization~\eqref{eqn:t}. Optimality is verified by
substituting~\eqref{eq:Qoptcen} directly into the optimality
condition~\eqref{eqn:copt} and applying
Lemma~\ref{lem:optimality_conditions_centralized}.
\end{proof}

\begin{lem}\label{lem:rationality_2p}   
  Suppose $\mathcal{T}\in\RHinf$ satisfies Assumptions~B\ref{ass:Bfirst}--B\ref{ass:Blast}. Then the optimal solution
  of the two-player model-matching problem~\eqref{opt:2p_model_match}
  is rational.
\end{lem}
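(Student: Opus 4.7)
The plan is to convert the structured two-player problem into an equivalent centralized model-matching problem by vectorization, and then invoke the rationality guarantee established for the centralized case in Lemma~\ref{lem:centralized_model_matching}.

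First I would parametrize the three nonzero blocks of $\mathcal{Q}\in\Lower(\Htwo)$ as free $\Htwo$ variables and stack them into a single column-vector parameter $q$. Since the sparsity constraint is a linear selection on the entries of $\vecc(\mathcal{Q})$, there is a constant full-column-rank selection matrix $M$ such that $\vecc(\mathcal{Q})=Mq$. The identity $\vecc(\mathcal{T}_{12}\mathcal{Q}\mathcal{T}_{21})=(\mathcal{T}_{21}^\tp\otimes\mathcal{T}_{12})\vecc(\mathcal{Q})$ then rewrites the objective as
\[
\|\vecc(\mathcal{T}_{11})+(\mathcal{T}_{21}^\tp\otimes\mathcal{T}_{12})Mq\|_2,
\]
which is a centralized model-matching problem whose right factor is the identity and whose left factor $(\mathcal{T}_{21}^\tp\otimes\mathcal{T}_{12})M$ is rational and stable.

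Next I would verify that the reformulated problem inherits the regularity properties needed to apply Lemma~\ref{lem:centralized_model_matching}. Assumption~B\ref{ass:Bfirst} carries over since $\vecc(\mathcal{T}_{11})(\infty)=0$. Full column rank of the left factor on $\i\R\cup\{\infty\}$ follows from Assumptions~B\ref{ass:Bmid}--B\ref{ass:Blast}, because full column rank is preserved by the Kronecker product and by post-multiplication by the constant matrix $M$. The right-factor condition is trivial since the right factor is the identity. Lemma~\ref{lem:centralized_model_matching} then produces a rational optimal $q$, and inverting the vectorization map recovers a rational optimal $\mathcal{Q}$.

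The main technical hurdle I anticipate is the bookkeeping associated with the minimal-realization hypothesis in Lemma~\ref{lem:centralized_model_matching}: the joint realization built from Kronecker products of the given realizations of $\mathcal{T}_{ij}$ is not in the clean form of~\eqref{eqn:t} and is generally not minimal, so direct plug-in of~\eqref{eq:Qoptcen} is awkward. I would handle this either by first passing to a minimal realization of the reduced data, or by going back to the Hilbert-projection argument used in Lemma~\ref{lem:optimality_conditions_centralized}: the operator $q\mapsto(\mathcal{T}_{21}^\tp\otimes\mathcal{T}_{12})Mq$ is bounded below on $\Htwo$ by the rank assumptions, so the projection theorem yields a unique optimum $q$, and any rational candidate satisfying the centralized orthogonality condition analogous to~\eqref{eqn:copt} must coincide with it. Producing such a rational candidate from the explicit state-space formula in Lemma~\ref{lem:centralized_model_matching} applied to a minimal realization of the reduced data then completes the argument.
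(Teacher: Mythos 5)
Your proposal is correct and follows essentially the same route as the paper: vectorize the objective, encode the sparsity of $\mathcal{Q}$ with a constant selection matrix so that $\vecc(\mathcal{Q})=Eq$, check that Assumptions~B1--B3 are inherited by the Kronecker system, and invoke Lemma~\ref{lem:centralized_model_matching} to get a rational optimal $q$. Your extra care about passing to a minimal realization of the Kronecker data before applying the centralized formula is a point the paper leaves implicit, and your fix is the right one.
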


\begin{proof}
  Since the $\mathcal{H}_2$-norm is invariant under rearrangement of
  matrix elements, we may vectorize~\cite{horn_1994} the contents of
  the norm in \eqref{opt:2p_model_match} to obtain
  \begin{equation}\label{asdf4}
    \begin{aligned}
      \textup{minimize}\qquad& \normm{ \vecc(\mathcal{T}_{11}) +
      (\mathcal{T}_{21}^\tp \otimes 
        \mathcal{T}_{12}) \vecc(\mathcal{Q})}_2 \\
      \textup{subject to}\qquad& \mathcal{Q} \in \Lower(\Htwo)
    \end{aligned}
  \end{equation}
  Due to the sparsity pattern of $\mathcal{Q}$, some entries of
  $\vecc(\mathcal{Q})$ will be zero. Let $E$ be the identity matrix
  with columns removed corresponding to these zero-entries.  Then
  $\mathcal{Q}\in\Lower(\Htwo)$ if and only if $\vecc(\mathcal{Q}) = E q$ 
  for some  $q \in \Htwo$. Then~\eqref{asdf4} is 
  equivalent to
  \begin{equation}
    \label{asdf3}
    \begin{aligned}
      \textup{minimize}\qquad
      & \normm{\vecc(\mathcal{T}_{11}) +
       (\mathcal{T}_{21}^\tp \otimes 
        \mathcal{T}_{12})E q}_2 \\
      \textup{subject to}\qquad& q \in \Htwo
    \end{aligned}
  \end{equation}
  This is a centralized model-matching problem of the form
  \eqref{opt:centralized_model_match}. Using
  standard properties of the Kronecker product, one may verify that
  Assumptions~B\ref{ass:Bfirst}--B\ref{ass:Blast} are inherited by
    \begin{equation}
    \label{kronkron}
    \bmat{\vecc(\mathcal{T}_{11}) &
      (\mathcal{T}_{21}^\tp \otimes \mathcal{T}_{12})E \\
      1 & 0}
    \end{equation}
    By Lemma~\ref{lem:centralized_model_matching} the 
    optimal $q$ is rational, and hence the result follows.
\end{proof}


We now prove that under the assumptions of Theorem~\ref{thm:main}, the
two-player output-feedback problem~\eqref{opt:2p_output_feedback} has
a unique solution.

\medskip
\begin{proofe}{Proof of Theorem~\ref{thm:main}, Part (i).}
  We apply Corollary~\ref{cor:optimization_model_match} to reduce the
  optimal control problem to the two-player model-matching problem
  over $\RHinf$.  It is straightforward to check that under
  Assumptions A\ref{ass:Afirst}--A\ref{ass:Alast}, the particular
  $\mathcal{T}$ given in~\eqref{eqn:T} satisfies
  B\ref{ass:Bfirst}--B\ref{ass:Blast}.  These assumptions imply that
  $\mathcal{T}_{12}(\infty)$ is left-invertible and
  $\mathcal{T}_{21}(\infty)$ is right-invertible. Thus, the optimal
  solution $\mathcal{Q}_\textup{opt}\in\RHinf$ must have
  $\mathcal{Q}(\infty)=0$ to ensure that $\| \mathcal{T}_{11} +
  \mathcal{T}_{12} \mathcal{Q} \mathcal{T}_{21} \|_2$ is finite.
  Therefore we may replace the constraint that $\mathcal{Q} \in
  \Lower(\RHinf)$ as in Corollary~\ref{cor:optimization_model_match}
  with the constraint that $\mathcal{Q}\in\Lower(\RHtwo)$ without any
  loss of generality. Existence and uniqueness now follows from
  Lemma~\ref{lem:optimality_conditions_2p}.  Rationality follows from
  Lemma~\ref{lem:rationality_2p}.
\end{proofe}
The vectorization approach of Lemma~\ref{lem:optimality_conditions_2p}
effectively reduces the structured model-matching
problem~\eqref{opt:2p_model_match} to a centralized model-matching
problem, which has a known solution, shown later in
Lemma~\ref{lem:centralized_model_matching}.  Unfortunately,
constructing the solution in this manner is not feasible in practice
because it requires finding a state-space realization of the Kronecker
system~\eqref{kronkron}. This leads to a dramatic increase in state
dimension, and requires solving a large Riccati equation. Furthermore,
we lose any physical interpretation of the states, as mentioned in
Section~\ref{sec:intro}.


\subsection{Formulae for the optimal controller}

\begin{proofe}{Proof of Theorem~\ref{thm:main}, Part (ii).}
  Suppose the linear equations~\eqref{tp:phi}--\eqref{tp:psi} are satisfied by some $\Phi$, $\Psi$ and the proposed~$\mathcal{K}_\textup{opt}$ has been defined according to~\eqref{eqn:kopt}. We will make the following simplifying assumption.
    \begin{equation}\label{ass:LK}
    L_1 = M\qquad\text{and}\qquad K_2 = J
    \end{equation}
  where $L_1$ and $K_2$ were originally defined in Lemma~\ref{lem:conditions_stabilizability}, and $M$ and $J$ are defined in~\eqref{tp:ares}.
    There is no loss of generality in choosing this particular parameterization for~$\mathcal{T}$, but it leads to simpler algebra.
To verify optimality of
  $\mathcal{K}_\textup{opt}$, we use the parameterization of
  Theorem~\ref{thm:stabilizing_controllers} to find the
  $\mathcal{Q}_\textup{opt}$ that generates
  $\mathcal{K}_\textup{opt}$. The computation yields
\begin{align}
\label{Qopt}\notag
  \mathcal{Q}_\textup{opt} &= F_u(\mathcal{J}_d^{-1},\mathcal{K}_\textup{opt}) \\&=
  \left[\begin{array}{ccc|c}
      A_K & -\hat L C_2 & 0  & \hat L \\
      0 & \hat A & -B_2\hat K &  L_d - \hat L\\
      0 & 0 & A_L & L_d- L \\ \hlinet
      K_d - K & K_d-\hat K & \hat K & 0
    \end{array}\right] \notag\\ 
    &= \left[\begin{array}{cc|c}
          A_K & 0  & \hat L \\
          0 & A_L & L_d - L \\ \hlinet
          K_d - K & \hat K & 0
        \end{array}\right]
\end{align}
where $A_K$, $A_L$, $\hat A$ are defined in~\eqref{a:ahat}. The last simplification in~\eqref{Qopt} comes thanks to~\eqref{ass:LK}.
The sparsity structure of the gains $\hat L$ and $\hat K$
gives $\mathcal{Q}_\textup{opt}$ a block-lower-triangular
structure. Note also that $A_K$, $A_L$, $\hat A$ are Hurwitz, so $\mathcal{Q}_\textup{opt} \in \Lower(\RHtwo)$. It follows from Theorem~\ref{thm:stabilizing_controllers} that $\mathcal{K}_\textup{opt}$ is an admissible stabilizing controller.

We now directly verify that~$\mathcal{Q}_\textup{opt}$ defined in~\eqref{Qopt} satisfies the optimality condition~\eqref{optcond_2p} for the model-matching problem characterized by the $\mathcal{T}$ given in~\eqref{eqn:T}.
The closed-loop map $\mathcal{T}_{11} + \mathcal{T}_{12} \mathcal{Q}_\textup{opt} \mathcal{T}_{21}$ has a particularly nice expression. Substituting in \eqref{Qopt} and \eqref{eqn:T} and simplifying, we obtain
\if\MODE2
\begin{align*}
\mathcal{T}_{11} + \mathcal{T}_{12} \mathcal{Q}_\textup{opt} \mathcal{T}_{21}
&= \stsp{A_\cl }{ B_\cl }{ C_\cl }{0}
\\
&\hspace{-25mm}=\left[\begin{array}{ccc|c}
A_K & -\hat L C_2 & 0 & -\hat L D_{21} \\
0 & \hat A & -B_2\hat K & B_1+\hat L D_{21} \\
0 & 0 & A_L & B_1 + L D_{21} \\ \hlinet
C_1+D_{12}K & C_1+D_{12}\hat K & -D_{12}\hat K & 0
\end{array}\right]
\end{align*}
\else
\begin{align*}
\mathcal{T}_{11} + \mathcal{T}_{12} \mathcal{Q}_\textup{opt} \mathcal{T}_{21}
&= \stsp{A_\cl }{ B_\cl }{ C_\cl }{0}
\\
&=\left[\begin{array}{ccc|c}
A_K & -\hat L C_2 & 0 & -\hat L D_{21} \\
0 & \hat A & -B_2\hat K & B_1+\hat L D_{21} \\
0 & 0 & A_L & B_1 + L D_{21} \\ \hlinet
C_1+D_{12}K & C_1+D_{12}\hat K & -D_{12}\hat K & 0
\end{array}\right]
\end{align*}
\fi
Note that $K_d$ and $L_d$ are now absent, as the optimal closed-loop map does not depend on the choice of parameterization. The left-hand side of the optimality condition~\eqref{optcond_2p} is therefore
\if\MODE2
\begin{multline*}
  \mathcal{T}_{12}^*\left(
    \mathcal{T}_{11} + \mathcal{T}_{12} \mathcal{Q}_\textup{opt} \mathcal{T}_{21}
  \right) \mathcal{T}_{21}^* \\
  = \left[\begin{array}{ccc|c}
  -A_{Kd}^\tp & -C_{Kd}^\tp C_\cl & 0 & 0 \\
  0 & A_\cl & B_\cl B_{Ld}^\tp & B_\cl D_{21}^\tp \\ 
  0 & 0 & -A_{Ld}^\tp & -C_2^\tp \\ \hlinet
  B_2^\tp & D_{12}^\tp C_\cl & 0 & 0
  \end{array}\right]
\end{multline*}
\else
\begin{equation*}
  \mathcal{T}_{12}^*\left(
    \mathcal{T}_{11} + \mathcal{T}_{12} \mathcal{Q}_\textup{opt} \mathcal{T}_{21}
  \right) \mathcal{T}_{21}^* 
  = \left[\begin{array}{ccc|c}
  -A_{Kd}^\tp & -C_{Kd}^\tp C_\cl & 0 & 0 \\
  0 & A_\cl & B_\cl B_{Ld}^\tp & B_\cl D_{21}^\tp \\ 
  0 & 0 & -A_{Ld}^\tp & -C_2^\tp \\ \hlinet
  B_2^\tp & D_{12}^\tp C_\cl & 0 & 0
  \end{array}\right]
\end{equation*}
\fi
Apply Lemmas~\ref{lem:lyapY} and~\ref{lem:lyapX} to define~$\hat X$ and~$\hat Y$. Now perform the state transformation $x\mapsto Tx$ with
\[
T \defeq \bmat{ I & -\bmat{X & \hat X & 0} & 0 \\
0 & I & -\bmat{0 \\ \hat Y \\ Y} \\
0 & 0 & I }
\]
At this point we make use of the Riccati equations~\eqref{tp:ares}, as
well as the Sylvester equations~\eqref{tp:phi}--\eqref{tp:psi} via the
identities~\eqref{ee:lyapY}--\eqref{eg1}. This leads to a state space
realization with $5n$ states, and sparsity pattern of the form
\[
\Omega = \left[\begin{array}{ccccc|c}
-A_{Kd}^\tp & 0 & E_1 \star & E_1 \star & \star & \star \\
0 & \star & \star & 0 & \star E_2^\tp & \star \\
0 & 0 & \hat A & \star & \star E_2^\tp & \star E_2^\tp  \\
 0 & 0 & 0 & \star & 0 & 0 \\
0 & 0 & 0 & 0 & -A_{Ld}^\tp & -C_2^\tp \\ \hlinet
B_2^\tp & 0 & E_1 \star & \star & \star & 0
\end{array}\right]
\]
where $\star$ denotes a matrix whose value is unimportant.
The second and fourth states are unobservable and uncontrollable, respectively. Removing them, we are left with
\[
\Omega = \left[\begin{array}{ccc|c}
-A_{Kd}^\tp & E_1 \star & \star & \star \\
0 & \hat A & \star E_2^\tp & \star E_2^\tp  \\
0 & 0 & -A_{Ld}^\tp & -C_2^\tp \\ \hlinet
B_2^\tp & E_1 \star & \star & 0
\end{array}\right]
\]
Because of the block-triangular structure of $A$, $B_2$, $C_2$ and the block-diagonal structure of $K_d$, $L_d$, it is straightforward to check that
\[
\Omega E_1 =
\left[\begin{array}{cc|c}
-A_{Kd}^\tp & \star & \star \\
0 & -A_M^\tp & -C_{11}^\tp \\ \hlinet
B_2^\tp & \star &  0
\end{array}\right] \in \Htwo^\perp
\]
Similarly, $E_2^\tp \Omega \in \Htwo^\perp$, and therefore
\[
  \mathcal{T}_{12}^*\left(
    \mathcal{T}_{11} + \mathcal{T}_{12} \mathcal{Q}_\textup{opt} \mathcal{T}_{21}
  \right) \mathcal{T}_{21}^* \in
  \bmat{ \Htwo^\perp & \Ltwo \\ \Htwo^\perp & \Htwo^\perp }
\]
So by Lemma~\ref{lem:optimality_conditions_2p},
$\mathcal{Q}_\textup{opt}$ is the solution to the model-matching
problem~\eqref{opt:2p_model_match}, and therefore must also be a
minimizer of~\eqref{opt:2pmm_gen}. It follows from
Corollary~\ref{cor:optimization_model_match} that
$\mathcal{K}_\textup{opt}$ is the unique optimal controller for the
two-player output-feedback problem~\eqref{opt:2p_output_feedback}.
\end{proofe}


\subsection{Existence of solutions to the Sylvester equations}
\label{sec:origins}

We now show that the linear equations~\eqref{tp:phi}--\eqref{tp:psi}
have a solution.  First, we show that existence of the optimal
controller implies a certain fixed point property.  A simple necessary
condition for optimality is person-by-person optimality; by fixing any
part of the optimal $\mathcal{Q}$ and optimizing over the rest, we
cannot improve upon the optimal cost.

\begin{lem}\label{lem:coupled_problems}  
  Suppose $\mathcal{T}$ is given by~\eqref{eqn:T} and Assumptions A\ref{ass:Afirst}--A\ref{ass:Alast} hold.  Further suppose that
  $\mathcal{Q}\in\Lower(\RHtwo)$ and
  \[
  \mathcal{Q} \defeq \bmat{\mathcal{Q}_{11} & 0 \\
    \mathcal{Q}_{21} & \mathcal{Q}_{22}}
  \]
  Define the following partial optimization functions.
  \begin{align}
  g_{1}(\mathcal{Q}_{11}) &\defeq \argmin_{\mathcal{Q}_{22}\in \RHtwo} \left( \min_{\mathcal{Q}_{21}\in \RHtwo} \normm{\mathcal{T}_{11} +  
        \mathcal{T}_{12} \mathcal{Q} \mathcal{T}_{21} }_2 \right) \label{opt:Q11fixed} \\
  g_{2}(\mathcal{Q}_{22}) &\defeq \argmin_{\mathcal{Q}_{11}\in \RHtwo} \left( \min_{\mathcal{Q}_{21}\in \RHtwo} \normm{\mathcal{T}_{11} +  
        \mathcal{T}_{12} \mathcal{Q} \mathcal{T}_{21} }_2 \right)\label{opt:Q22fixed}
  \end{align}
  If $\mathcal{Q}$ is optimal for the two-player model-matching problem~\eqref{opt:2p_model_match}, then
  \[
  g_2(g_1(\mathcal{Q}_{11}))=\mathcal{Q}_{11}
  \qquad\text{and}\qquad
  g_1(g_2(\mathcal{Q}_{22}))=\mathcal{Q}_{22}
  \]
\end{lem}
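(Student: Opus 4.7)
The proof plan is a standard person-by-person optimality argument combined with uniqueness of a restricted centralized problem. By Lemma~\ref{lem:optimality_conditions_2p}, let $\mathcal{Q}^\star$ be the unique minimizer of~\eqref{opt:2p_model_match}, with blocks $\mathcal{Q}_{11}^\star$, $\mathcal{Q}_{21}^\star$, $\mathcal{Q}_{22}^\star$.

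The key observation is that the inner (joint) minimizations that define $g_1$ and $g_2$ are themselves centralized $\Htwo$ model-matching problems of the form~\eqref{opt:centralized_model_match}. Using the decomposition $\mathcal{Q} = E_1 \mathcal{Q}_{11} E_1^\tp + E_2 \bmat{\mathcal{Q}_{21} & \mathcal{Q}_{22}}$, fixing $\mathcal{Q}_{11}$ turns the objective into
\[
\bl\| \tilde{\mathcal{T}}_{11} + (\mathcal{T}_{12}E_2)\bmat{\mathcal{Q}_{21} & \mathcal{Q}_{22}}\mathcal{T}_{21} \br\|_2
\quad\text{where}\quad
\tilde{\mathcal{T}}_{11} \defeq \mathcal{T}_{11} + \mathcal{T}_{12} E_1 \mathcal{Q}_{11} E_1^\tp \mathcal{T}_{21},
\]
which is centralized in the free variable $\bmat{\mathcal{Q}_{21} & \mathcal{Q}_{22}}\in\RHtwo$. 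The sub-block $\mathcal{T}_{12}E_2$ inherits full column rank on $\i\R\cup\{\infty\}$ from $\mathcal{T}_{12}$, and the remaining hypotheses of Lemma~\ref{lem:optimality_conditions_centralized} pass through, so the joint minimizer over $(\mathcal{Q}_{21},\mathcal{Q}_{22})$ is unique; in particular, $g_1$ is well-defined and single-valued. A parallel decomposition with $\mathcal{Q}_{22}$ fixed expresses the free variables as the column stack $\bmat{\mathcal{Q}_{11}^\tp & \mathcal{Q}_{21}^\tp}^\tp$ multiplied on the right by $E_1^\tp \mathcal{T}_{21}$, yielding another centralized problem, and the same reasoning gives that $g_2$ is single-valued.

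Next I would argue that $g_1(\mathcal{Q}_{11}^\star) = \mathcal{Q}_{22}^\star$. The partial minimum evaluated at $\mathcal{Q}_{11}^\star$ cannot exceed the full 2p minimum because $(\mathcal{Q}_{21}^\star, \mathcal{Q}_{22}^\star)$ is feasible for the restricted problem; conversely it cannot fall below the full minimum, since any candidate $(\mathcal{Q}_{21},\mathcal{Q}_{22})$ for the restricted problem yields, together with $\mathcal{Q}_{11}^\star$, an admissible point of~\eqref{opt:2p_model_match} which $\mathcal{Q}^\star$ already minimizes. Hence the partial minimum equals the 2p minimum, and uniqueness of the partial minimizer forces it to coincide with $(\mathcal{Q}_{21}^\star, \mathcal{Q}_{22}^\star)$; reading off the second coordinate gives $g_1(\mathcal{Q}_{11}^\star)=\mathcal{Q}_{22}^\star$. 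The identical argument with the roles of the 11- and 22-blocks swapped gives $g_2(\mathcal{Q}_{22}^\star) = \mathcal{Q}_{11}^\star$, and composing yields $g_2(g_1(\mathcal{Q}_{11}^\star)) = \mathcal{Q}_{11}^\star$. The second identity follows symmetrically.

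The main obstacle is verifying the rank hypothesis of Lemma~\ref{lem:optimality_conditions_centralized} for the partial problem defining $g_2$: one must check that the row sub-block $E_1^\tp \mathcal{T}_{21}$ retains full row rank on $\i\R\cup\{\infty\}$. This requires a short argument using the explicit realization~\eqref{eqn:T} together with the block-triangular structure~\eqref{a:tri_form} of $(A, B_1, C_2, D_{21})$ and Assumption A\ref{ass:Alast} restricted to the upstream subsystem; the corresponding property for $g_1$ is immediate from Assumptions B\ref{ass:Bmid}--B\ref{ass:Blast}. Once this rank statement is in hand, the remainder of the proof is routine.
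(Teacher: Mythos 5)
Your proof is correct and follows essentially the same route as the paper's: the paper's own argument is a three-line version of yours, asserting that the partial problems have unique minimizers, observing that optimality of $\mathcal{Q}$ gives $\mathcal{Q}_{22}=g_1(\mathcal{Q}_{11})$ and $\mathcal{Q}_{11}=g_2(\mathcal{Q}_{22})$, and substituting one identity into the other. The extra detail you supply (the centralized reformulation of the partial problems via $\mathcal{T}_{12}E_2$ and $E_1^\tp\mathcal{T}_{21}$, and the feasibility-sandwich argument) is precisely what the paper leaves implicit.
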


\begin{proof}
Under the stated assumptions, the optimization
problems~\eqref{opt:Q11fixed}--\eqref{opt:Q22fixed} have unique
optimal solutions.  If $\mathcal{Q}$ is optimal, then clearly we have
$\mathcal{Q}_{22}=g_1(\mathcal{Q}_{11})$ and $\mathcal{Q}_{11} =
g_2(\mathcal{Q}_{22})$. The result follows by substituting one
identity into the other.
\end{proof}

An alternative way of stating Lemma~\ref{lem:coupled_problems} is that the optimal $\mathcal{Q}_{11}$ and $\mathcal{Q}_{22}$ are the fixed points of the maps $g_2\circ g_1$ and $g_1\circ g_2$ respectively. 
Our next step is to solve these fixed-point equations analytically. The key insight is that \eqref{opt:Q11fixed}--\eqref{opt:Q22fixed} are centralized model-matching problems of the form~\eqref{opt:centralized_model_match}.
In the following lemma, we fix $\mathcal{Q}_{11}$ and we find $\bmat{\mathcal{Q}_{21} & \mathcal{Q}_{22}}$ that minimizes the right-hand side of~\eqref{opt:Q11fixed}.

  \begin{lem}
    \label{lem:fixed_Q11}   
    Assume   $\mathcal{T}\in\RHinf$ is given by~\eqref{eqn:T}. Suppose
    Assumptions A\ref{ass:Afirst}--A\ref{ass:Alast} hold together with the
    structural requirement~\eqref{a:tri_form} and the parameter choice~\eqref{ass:LK}. If
    \[
    \mathcal{Q}_{11} \defeq
    \stsp{A_P}{B_P}{C_P}{0}
    \qquad\text{where $A_P$ is Hurwitz}
    \]
    then the right-hand side of \eqref{opt:Q11fixed} is minimized by
    \if\MODE2
    \begin{multline}\label{eqn:Q21_Q22_opt}
        \bmat{\mathcal{Q}_{21} & \mathcal{Q}_{22}}
        =\\
        \addtolength{\arraycolsep}{-1pt}
        \left[\begin{array}{ccc|c}
            A_{Kd}+B_2\bmat{0 \\ \bar K_1} & B_2\bmat{0 \\ \bar K_2} & B_2\bmat{C_P \\ \bar K_3} & -L_d \\
            0 & A_L & 0 & L_d-L \\
            0 & 0& A_P & B_PE_1^\tp \\ \hlinet
            \bar K_1 & \bar K_2 & \bar K_3 & 0
          \end{array}\right]
    \end{multline}
    \else
    \begin{equation}\label{eqn:Q21_Q22_opt}
        \bmat{\mathcal{Q}_{21} & \mathcal{Q}_{22}}
        =
        \left[\begin{array}{ccc|c}
            A_{Kd}+B_2\bmat{0 \\ \bar K_1} & B_2\bmat{0 \\ \bar K_2} & B_2\bmat{C_P \\ \bar K_3} & -L_d \\
            0 & A_L & 0 & L_d-L \\
            0 & 0& A_P & B_PE_1^\tp \\ \hlinet
            \bar K_1 & \bar K_2 & \bar K_3 & 0
          \end{array}\right]
    \end{equation}
    \fi
    Here the quantities $\bar K_1$, $\bar K_2$, and $\bar K_3$ are defined by
\if\MODE2
\begin{equation}\label{eqn:Kbar}
\begin{aligned}
  \bar K_1 &= \bmat{ -R_{22}^{-1} \left( B_{22}^\tp \Theta_X + S_{12}^\tp + R_{12}^\tp K_1
    \right) & 0} \\ 
  \bar K_2 &= \bmat{ -R_{22}^{-1} \left( B_{22}^\tp \Phi + S_{12}^\tp \right) & J} \\
  \bar K_3 &= -R_{22}^{-1} \left( B_{22}^\tp \Gamma_X + R_{12}^\tp C_P \right)
\end{aligned}
\end{equation}
\else
\begin{equation}\label{eqn:Kbar}
\begin{gathered}
  \bar K_1 = \bmat{ -R_{22}^{-1} \left( B_{22}^\tp \Theta_X + S_{12}^\tp + R_{12}^\tp K_1
    \right) & 0} \qquad
  \bar K_2 = \bmat{ -R_{22}^{-1} \left( B_{22}^\tp \Phi + S_{12}^\tp \right) & J}
  \\
  \bar K_3 = -R_{22}^{-1} \left( B_{22}^\tp \Gamma_X + R_{12}^\tp C_P \right)
\end{gathered}
\end{equation}
\fi
where $\Theta_X$, $\Gamma_X$, and $\Phi$ are the unique solutions
to the linear equations
\if\MODE2
\begin{equation}
  \begin{aligned}\label{EQN:phi_final}
    A_J^\tp \Theta_X + \bmat{ \Theta_X & \tilde X} A_{Kd}E_1 + E_2^\tp C_{Kd}^\tp C_{Kd} E_1 &= 0
    \\
    A_J^\tp \Gamma_X + \Gamma_X A_P \hspace{48mm}&\\
    + \bl( \bmat{ \Theta_X & \tilde X}B_2E_1 + J^\tp R_{21} +
      S_{21} \br) C_P &= 0 \\
    A_J^\tp \Phi + \Phi A_M + 
    \tilde X A_{21} -\Theta_X M C_{11} \hspace{20mm}& \\
    +  J^\tp S_{12}^\tp + Q_{21} + \Gamma_X B_P C_{11} &= 0
  \end{aligned}
\end{equation}
\else
\begin{equation}
  \begin{aligned}\label{EQN:phi_final}
    A_J^\tp \Theta_X + \bmat{ \Theta_X & \tilde X} A_{Kd}E_1 + E_2^\tp C_{Kd}^\tp C_{Kd} E_1 &= 0
    \\
    A_J^\tp \Gamma_X + \Gamma_X A_P 
    + \bl( \bmat{ \Theta_X & \tilde X}B_2E_1 + J^\tp R_{21} +
      S_{21} \br) C_P &= 0 \\
    A_J^\tp \Phi + \Phi A_M + 
    \tilde X A_{21} -\Theta_X M C_{11} 
    +  J^\tp S_{12}^\tp + Q_{21} + \Gamma_X B_P C_{11} &= 0
  \end{aligned}
\end{equation}
\fi
where $Q,R,S$ are defined in~\eqref{a:QRWV},
   $\tilde X, Y, J, L$ are defined in~\eqref{tp:ares}, and $A_{Kd}, C_{Kd}$ are defined in~\eqref{defn:short}.
\end{lem}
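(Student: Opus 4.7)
The plan is to observe that with $\mathcal{Q}_{11}$ held fixed, the block-lower-triangular constraint on $\mathcal{Q}$ places no restriction on $\tilde{\mathcal{Q}} \defeq \bmat{\mathcal{Q}_{21} & \mathcal{Q}_{22}}$ beyond $\tilde{\mathcal{Q}} \in \RHtwo$, so the inner minimization reduces to an unconstrained centralized model-matching problem. The splitting
\[
\mathcal{T}_{12} \mathcal{Q} \mathcal{T}_{21} = \mathcal{T}_{12} E_1 \mathcal{Q}_{11} E_1^\tp \mathcal{T}_{21} + (\mathcal{T}_{12} E_2)\, \tilde{\mathcal{Q}}\, \mathcal{T}_{21}
\]
rewrites the objective as $\|\tilde{\mathcal{T}}_{11} + \tilde{\mathcal{T}}_{12} \tilde{\mathcal{Q}} \tilde{\mathcal{T}}_{21}\|_2$ with $\tilde{\mathcal{T}}_{11} \defeq \mathcal{T}_{11} + \mathcal{T}_{12} E_1 \mathcal{Q}_{11} E_1^\tp \mathcal{T}_{21}$, $\tilde{\mathcal{T}}_{12} \defeq \mathcal{T}_{12} E_2$, and $\tilde{\mathcal{T}}_{21} \defeq \mathcal{T}_{21}$. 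Using $\mathcal{Q}_{11} \in \RHtwo$ one confirms $\tilde{\mathcal{T}} \in \RHinf$, and Assumptions B\ref{ass:Bfirst}--B\ref{ass:Blast} descend to $\tilde{\mathcal{T}}$; in particular $\tilde{\mathcal{T}}_{12}(\infty) = D_{12} E_2$ has full column rank by A\ref{ass:Afirst}.

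Next, I would assemble a joint state-space realization of $\tilde{\mathcal{T}}$ by appending the $n_P$ states of $\mathcal{Q}_{11}$ to the $2n$ states of the realization~\eqref{eqn:T} of $\mathcal{T}$, and apply Lemma~\ref{lem:centralized_model_matching}. This produces the optimizer in the form~\eqref{eq:Qoptcen}, built from a control gain $\bar K$ coming from a control ARE on the expanded state and a filter gain $\bar L$ from the corresponding filter ARE. The filter side is essentially unchanged: since $\tilde{\mathcal{T}}_{21} = \mathcal{T}_{21}$ and the new $\mathcal{Q}_{11}$ states are a deterministic function of the measurement $y$ (driven only by $y_1$), the filter ARE decomposes so that its nontrivial block coincides with the original filter ARE producing $Y$ and $L$ in~\eqref{tp:ares}. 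Combined with the choice $L_1 = M$ from~\eqref{ass:LK}, this yields the $A_L$ block and the feed-through $L_d - L$ in the middle block-row of~\eqref{eqn:Q21_Q22_opt}.

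The crux is the control ARE, which lives on the expanded $(2n + n_P)$-dimensional state with effective control input matrix having $B_2 E_2$ as its only nonzero block (since only $u_2$ enters $\tilde{\mathcal{T}}_{12}$). The key algebraic observation is that $B_2 E_2 = \bmat{0 \\ B_{22}}$, which, combined with the block-triangular form of $A$ in~\eqref{a:tri_form} and the diagonal form of $K_d$, makes the quadratic term of the control Riccati rank-deficient so that it contributes to only one diagonal block. The rest of the stabilizing solution $\bar X$ then satisfies linear equations. Specifically, the one ``quadratic'' block reduces to $\are(A_{22}, B_{22}, C_1 E_2, D_{12} E_2) = (\tilde X, J)$ from~\eqref{tp:ares}, and the three remaining blocks yield, after algebraic simplification, precisely the Sylvester equations~\eqref{EQN:phi_final} for $\Theta_X, \Gamma_X, \Phi$. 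Each of these Sylvester equations has a unique solution because $A_J$, $A_{11}+B_{11}K_1$, $A_P$, and $A_M$ are Hurwitz (using $K_2 = J$ and $L_1 = M$), giving disjoint spectra. Reading off $\bar K$ from Lemma~\ref{lem:centralized_model_matching} and partitioning it by block-column recovers $\bar K_1, \bar K_2, \bar K_3$ in~\eqref{eqn:Kbar}; a block-triangular similarity transformation then brings the composed realization into the form~\eqref{eqn:Q21_Q22_opt}.

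The main obstacle is the algebraic bookkeeping in this last step: tracking how the partition~\eqref{a:tri_form}, the diagonal structure of $K_d$ and $L_d$, and the alignment~\eqref{ass:LK} collectively collapse the large centralized control Riccati into a small Riccati for $\tilde X$ plus three Sylvester equations for $\Theta_X, \Gamma_X, \Phi$. This decoupling is what makes the inner problem tractable; once it is established, the remainder is routine substitution against the formula provided by Lemma~\ref{lem:centralized_model_matching}.
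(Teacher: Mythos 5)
Your proposal is correct and follows essentially the same route as the paper's proof: fix $\mathcal{Q}_{11}$, absorb it into $\mathcal{T}_{11}$ to get an unconstrained centralized model-matching problem in $\bmat{\mathcal{Q}_{21} & \mathcal{Q}_{22}}$, apply Lemma~\ref{lem:centralized_model_matching} to the augmented realization, observe that the filter ARE collapses to $\bar Y = \diag(0,Y,0)$ while the zeros in $\bar B_2 = \bmat{(B_2E_2)^\tp & 0 & 0}^\tp$ reduce the control ARE to the $\tilde X$ Riccati equation plus the sequentially solvable Sylvester equations~\eqref{EQN:phi_final}, and finally reduce the resulting realization to~\eqref{eqn:Q21_Q22_opt}. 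The only cosmetic difference is that you describe the final reduction as a similarity transformation where the paper removes uncontrollable and unobservable states, which amounts to the same computation.
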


\begin{proof}
Since $\mathcal{Q}_{11}$ is held fixed, group it with $\mathcal{T}_{11}$ to obtain
\if\MODE2
\begin{multline*}\label{harf}
  \mathcal{T}_{11} + \mathcal{T}_{12} \mathcal{Q} \mathcal{T}_{21}
  \\=  \left( \mathcal{T}_{11} + \mathcal{T}_{12} E_1\mathcal{Q}_{11} E_1^\tp
    \mathcal{T}_{21} \right) + \mathcal{T}_{12} E_2\bmat{ \mathcal{Q}_{21} &
    \mathcal{Q}_{22}} \mathcal{T}_{21}
\end{multline*}
\else
\begin{equation*}\label{harf}
  \mathcal{T}_{11} + \mathcal{T}_{12} \mathcal{Q} \mathcal{T}_{21}
  =  \left( \mathcal{T}_{11} + \mathcal{T}_{12} E_1\mathcal{Q}_{11} E_1^\tp
    \mathcal{T}_{21} \right) + \mathcal{T}_{12} E_2\bmat{ \mathcal{Q}_{21} &
    \mathcal{Q}_{22}} \mathcal{T}_{21}
\end{equation*}
\fi
This is an affine function of $\bmat{\mathcal{Q}_{21} & \mathcal{Q}_{22}}$, so the associated model-matching problem is centralized. Finding a joint realization of the blocks as in Lemma~\ref{lem:centralized_model_matching}, we obtain
  \begin{multline*}
    \addtolength{\arraycolsep}{-1pt}\bmat{\mathcal{T}_{11}\!+\!\mathcal{T}_{12}E_1\mathcal{Q}_{11}E_1^\tp
      \mathcal{T}_{21} & \mathcal{T}_{12}E_2 \\ \mathcal{T}_{21} & 0 }
    = \addtolength{\arraycolsep}{-1pt}\left[\begin{array}{c|cc} \bar A & \bar B_1 & \bar B_2 \\
            \hlinet
            \bar C_1 & 0 & \bar D_{12} \\
            \bar C_2 & \bar D_{21} & 0
          \end{array}\right] \\
    = \addtolength{\arraycolsep}{-1pt}\left[\begin{array}{ccc|cc}
        A_{Kd} & -L_dC_2 & B_2E_1C_P & -L_d D_{21} & B_2E_2 \\
        0 & A_{Ld} & 0 & B_{Ld} & 0 \\
        0 & B_PE_1^\tp C_2 & A_P & B_PE_1^\tp D_{21} & 0 \\ \hlinet
        C_{Kd} & C_1 & D_{12}E_1 C_P & 0 & D_{12}E_2 \\
        0 & C_2 & 0 & D_{21} & 0
      \end{array}\right] 
  \end{multline*}
  It is straightforward to check that
  Assumptions~B\ref{ass:Bfirst}--B\ref{ass:Blast} are satisfied for
  this augmented system.  Now, we may apply
  Lemma~\ref{lem:centralized_model_matching}. The result is that
  \begin{equation}
    \label{e:Q21Q22_first}
    \bmat{\mathcal{Q}_{21} & \mathcal{Q}_{22}}_\textup{opt} =
    \left[\begin{array}{cc|c}
        \bar A+\bar B_2\bar K & \bar B_2\bar K & 0 \\
        0 & \bar A+\bar L\bar C_2 & -\bar L \\ \hlinet
        \bar K  & \bar K & 0
      \end{array}\right]
  \end{equation}
  where we defined $(\bar X,\bar K) \defeq \are(\bar A,\bar B_2,\bar C_1,\bar D_{12})$
  and $(\bar Y,\bar L^\tp) \defeq \are(\bar A^\tp,\bar C_2^\tp,\bar
  B_1^\tp,\bar D_{21}^\tp)$. One can check that the stabilizing solution to the latter ARE is
  \[
  \bar Y = \bmat{0 & 0 & 0 \\ 0 & Y & 0 \\0 & 0 & 0 }
  \qquad\text{and}\qquad \bar L = \bmat{L_d\\ L-L_d \\ -B_PE_1^\tp}
  \]
  The former ARE is more complicated, however. Examining
  $\bar K = -\bar R^{-1}(\bar B^\tp \bar X + \bar S^\tp)$, we notice
  that due to all the zeros in $\bar B$, the only part of $\bar X$
  that affects the gain $\bar K$ is the second sub-row of the first
  block-row. In other words, if
  \[ 
  \bar X \defeq \bmat{ \bar X_{11} & \bar X_{12} & \bar X_{13} \\ \bar
    X_{21} & \bar X_{22} & \bar X_{23} \\ \bar X_{31} & \bar X_{32} &
    \bar X_{33} }
  \]
  then $\bar K$ only depends on $E_2^\tp \bar X_{11}$, $E_2^\tp \bar
  X_{12}$, and $E_2^\tp \bar X_{13}$. Multiplying the ARE for $\bar X$ on the left by $\bmat{ E_2^\tp & 0 & 0}$ and on the right by $\bmat{E_2^\tp & 0 & 0}^\tp$, we obtain the equation
\begin{equation*}
    A_J^\tp \tilde X + \tilde X A_J \\
    + (C_1E_2 + D_{12}E_2 J)^\tp(C_1E_2 + D_{12}E_2 J)
     = 0
    \end{equation*}
    where $\tilde X \defeq E_2^\tp \bar X_{11} E_2$. It is straightforward to see that $\tilde X$ as defined in~\eqref{tp:ares} satisfies this equation. Substituting this back into the ARE for $\bar X$ and multiplying on the left by $\bmat{ E_2^\tp & 0 & 0}$, we obtain
  \begin{multline}
    \label{_ugly1}
    A_J^\tp \addtolength{\arraycolsep}{-2pt}\bmat{ E_2^\tp \bar X_{11} & E_2^\tp \bar X_{12} &
      E_2^\tp \bar X_{13} } \\
      + \addtolength{\arraycolsep}{-2pt}\bmat{ E_2^\tp \bar X_{11} & E_2^\tp
      \bar X_{12} & E_2^\tp \bar X_{13} }
    \bmat{A_{Kd} & -L_d C_2 & B_2E_1C_P \\
      0 & A_{Ld} & 0 \\
      0 & B_PE_1^\tp C_2 & A_P} \\ 
     + (C_1E_2 + D_{12} E_2 J)^\tp \bmat{C_{Kd} & C_1 & D_{12}E_1 C_P} = 0
  \end{multline}
  Right-multiplying \eqref{_ugly1} by $\bmat{0 & E_2^\tp & 0}^\tp$, we conclude that $E_2^\tp \bar X_{12} E_2 = \tilde X$. Notice that~\eqref{_ugly1} is linear in the $\bar X$ terms. Assign the following names to the missing pieces
\begin{align*}
  E_2^\tp \bar X_{11} &\defeq \bmat{ \Theta_X & \tilde X } &
  E_2^\tp \bar X_{12} &\defeq \bmat{ \Phi & \tilde X } &
  E_2^\tp \bar X_{13} &\defeq \Gamma_X
\end{align*}
Upon substituting these definitions into \eqref{_ugly1} and 
simplifying, we obtain \eqref{EQN:phi_final}.  A similar substitution into the definition of $ \bar K = \bmat{\bar K_1 & \bar K_2 & \bar
  K_3} $ leads to the formulae~\eqref{eqn:Kbar}.
  
The equations \eqref{EQN:phi_final} have a unique solution. To see
why, note that they may be sequentially solved: for $\Omega_X$, then
for $\Gamma_X$, and finally for $\Phi$. Each is a Sylvester equation
of the form $A_1 \Omega + \Omega A_2 + A_0 = 0$, where $A_1$ and $A_2$
are Hurwitz, so the solution is unique. Furthermore, $\bar A + \bar
B \bar K$ is easily verified to be Hurwitz, so the procedure outlined
above produces the correct stabilizing $\bar X$.  Now, substitute
$\bar K$ and $\bar L$ into~\eqref{e:Q21Q22_first}. The result is a
very large state-space realization, but it can be greatly reduced by
eliminating uncontrollable and unobservable states. The result
is~\eqref{eqn:Q21_Q22_opt}. This reduction is not surprising, because
we solved a model-matching problem in which the joint realization for
the three blocks had a zero as the fourth block.
\end{proof}

We may solve \eqref{opt:Q22fixed} in a manner analogous to how we
solved~$\eqref{opt:Q11fixed}$. Namely, we can provide a formula for
the optimal~$\mathcal{Q}_{11}$ and $\mathcal{Q}_{21}$ as functions of
$\mathcal{Q}_{22}$. The result follows directly from
Lemma~\ref{lem:fixed_Q11} after we make a change of variables.

  \begin{lem}
    \label{lem:fixed_Q22}
    Assume   $\mathcal{T}\in\RHinf$ is given by~\eqref{eqn:T}.   Suppose
      Assumptions A\ref{ass:Afirst}--A\ref{ass:Alast} hold together with the
      structural requirement~\eqref{a:tri_form} and the parameter choice~\eqref{ass:LK}. If
    \[
    \mathcal{Q}_{22} \defeq \stsp{A_Q}{B_Q}{C_Q}{0}
    \qquad\text{where $A_Q$ is Hurwitz}
    \]
    then the right-hand side of \eqref{opt:Q22fixed} is minimized by
    \begin{equation}
      \label{eqn:Q11_Q21_opt}
      \if\MODE2\addtolength{\arraycolsep}{-2pt}\fi
      \begin{aligned}
        \bmat{\mathcal{Q}_{11} \\ \mathcal{Q}_{21}} \!=\!
        \left[\begin{array}{ccc|c}
            A_{Ld}+\bmat{\bar L_1 & 0}C_2 & 0 & 0 & \bar L_1 \\
            \bmat{\bar L_2 & 0} C_2 & A_K & 0 & \bar L_2 \\
            \bmat{ \bar L_3 & B_Q }C_2 & 0 & A_Q & \bar L_3 \\
            \hlinet -K_d & K_d-K & E_2 C_Q & 0
          \end{array}\right]
      \end{aligned}
    \end{equation}
    The quantities $\bar L_1$, $\bar L_2$, and $\bar L_3$ are defined by
\if\MODE2
    \begin{align*}
      \bar L_1 &= \bmat{ 0 \\ -(\Theta_YC_{11}^\tp+U_{12}^\tp)V_{11}^{-1}} \\
      \bar L_2 &= \bmat{ M \\ -(\Psi C_{11}^\tp+U_{12}^\tp)V_{11}^{-1}} \\
      \bar L_3 &= -\left( \Gamma_YC_{11}^\tp + B_QV_{21} \right)V_{11}^{-1}
    \end{align*}
\else
	\[
    \begin{gathered}
      \bar L_1 = \bmat{ 0 \\ -(\Theta_YC_{11}^\tp+U_{12}^\tp+L_2V_{12}^\tp)V_{11}^{-1}}
      \qquad
      \bar L_2 = \bmat{ M \\ -(\Psi C_{11}^\tp+U_{12}^\tp)V_{11}^{-1}} \\
      \bar L_3 = -\left( \Gamma_YC_{11}^\tp + B_QV_{21} \right)V_{11}^{-1}
    \end{gathered}
    \]
\fi
    where $\Theta_Y$, $\Gamma_Y$, and $\Psi$ are the unique solutions
    to the linear equations
\if\MODE2
    \begin{equation}
      \begin{aligned}\label{EQN:psi_final}
        E_2^\tp A_{Ld}\bmat{\tilde Y \\ \Theta_Y}
        + \Theta_YA_M^\tp
        + E_2^\tp B_{Ld}B_{Ld}^\tp E_1  &= 0 \\
        A_Q \Gamma_Y + \Gamma_Y A_M^\tp \hspace{46mm}&\\
        + B_Q\bbbl(E_2^\tp C_2 \bmat{\tilde Y \\ \Theta_Y} +
        V_{21}M^\tp + U_{21}\bbbr) &= 0
        \\
        A_J\Psi + \Psi A_M^\tp +  A_{21}\tilde Y - B_{22}J \Theta_Y \hspace{19mm}&\\
        + U_{12}^\tp M^\tp + W_{21} + B_{22}C_Q\Gamma_Y  &= 0
      \end{aligned}
    \end{equation}
\else
    \begin{equation}
      \begin{aligned}\label{EQN:psi_final}
        E_2^\tp A_{Ld}\bmat{\tilde Y \\ \Theta_Y}
        + \Theta_YA_M^\tp
        + E_2^\tp B_{Ld}B_{Ld}^\tp E_1  &= 0 \\
        A_Q \Gamma_Y + \Gamma_Y A_M^\tp 
        + B_Q\bbbl(E_2^\tp C_2 \bmat{\tilde Y \\ \Theta_Y} +
        V_{21}M^\tp + U_{21}\bbbr) &= 0
        \\
        A_J\Psi + \Psi A_M^\tp +  A_{21}\tilde Y - B_{22}J \Theta_Y
        + U_{12}^\tp M^\tp + W_{21} + B_{22}C_Q\Gamma_Y  &= 0
      \end{aligned}
    \end{equation}
\fi
    where $W, V, U$ are defined in~\eqref{a:QRWV},
   $X, \tilde Y, K, M$ are defined in~\eqref{tp:ares}, and $A_{Ld}, B_{Ld}$ are defined in~\eqref{defn:short}.
  \end{lem}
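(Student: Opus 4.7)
My plan is to derive Lemma~\ref{lem:fixed_Q22} from Lemma~\ref{lem:fixed_Q11} by a duality argument, exactly as hinted in the paragraph preceding the statement. The key observation is that the $\mathcal{H}_2$-norm is invariant under conjugate transposition, so
\[
\normm{\mathcal{T}_{11}+\mathcal{T}_{12}\mathcal{Q}\mathcal{T}_{21}}_2 = \normm{\mathcal{T}_{11}^{*}+\mathcal{T}_{21}^{*}\mathcal{Q}^{*}\mathcal{T}_{12}^{*}}_2.
\]
If I further conjugate by the block-permutation $P = \bmat{0 & I \\ I & 0}$ that exchanges Player~1 and Player~2, then $P\mathcal{Q}^{*}P$ is again block-lower-triangular, with diagonal blocks $\mathcal{Q}_{22}^{*}$ and $\mathcal{Q}_{11}^{*}$. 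Thus, fixing $\mathcal{Q}_{22}$ in the original problem corresponds to fixing the $(1,1)$ block in the transposed-and-permuted problem, which is precisely the setting of Lemma~\ref{lem:fixed_Q11}.

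Next, I would compute the induced transformations on all parameters. The map $(A,B_2,C_2,B_1,C_1,D_{12},D_{21}) \mapsto (A^{\tp},C_2^{\tp},B_2^{\tp},C_1^{\tp},B_1^{\tp},D_{21}^{\tp},D_{12}^{\tp})$, followed by $P$-conjugation that swaps blocks $11$ and $22$, preserves the triangular structure~\eqref{a:tri_form} and the parameter choice~\eqref{ass:LK} up to symbol relabeling. Under this transformation, the data of~\eqref{a:QRWV} transforms as $Q \leftrightarrow W$, $R \leftrightarrow V$, $S \leftrightarrow U^{\tp}$; the Riccati solutions in~\eqref{tp:ares} as $X \leftrightarrow Y$, $\tilde X \leftrightarrow \tilde Y$, $K \leftrightarrow L^{\tp}$, $J \leftrightarrow M^{\tp}$; and the auxiliary matrices in~\eqref{defn:short}, \eqref{a:ahat} interchange similarly. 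The fixed block $\mathcal{Q}_{22} = \stsp{A_Q}{B_Q}{C_Q}{0}$ maps to a fixed $(1,1)$ block with realization $\stsp{A_Q^{\tp}}{C_Q^{\tp}}{B_Q^{\tp}}{0}$ playing the role of $\mathcal{Q}_{11}$ in Lemma~\ref{lem:fixed_Q11}.

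With these correspondences in hand, I would apply Lemma~\ref{lem:fixed_Q11} to the dualized data. The resulting formula~\eqref{eqn:Q21_Q22_opt} yields an expression for the optimal $\bmat{\mathcal{Q}_{21} & \mathcal{Q}_{22}}$ of the dual problem. Conjugate-transposing this realization and applying $P$ on the left and right produces~\eqref{eqn:Q11_Q21_opt} in the original coordinates. The Sylvester equations~\eqref{EQN:phi_final} transform into~\eqref{EQN:psi_final} under the identifications $\Theta_X \leftrightarrow \Theta_Y^{\tp}$, $\Gamma_X \leftrightarrow \Gamma_Y^{\tp}$, $\Phi \leftrightarrow \Psi^{\tp}$, and the gain formulae~\eqref{eqn:Kbar} transpose into the claimed expressions for $\bar L_1,\bar L_2,\bar L_3$. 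Uniqueness of $(\Theta_Y,\Gamma_Y,\Psi)$ follows, as in the proof of Lemma~\ref{lem:fixed_Q11}, by solving the three Sylvester equations in sequence and invoking the Hurwitz property of $A_M$, $A_Q$, and $A_J$.

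The main obstacle is bookkeeping: one has to verify that Assumptions~A\ref{ass:Afirst}--A\ref{ass:Alast}, the structural requirement~\eqref{a:tri_form}, and the parameter choice~\eqref{ass:LK} are each invariant under the transpose-plus-$P$ transformation so that the hypotheses of Lemma~\ref{lem:fixed_Q11} really are met by the dualized system, and then to confirm that every dualized Sylvester equation, gain, and state-space block matches the stated expressions after undoing the transformation. Once this bookkeeping is done, no new optimization calculation is required.
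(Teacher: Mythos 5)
Your proposal is correct and is essentially the paper's own argument: the paper's proof of Lemma~\ref{lem:fixed_Q22} consists precisely of the remark that the result ``follows directly from Lemma~\ref{lem:fixed_Q11} after we make a change of variables,'' and you have simply made that change of variables (transposition composed with the block permutation swapping the players) explicit, consistent with the duality map $(A,B_2,C_2)\mapsto(A^\ddag,C_2^\ddag,B_2^\ddag)$ described in the paper's symmetry discussion. One small notational caution: the duality should be implemented with the pointwise transpose $\mathcal{Q}(s)^\tp$ (realization $(A^\tp,C^\tp,B^\tp,D^\tp)$), which preserves membership in $\Lower(\RHtwo)$ up to the permutation, rather than the adjoint $\mathcal{Q}^*$ appearing in your displayed identity, which would send $\RHtwo$ into $\Htwo^\perp$ and change the constraint set --- but since the state-space map you actually write down is the plain transpose, this is a fix of notation, not of substance.
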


A key observation that greatly simplifies the extent to which the
optimization problems \eqref{opt:Q11fixed} and \eqref{opt:Q22fixed}
are coupled is that the $\mathcal{Q}_{ii}$ have simple state-space
representations.  In fact, we have the strong conclusion that for
any fixed~$\mathcal{Q}_{11}$, the optimal $\mathcal{Q}_{22} =
g_1(\mathcal{Q}_{11})$ has a fixed state dimension no greater than the
dimension of the plant, as in the following result.

\begin{thm}
\label{thm:simple_coupling}
Suppose $\mathcal{T}\in\RHinf$ is given by~\eqref{eqn:T} and Assumptions A\ref{ass:Afirst}--A\ref{ass:Alast} hold together with the structural requirement~\eqref{a:tri_form} and the parameter choice~\eqref{ass:LK}. The functions $g_i$ defined in~\eqref{opt:Q11fixed}--\eqref{opt:Q22fixed} are given by
\if\MODE2
    \begin{equation}\label{EQN:optQii}
    \begin{aligned}
      g_1(\mathcal{Q}_{11}) &=
      \stsp{A_L}{(L_d-L)E_2}{\bar K_2}{0}
      \\
      g_2(\mathcal{Q}_{22}) &=
      \stsp{A_K}{\bar L_2}{E_1^\tp(K_d-K)}{0}
    \end{aligned}
    \end{equation}
\else
    \begin{equation}\label{EQN:optQii}
    \begin{aligned}
      g_1(\mathcal{Q}_{11}) =
      \stsp{A_L}{(L_d-L)E_2}{\bar K_2}{0}
      \qquad
      g_2(\mathcal{Q}_{22}) =
      \stsp{A_K}{\bar L_2}{E_1^\tp(K_d-K)}{0}
    \end{aligned}
    \end{equation}
\fi
where $\bar K_2$, $\bar L_2$ are defined in Lemmas~\ref{lem:fixed_Q11} and~\ref{lem:fixed_Q22} respectively.
\end{thm}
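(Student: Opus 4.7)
The plan is to derive the reduced realization of $g_1(\mathcal{Q}_{11})$ by performing explicit state reductions on the realization~\eqref{eqn:Q21_Q22_opt} of Lemma~\ref{lem:fixed_Q11}; the formula for $g_2(\mathcal{Q}_{22})$ then follows by the analogous argument applied to~\eqref{eqn:Q11_Q21_opt} of Lemma~\ref{lem:fixed_Q22}.

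First I would extract $\mathcal{Q}_{22}$ from $\bmat{\mathcal{Q}_{21} & \mathcal{Q}_{22}}$ by post-multiplying the input matrix in~\eqref{eqn:Q21_Q22_opt} by $E_2$, which selects the $y_2$ channel. Since $E_1^\tp E_2 = 0$, the input to the third state block vanishes ($B_P E_1^\tp E_2 = 0$), so that block---which has the same state dimension as $\mathcal{Q}_{11}$---is uncontrollable and can be removed without changing the transfer function. The resulting realization has order $2(n_1+n_2)$ with block-upper-triangular dynamics, diagonal blocks $A_{Kd}+B_2\bmat{0\\ \bar K_1}$ and $A_L$, input matrix $\bmat{-L_dE_2 \\ (L_d-L)E_2}$, and output matrix $\bmat{\bar K_1 & \bar K_2}$.

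The next reduction relies on zero-structure bookkeeping. The block-diagonal structure of $L_d$ and the block-lower structure of $B_2$ together imply that both the external excitation $-L_dE_2$ of the first state block and the cross-coupling $B_2\bmat{0\\ \bar K_2}$ from the second state block lie entirely in the second-player coordinate subspace, namely $\{0\}\times\R^{n_2}$. Moreover, the dynamics matrix $A_{Kd}+B_2\bmat{0\\ \bar K_1}$ is itself block-lower-triangular, as follows from the block-lower structure of $(A, B_2)$, the block-diagonal structure of $K_d$, and the zero in $\bar K_1 = \bmat{\bar K_1^a & 0}$ from~\eqref{eqn:Kbar} (writing $\bar K_1^a$ for its nonzero first block-column). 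Hence the state trajectory of the first state block lies identically in $\{0\}\times\R^{n_2}$. Finally, the output map $\bar K_1$ annihilates this subspace since its second block-column is zero, so the first state block contributes nothing to the $u_2$ output and may also be deleted. What remains is precisely $\stsp{A_L}{(L_d-L)E_2}{\bar K_2}{0}$.

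The main obstacle is really just organizing the bookkeeping correctly and tracking the various zero sub-blocks induced by the structural assumptions. The crucial role is played by the zero block on the right of $\bar K_1$ in~\eqref{eqn:Kbar}: it is precisely what makes the first state block silent in the $u_2$ output, rather than requiring delicate Sylvester-equation manipulations to decouple the two upper-triangular blocks. The dual argument for $g_2$ mirrors this structure: one uses $E_2^\tp E_1 = 0$ to eliminate the $\mathcal{Q}_{22}$-dependent block from~\eqref{eqn:Q11_Q21_opt}, then invokes the symmetric zero structure of $\bar L_1$ in Lemma~\ref{lem:fixed_Q22} to annihilate the remaining extraneous state block.
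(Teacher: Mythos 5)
Your proposal is correct and matches the paper's argument, which likewise obtains \eqref{EQN:optQii} by extracting $\mathcal{Q}_{22}$ (resp.\ $\mathcal{Q}_{11}$) from the realizations of Lemmas~\ref{lem:fixed_Q11} and~\ref{lem:fixed_Q22} and deleting the uncontrollable/unobservable state blocks; the paper merely calls these steps ``simple state-space manipulations'' while you spell out the invariant-subspace bookkeeping (the zero block of $\bar K_1$, the block-lower-triangularity of $A_{Kd}+B_2\bmat{0\\\bar K_1}$, and the dual facts for $g_2$) that makes the deletions legitimate. The only point worth adding is the paper's remark that $g_1$ and $g_2$ depend on their arguments only through the realization-independent Markov parameters, which confirms the formulas are well defined as functions of the transfer functions rather than of the chosen realizations.
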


\begin{proof}
This follows directly from Lemmas~\ref{lem:fixed_Q11}
and~\ref{lem:fixed_Q22} and some simple state-space manipulations.
Note that $g_2(\mathcal{Q}_{22})$ depends on $\mathcal{Q}_{22}$ only
through the realization-independent quantities $C_QA_Q^k B_Q$ for $k\geq 0$,
and similarly for $g_1$.
\end{proof}

\begin{proofe}{Proof of Theorem~\ref{thm:main}, Part (iii).}
  Applying the fixed-point results of
 Lemma~\ref{lem:coupled_problems}, there must exist matrices
 $A_P$, $B_P$, $C_P$, $A_Q$, $B_Q$, $C_Q$ such that
\begin{align}
\stsp{A_P}{B_P}{C_P}{0} &= \stsp{A_K}{\bar L_2}{E_1^\tp(K_d-K)}{0} \label{ff1}\\
\stsp{A_Q}{B_Q}{C_Q}{0} &= \stsp{A_L}{(L_d-L)E_2}{\bar K_2}{0} \label{ff2}
\end{align}
where \eqref{EQN:phi_final} and \eqref{EQN:psi_final} are
satisfied. Given $A_Q,B_Q,C_Q$ we define $\Psi$ according
to~\eqref{EQN:psi_final}, and given $A_P,B_P,C_P$ we define $\Phi$
according to~\eqref{EQN:phi_final}.  We will show that these $\Phi$
and $\Psi$ satisfy the Sylvester
equations~\eqref{tp:phi}--\eqref{tp:psi}.

For convenience, define $\mathcal{S} \defeq (
A_P,B_P,C_P,A_Q,B_Q,C_Q).$ Our goal is to use~\eqref{ff1}--\eqref{ff2}
to eliminate the matrices in $\mathcal{S}$ from \eqref{EQN:phi_final}
and \eqref{EQN:psi_final}. We begin with~\eqref{ff1}.

Note that we cannot simply set the corresponding state-space
parameters in~\eqref{ff1} equal to one another. This approach is
erroneous because transfer function equality does not in general imply
that the state-space matrices are also equal. For example, if $E_1^\tp
(K_d-K) = 0$, then we can set $C_P=0$ and any choice of $B_P$ satisfies~\eqref{ff1}. Equality of transfer functions does however imply equality of the Markov parameters. Namely,
\begin{equation}\label{markov_params}
C_P A_P^k B_P = E_1^\tp (K_d - K) A_K^k \bar L_2
\quad \text{for }k=0,1,\dots
\end{equation}
Now consider~\eqref{EQN:phi_final}. Note that $\Theta_X$ does not depend on $\mathcal{S}$. Furthermore, the equation for $\Gamma_X$ is a Sylvester equation of the form $A_J^\tp \Gamma_X + \Gamma_X A_P + \Omega C_P = 0$, where
\[
\Omega \defeq \bmat{ \Theta_X & \tilde X}B_2E_1 + J^\tp R_{21} + S_{21}
\]
and $\Omega$ is independent of $\mathcal{S}$.  Since $A_J$ and $A_P$
are Hurwitz by construction, the unique $\Gamma_X$ is given by the integral
\[
\Gamma_X = \int_0^\infty \exp(A_J^\tp t)\, \Omega\,  C_P \exp(A_Pt) \,\mathrm{d}t
\]
Substitute the Markov parameters~\eqref{markov_params}, and conclude that
\[
\Gamma_X B_P = \int_0^\infty \exp(A_J^\tp t)\, \Omega\,  E_1^\tp(K_d-K)\exp(A_K t) \bar L_2 \,\mathrm{d}t
\]
The equation above is of the form $\Gamma_X B_P = \hat{\Gamma}_X \bar
L_2$, where $\hat{\Gamma}_X$ satisfies
\begin{equation}
        A_J^\tp \hat{\Gamma}_X + \hat{\Gamma}_X A_K +
        \Omega E_1^\tp (K_d-K) = 0 \label{hh1}
\end{equation}
One can verify by direct substitution that~\eqref{hh1} is solved by
\begin{equation}\label{gamx}
\hat{\Gamma}_X = \bmat{ \Theta_X - X_{21} & \tilde X - X_{22} }
\end{equation}
Noting that the term $\Gamma_X B_P$ appears explicitly in the
$\Phi$-equation of~\eqref{EQN:phi_final}, we may replace the $\Gamma_X B_P$ term by $\hat{\Gamma}_X \bar L_2$, and substitute the expressions for
$\hat{\Gamma}_X$ and $\bar L_2$ from~\eqref{gamx} and
Lemma~\ref{lem:fixed_Q11}, respectively. Doing so, we find that
$\Theta_X$ cancels. The result is an equation involving only~$\Phi$
and~$\Psi$, and it turns out to be~\eqref{tp:phi}.

Repeating a similar procedure as above by instead
  examining~\eqref{ff2} and~\eqref{EQN:psi_final}, we find that
  $C_Q\Gamma_Y = \bar K_2 \hat{\Gamma}_Y$ where $\hat{\Gamma}_Y$ is
  given by
  \[
  \hat{\Gamma}_Y = \bmat{\tilde Y-Y_{11} \\ \Theta_Y - Y_{21}}
  \]
  The result is a different equation involving only~$\Phi$
  and $\Psi$. This time, it turns out to be~\eqref{tp:psi}.

We have therefore shown that existence of a solution to the
fixed-point equations of Lemma~\ref{lem:coupled_problems} implies the
existence of a solution to the Sylvester
equations \eqref{tp:phi}--\eqref{tp:psi}. This completes the
proof.
\end{proofe}


\section{Summary}

\label{sec:conclusion}

In this article, we studied the class of two-player output-feedback
problems with a nested information pattern.

We began by giving necessary and sufficient state-space conditions for the existence of a structured stabilizing controller. This led to a Youla-like parameterization of all such controllers and a convexification of the two-player problem.

The main result of this paper is explicit state-space formulae for
the optimal $\Htwo$ controller for the two-player output-feedback
problem.  In the centralized case, it is a celebrated and
widely-generalized result that the controller is a composition of an
optimal state-feedback gain with a Kalman filter estimator.  Our
approach generalizes both the centralized formulae and this separation
structure to the two-player decentralized case.  We show that the
$\Htwo$-optimal structured controller has generically twice the state dimension of the plant, and we give intuitive interpretations for the states of the controller as steady-state Kalman filters. The player with more
information must duplicate the estimator of the player with less
information. This has the simple anthropomorphic interpretation that
Player~2 is correcting mistakes made by Player~1.

Both the state-space dimension and separation structure of the optimal
controller were previously unknown. While these results show that the
optimal controller for this problem has an extremely simple
state-space structure, not all such decentralized problems exhibit
such pleasant behavior. One example is the two-player
fully-decentralized state-feedback problem, where even though the
optimal linear controller is known to be rational, it is shown
in~\cite{lessard2012fully} that the number of states of the
controller may grow quadratically with the state-dimension of the plant.

The formulae that we give for the optimal controller are simply
computable, requiring the solution of four standard AREs, two that have the same dimension as the plant and two
with a smaller dimension. In addition, one must solve a linear
matrix equation.  All of these computations are simple and have
readily available existing code.

While there is as yet no complete state-space theory for decentralized
control, in this work we provide solutions to a prototypical class of
problems which exemplify many of the features found in more general
problems.  It remains a fundamental and important problem to fully
understand the separation structure of optimal decentralized
controllers in the general case. While we solve the two-player
triangular case, we hope that the solution gives some insight and
possible hints regarding the currently unknown structure of the
optimal controllers for more general architectures.


\section*{Acknowledgments}

The first author would like to thank B.~Bernhardsson
for very helpful discussions.

\bibliographystyle{abbrv}
\bibliography{twoplayer}

\if\MODE2
\begin{IEEEbiography}[{\includegraphics[width=1in,height=1.25in,clip,keepaspectratio]{LL_500}}]{Laurent Lessard} was born and raised in Toronto, Canada. He received the BASc in Engineering Science at the University of Toronto and the M.S. and Ph.D. degrees in Aeronautics and Astronautics from Stanford University.
He is currently a postdoctoral scholar in the Berkeley Center for Control and Identification at the University of California, Berkeley. Before that, he was a LCCC postdoc in the Department of Automatic Control at Lund University. His research interests include decentralized control, robust control, and large-scale optimization.
Dr. Lessard received the O.~Hugo Schuck Best Paper Award at the American Control Conference in 2013.
\end{IEEEbiography}

\begin{IEEEbiography}[{\includegraphics[width=1in,height=1.25in,clip,keepaspectratio]{lall_cropped}}]{Sanjay Lall}
is Professor in the departments of Electrical Engineering and
Aeronautics and Astronautics at Stanford University.  Previously he
was a Research Fellow at the California Institute of Technology in the
Department of Control and Dynamical Systems, and prior to that he was
NATO Research Fellow at Massachusetts Institute of Technology, in the
Laboratory for Information and Decision Systems. He was also a
visiting scholar at Lund Institute of Technology in the Department of
Automatic Control.  He received the Ph.D. in Engineering and B.A. in
Mathematics from the University of Cambridge, England. Professor
Lall's research focuses on the development of advanced engineering
methodologies for the design of control systems, and his work
addresses problems including decentralized control and model
reduction.  Professor Lall received the O. Hugo Schuck Best Paper
Award at the American Control Conference in 2013, the George S. Axelby
Outstanding Paper Award by the IEEE Control Systems Society in 2007,
the NSF Career award in 2007, Presidential Early Career Award for
Scientists and Engineers (PECASE) in 2007, and the Graduate Service
Recognition Award from Stanford University in 2005. With his students,
he received the best student paper award at the IEEE Conference on
Decision and Control in 2005 and the best student paper award at the
IEEE International Conference on Power Systems Technology
in 2012.
\end{IEEEbiography}

\vfill
\fi


\end{document}